\begin{document}

\title{A Model System of Mixed Ionized  Gas Dynamics}
%Insert your title here%\thanks{Grants or other notes
%about the article that should go on the front page should be
%placed here. General acknowledgments should be placed at the end of the article.}
%\subtitle{Do you have a subtitle?\\ If so, write it here}

%\titlerunning{Mixed Ionized Gas}        % if too long for running head

\author{ASAKURA, Fumioki}
%First Author         \and       Second Author %etc.
%}

%\authorrunning{Short form of author list} % if too long for running head

\institute{Professor Emeritus, Osaka Electro-Communication Univ., \at
              18-8 Hatsucho, Neyagawa, Osaka 572, JAPAN \\
              Tel.: +72-824-1131\\
              Fax: +72-824-0014\\
              \email{asakura@osakac.ac.jp}
}%  \\
%             \emph{Present address:} of F. Author  %  if needed
%           \and
%           S. Author \at
%              second address
%}\UTF{0192}%

\date{Received: date / Accepted: date}
% The correct dates will be entered by the editor

\maketitle

\begin{abstract}
The aim of this paper is to study a one dimensional model system of equations for ionized gas dynamics at high temperature where the gas is a mixture of two kinds of monatomic gas. In addition to the mass density, pressure, temperature and particle velocity, degrees of ionization of  both gases are also involved. By assuming that the local thermal equilibrium is attained, Saha's ionization equations are added. Thus the equations are supplemented by  the first and second law of thermodynamics, a single equation of state and, in addition, a set of  thermodynamic equations. 
\par
The equations constitute a strictly hyperbolic system, which guarantees that the initial value problem is well-posed locally in time for sufficiently smooth initial data. However the geometric properties of the system are rather complicated: in particular, we prove the existence of a region where convexity (genuine nonlinearity) fails for forward and backward characteristic fields. Also we study thermodynamic properties of shock waves by a detailed analysis of the Hugoniot locus, which is used in a mathematical study of existence and uniqueness of solutions to the shock tube problem.
\medskip
\par
\noindent
\textit{2010~Mathematics Subject Classification:} 35L65, 35L67, 76N15
%Insert your abstract here. Include keywords, PACS and mathematical
%subject classification numbers as needed.
\keywords{Systems of conservation laws \and ionized gas \and Hugoniot locus}
% \PACS{PACS code1 \and PACS code2 \and more}
% \subclass{MSC code1 \and MSC code2 \and more}
\end{abstract}
%%%%%%%%%%%%%%%%%%%%%%%%%%%%%%%%%%%%%%%%%%%%%%%%%%%%%%%%%%%%%
\section{Introduction}\label{sec:Introduction}
%%%%%%%%%%%%%%%%%%%%%%%%%%%%%%%%%%%%%%%%%%%%%%%%%%%%%%%%%%%%
A shock wave is a propagating discontinuity of density, pressure, temperature and etc., which is supersonic with respect to the gaseous medium ahead of it and subsonic with respect to that behind it. Behind a shock wave, not only pressure but also temperature increases abruptly and the gas is heated to high temperatures. Strong shock waves are obtained and replicated by shock tube operations under ordinary circumstances. Hence the shock tube is a convenient and widely used device for obtaining high temperature gases in the laboratory. The shock front and the state behind it in the shock tube are determined by the state ahead of it and the speed of driving gas, which is a mathematical problem called the {\it shock tube problem\/} in this paper.
\par
    When the gas behind the shock front is heated to a high temperature, almost all molecules become dissociated and finally  its atoms become partially ionized: ${\rm X} \rightleftarrows {\rm X}^+ + {\rm e}^-.$ 
Numerous spectroscopic measurements of atomic parameters and thermodynamic equilibrium of plasma  thus generated  have been done, for example, in various Helium-Hydrogen mixtures (\cite{Fukuda_2},\cite{Fukuda_3}).
%K. Fukuda and his colleagues had made numerous spectroscopic measurements of atomic parameters and thermodynamic equilibrium of plasma  thus generated  in various mixtures of Helium and Hydrogen (\cite{Fukuda_2},\cite{Fukuda_3}).
%\par
The model system of mixed ionized gas dynamics that we discuss in this paper is proposed by Fukuda-Okasaka-Fujimoto in \cite{Fukuda-Okasaka-Fujimoto}\footnote{An English translation of \cite{Fukuda-Okasaka-Fujimoto} is available upon request to F. Asakura.} for the purpose of providing a theoretical basis for their observations. The system consists of equations of  macroscopic motion for 1-d mixed gas dynamics. Its particular nature is: degree of ionization of each gas is considered to be a thermodynamic variable.
\par
The present paper is a continuation of \cite{Asakura-Corli_ionized}, \cite{Asakura-Corli_reflected}, \cite{Asakura-Corli_RIMS} and our aim is to perform mathematical analysis for the model system and show its basic thermodynamic properties. To the best of our knowledge such a study has never been done previously while the system of gas dynamics attracted the interest of several researchers in the last decade, however mostly for ideal gases \cite{Smoller}; we quote \cite{Menikoff-Plohr} for the case of real gases. 
For a single monatomic ionized gas, studies have been done in  \cite{Asakura-Corli_ionized}, \cite{Asakura-Corli_reflected}, \cite{Asakura-Corli_RIMS}. 
\par
Basic thermodynamic variables are denoted in this paper by $T:$ temperature, $p:$ pressure, $\rho:$ mass density,  $v= 1/\rho:$ specific volume, $e:$ specific internal energy and $S:$ specific entropy. The flow velocity is denoted by $u$ and the  (specific) total energy by $\mathcal{E} = \frac{1}{2}u^2 + e.$ 
The  system of equations of one-dimensional motion for gas dynamics consists of the following three conservation laws: conservation of  mass, momentum and  energy
\begin{equation}\label{eq:system}
\left\{
\begin{array}{l}
\rho_t + (\rho u)_x = 0,
\\
\displaystyle (\rho u)_t + (\rho u^2 + p)_x = 0,\rule{0ex}{2.75ex}
\\
\displaystyle\left(\rho \mathcal{E}\right)_t + \left(\rho  \mathcal{E}u + pu\right)_x = 0,\rule{0ex}{2.75ex}
\end{array}
\right.
\end{equation}
which are supplemented by  the first and second law of thermodynamics
\begin{equation}\label{eq:second-principle}
de  = TdS - pdv,
\end{equation}
a single equation of state and a set of  thermodynamic equations. 
For brevity we will refer to $S,$  $e$ and $\mathcal{E}$ as the entropy, internal  and total energy, respectively.
\par
For a partially ionized  {\it single\/} monatomic gas, let $n_{\rm a},$  $n_{\rm i}$ and $n_{\rm e}$ denote, respectively, the concentration (number per unit volume) of atoms, ions and electrons.  The equation of state depends on the {\it degree of ionization\/} $\alpha = \frac{n_{\rm e}}{n_{\rm a} + n_{\rm i}}$ having the  form 
      \begin{equation}\label{eq:pressure-ion_single}
        p = \frac{R}{M}\rho T(1 + \alpha)
      \end{equation}
      where $R$ denotes the {\it universal gas constant\/} and $M$ the {\it molar mass} of the monatomic gas (\cite{Fukuda-Okasaka-Fujimoto}). This  model system is similar to the system of an ideal dissociating diatomic gas studied by Lighthill  in \cite{Lighthill_1}.
      \par
It is found that, at any given high temperature $T$ and volume $V,$  these ionization reactions reach a state of equilibrium which is analogous to the chemical equilibrium for usual chemical reactions  whose equilibrium condition is the law of mass action: the ratio
\(
\frac{n_{\rm i}n_{\rm e}}{n_{\rm a}} 
\)
depends only on the temperature $T$.
 An actual formula was derived   by M. Saha  in \cite{Saha}, namely,
\begin{equation}\label{eq:saha}
\frac{n_{\rm i} n_{\rm e}}{n_{\rm a}} = \frac{2G_{\rm i}}{G_{\rm a}} \frac{(2\pi m_{\rm e} kT)^{\frac{3}{2}}}{h^3}\, e^{-\frac{T_{\rm i}}{T}},
\end{equation}
see also \cite{Bradt},\cite{Fermi},\cite{Vincenti-Kruger},\cite{Zeldovich-Raizer}. Here we denote  the partition functions of the neutral state and of the $1$-ionized state by $G_{\rm a}$ and $G_{\rm i},$ respectively; $m_{\rm e}$ is the electron mass, $k$ the Boltzmann constant, $h$ the Planck constant and
$T_{\rm i}=\frac{\chi}{k}$ the ionization energy measured by the temperature, where $\chi$ is the  first ionization potential \cite[\S 5, (4.8)]{Vincenti-Kruger}. On the other hand,  Saha's law  \eqref{eq:saha} is written as
\begin{equation}\label{eq:saha2}
\frac{\alpha^2}{1-\alpha^2}
= \frac{2G_{\rm i}}{G_{\rm a}} \frac{(2\pi m_{\rm e})^{\frac{3}{2}}(kT)^\frac52}{ph^3}\, e^{-\frac{T_{\rm i}}{T}},
\end{equation}
(see  \cite[(209)]{Fermi}, \cite[\S V.4, (4.9)]{Vincenti-Kruger}), showing that $\alpha$ can be regarded as a thermodynamic variable.
\par
Since the electric intermolecular process of ionization occurs much faster then the fluid-dynamic phenomenon of shock formation, see for instance \cite[VII, \S 11]{Zeldovich-Raizer} and \cite[VII, \S 10, Table 7.3]{Zeldovich-Raizer}, we may assume that a local thermodynamic equilibrium is everywhere attained: that is,  Saha's law \eqref{eq:saha2} holds everywhere even in presence of shock waves, which is one of the postulates of the present model system of ionized gas dynamics. Thus the equatios \eqref{eq:pressure-ion_single} and \eqref{eq:saha2} constitute the equation of state and a thermodynamic equation. 
\par
Now let us consider one mole mixture of monatomic gases A and B. The ionization reactions are represented as
$
{\rm A} \rightleftarrows {\rm A}^+ + {\rm e}^-, \quad
{\rm B} \rightleftarrows  {\rm B}^+ + {\rm e}^-.
$
We denote the number of atoms and ions for each gas by $N_{\rm a}^{\rm A},N_{\rm a}^{\rm B}$ and $N_{\rm i}^{\rm A}, N_{\rm i}^{\rm B},$ respectively. The number of electrons are denoted by $N_{\rm e}.$ Note that
$$
N_{\rm a}^{\rm A} + N_{\rm i}^{\rm A} + N_{\rm a}^{\rm B} + N_{\rm i}^{\rm B} = N_0:\ \text{Avogadro Number}, \quad N_{\rm e} = N_{\rm i}^{\rm A} + N_{\rm i}^{\rm B}.
$$
The concentration of atoms, ions and electrons are defined by 
$n_{\rm a}^{\rm A} = \frac{N_{\rm a}^{\rm A}}{V}, n_{\rm a}^{\rm B} = \frac{N_{\rm a}^{\rm B}}{V}, n_{\rm i}^{\rm A} = \frac{N_{\rm i}^{\rm A}}{V}, n_{\rm i}^{\rm B} = \frac{N_{\rm i}^{\rm B}}{V}, n_{\rm e} = \frac{N_{\rm e}}{V} ,$ respectively.
\par
By denoting  $G_{\rm a}^{\rm A} ,  G_{\rm a}^{\rm B}:$ the partition functions of the neutral state, $G_{\rm i}^{\rm A},  G_{\rm i}^{\rm B}:$ same for the $1$-ionized state, and $\chi^{\rm A}, \chi^{\rm B}:$ first ionization potentials,  the coupled Saha's laws for mixed monatomic gas are presented as the following.
\begin{equation}\label{eq:coupled saha}
   \frac{n_{\rm i}^{\rm A}n_{\rm e}}{n_{\rm a}^{\rm A}} = \frac{2G_{\rm i}^{\rm A}}{G_{\rm a}^{\rm A}}\left(\frac{{2\pi m_{\rm e} kT}}{h^2}\right)^{\frac{3}{2}}e^{-\frac{\chi^{\rm A}}{kT}}, \quad
     \frac{n_{\rm i}^{\rm B}n_{\rm e}}{n_{\rm a}^{\rm B}} = \frac{2G_{\rm i}^{\rm B}}{G_{\rm a}^{\rm B}}\left(\frac{{2\pi m_{\rm e} kT}}{h^2}\right)^{\frac{3}{2}}e^{-\frac{\chi^{\rm B}}{kT}}
\end{equation}
For A: hydrogen atom A, we have $\chi^{\rm A} = 13.59844$ eV and for B: helium atom B, $\chi^{\rm B} = 24.58741$ eV.  First ionization temperatures are
$$
   T_{\rm A} = \frac{\chi^{\rm A}}{k} = 1.5780 \times 10^5, \quad T_{\rm B} = \frac{\chi^{\rm B}}{k}= 2.8532 \times 10^5.
$$
Note that
%$T_{\rm He}- T_{\rm H} = 1.2752 \times 10^5 \approx T_{\rm H} $ and 
$T_{\rm A} <  T_{\rm B} < 2T_{\rm A}.$ 
We have also 
$
     \frac{2G_{\rm i}^{\rm A}}{G_{\rm a}^{\rm A}} = 1, \,  \frac{2G_{\rm i}^{\rm B}}{G_{\rm a}^{\rm B}} = 4.
$
 We will assume that a local thermodynamic equilibrium is everywhere attained: that is the  coupled Saha's laws \eqref{eq:coupled saha} hold everywhere even in presence of shock waves.
\par
The present  model system is constructed on the basis of several postulates that we now expose in detail. 
By denoting  the Debye radius \cite[III-2-\S11]{Zeldovich-Raizer} by  $\lambda_D, $ the {\it plasma parameter}: $\Lambda = \frac{4}{3}\pi n_{\rm e} \lambda_D^3$ is a dimensionless number defined by the number of electrons in a Debye sphere. 
%%%%%%%%%%%%%%%%%%%%%%%%%%%%%%%%%%%%%%%%%%%%%%%%%%%%%%%%%%%%%%%%%%%%%%%%%%%%%%%%%%%
We recall that the ratio between the potential energy and the kinetic energy is of order $\Lambda^{-\frac{2}{3}}.$
%Many key parameters are can be expressed in terms of $\Lambda.$ For example:
%\begin{itemize}
%	\item Ratio of the potential energy between typical nearest neighbouring particles $\frac{e^2}{4 \pi \epsilon_0 \bar{h}}$ to their typical kinetic energy $k T$: $\frac{e^2}{4 \pi \epsilon_0 \bar{h} kT}= O(1) \Lambda ^{-\frac{2}{3}},$ where $\bar{h} = (\frac{3}{4\pi n_{\rm e}})^{\frac13}$ is the  average distance of neighbouring particles.  
%	\item Ratio of Debye radius to the effective length of mean free pass, which is equal to that of electron collision frequency to the frequency of plasma oscillation: $\frac{\nu_{\rm e}}{\omega_{\rm pe}} = O(1) \frac{\log \Lambda}{\Lambda}$
%\end{itemize}
%%%%%%%%%%%%%%%%%%%%%%%%%%%%%%%%%%%%%%%%%%%%%%%%%%%%%%%%%%%%%%%%%%%%%%%%%%%%%%%%%%
We first assume that:
\begin{itemize}
\item Gas is a mixture of monatomic gases A and B satisfying  $T_{\rm A} < T_{\rm B} \leq 2T_{\rm A}$
\item All collisions are perfectly elastic (or effects of collisions among the particles can be neglected)
   \item Gravitational effects, viscosity and thermal conductivities are disregarded
\item  $\Lambda \gg 1$, which means that the interaction potential energies of the charged particles are negligible with respect to the kinetic energies, and electrostatic interactions are relatively rare
\item Local thermodynamic equilibrium is everywhere attained
\end{itemize}
The fourth postulate above is motivated by the high temperatures considered in \cite{Fukuda-Okasaka-Fujimoto}, see \cite[\S 78]{Landau-Lifshitz_SP}, \cite[(3.77)]{Zeldovich-Raizer} for further details.
\par
The {\it degree of ionization} and {\it fraction} for each gas is defined by
$$
\textstyle 
\alpha_{\rm A} = \frac{n_{\rm i}^{\rm A}}{n_{\rm a}^{\rm A} + n_{\rm i}^{\rm A}}, \quad
\alpha_{\rm B} = \frac{n_{\rm i}^{\rm B}}{n_{\rm a}^{\rm B} + n_{\rm i}^{\rm B}},
$$
$$
\textstyle
\beta = \frac{N_{\rm a}^{\rm A} + N_{\rm i}^{\rm A}}{N_0} = \frac{n_{\rm a}^{\rm A} + n_{\rm i}^{\rm A}}{n_0}, \quad
1 - \beta = \frac{N_{\rm a}^{\rm B} + N_{\rm i}^{\rm B}}{N_0} = \frac{n_{\rm a}^{\rm B} + n_{\rm i}^{\rm B}}{n_0}.
$$
The density and molar mass of each gas are denoted by $\rho_{\rm A},\, \rho_{\rm B}$ and $M_{\rm A},\, M_{\rm B},$ respectively.
The pressure is a sum of partial pressures with respect to atoms, ions and electrons:
$$
p = p_{\rm a} + p_{\rm i} + p_{\rm e} = p_{\rm a} + 2p_{\rm i},\quad p_{\rm j} = kn_{\rm j}T\ ({\rm j} = {\rm a,\, i,\, e}).
$$
Then  by setting $\alpha = \beta \alpha_{\rm A} + (1 - \beta)\alpha_{\rm B}$ 
\begin{align*}
  p &= p_{\rm a}^{\rm A} + 2 p_{\rm i}^{\rm A} + p_{\rm a}^{\rm B} + 2 p_{\rm i}^{\rm B}
  = k\left(n_{\rm a}^{\rm A} + 2 n_{\rm i}^{\rm A} + n_{\rm a}^{\rm B} + 2 n_{\rm i}^{\rm B}\right)T\\
  &= k\left[\left(n_{\rm a}^{\rm A} + n_{\rm i}^{\rm A}\right)\left(1 + \alpha_{\rm A}\right) + \left(n_{\rm a}^{\rm B} + n_{\rm i}^{\rm B}\right)\left(1 + \alpha_{\rm B}\right)\right]T
= kn_0\left(1 + \alpha\right)T.
\end{align*}
By noticing $1 + \frac{n_{\rm a}^{\rm A} }{n_{\rm i}^{\rm A}} = \frac{1}{\alpha^{\rm A}},1 + \frac{n_{\rm a}^{\rm B} }{n_{\rm i}^{\rm B}} = \frac{1}{\alpha^{\rm B}},$ Saha's laws take the forms
\begin{align*}
  \frac{n_{\rm i}^{\rm A}n_{\rm e}}{n_{\rm a}^{\rm A}}
  &= \left(\frac{\alpha_{\rm A}}{1 - \alpha_{\rm A}}\right)\left(n_{\rm i}^{\rm A} + n_{\rm i}^{\rm B}\right)  
  =\frac{n_0\alpha_{\rm A}\alpha}{1 - \alpha_{\rm A}}
   = \frac{2G_{\rm i}^{\rm A}}{G_{\rm a}^{\rm A}}\left(\frac{{2\pi m_{\rm e} kT}}{h^2}\right)^{\frac{3}{2}}e^{-\frac{T_{\rm A}}{T}}, \\
  \frac{n_{\rm i}^{\rm B}n_{\rm e}}{n_{\rm a}^{\rm B}}
  &= \left(\frac{\alpha_{\rm B}}{1 - \alpha_{\rm B}}\right)\left(n_{\rm i}^{\rm A} + n_{\rm i}^{\rm B}\right)
  = \frac{n_0\alpha_{\rm B}\alpha}{1 - \alpha_{\rm B}}
  = \frac{2G_{\rm i}^{\rm B}}{G_{\rm a}^{\rm B}}\left(\frac{{2\pi m_{\rm e} kT}}{h^2}\right)^{\frac{3}{2}}e^{-\frac{T_{\rm B}}{T}}.
\end{align*} 
Thus we conclude that thermodynamic equations have the forms
\begin{align}
  p &= \frac{(1 - \alpha_{\rm A})(1 + \alpha)}{\alpha_{\rm A}\alpha}\frac{2G_{\rm i}^{\rm A}}{G_{\rm a}^{\rm A}}\frac{(2\pi m_{\rm e})^{\frac{3}{2}}(kT)^{\frac{5}{2}}}{h^3}e^{-\frac{T_{\rm A}}{T}}\nonumber\\
    &= \frac{(1 - \alpha_{\rm B})(1 + \alpha)}{\alpha_{\rm B}\alpha}\frac{2G_{\rm i}^{\rm B}}{G_{\rm a}^{\rm B}}\frac{(2\pi m_{\rm e})^{\frac{3}{2}}(kT)^{\frac{5}{2}}}{h^3}e^{-\frac{T_{\rm B}}{T}}.\label{eq:p}
\end{align}
Also we have a compatibility condition
\begin{equation}\label{eq:compatibility_1}
    \frac{2G_{\rm i}^{\rm A}}{G_{\rm a}^{\rm A}}\frac{1 - \alpha_{\rm A}}{\alpha_{\rm A}}e^{-\frac{T_{\rm A}}{T}}\\
    = \frac{2G_{\rm i}^{\rm B}}{G_{\rm a}^{\rm B}}\frac{1 - \alpha_{\rm B}}{\alpha_{\rm B}}e^{-\frac{T_{\rm B}}{T}}.
\end{equation}

\par
 We next assume that:
\begin{itemize}
    \item Gases are well mixed so that: 
$
 \rho = \beta \rho_{\rm A} + (1 - \beta) \rho_{\rm B};
$
   \item Pressure of each gas takes the form 
$$
  p_{\rm A} = \dfrac{R\rho_{\rm A}}{M_{\rm A}} T(1 + \alpha_{\rm A}), \quad
  p_{\rm B} = \dfrac{R\rho_{\rm B}}{M_{\rm B}} T(1 + \alpha_{\rm B})
$$
  \item Specific enthalpies are defined by
$$
 h_{\rm A} = \frac{5R}{2M_{\rm A}} T(1 + \alpha_{\rm A}) + \frac{RT_{\rm A}}{M_{\rm A}}\alpha_{\rm A},\quad
 h_{\rm B} = \frac{5R}{2M_{\rm B}} T(1 + \alpha_{\rm B}) + \frac{RT_{\rm B}}{M_{\rm B}}\alpha_{\rm B}
$$
 \item Macroscopic motion of the gas flow is one-dimensional
\end{itemize}
%\par
We deduce from the above assumptions that the total pressure is 
\begin{align*}
  p &= \beta p_{\rm A} + (1 - \beta) p_{\rm B}
    = \beta\frac{R\rho_{\rm A}}{M_{\rm A}} T(1 + \alpha_{\rm A}) 
    + (1 - \beta)\frac{R\rho_{\rm B}}{M_{\rm B}} T(1 + \alpha_{\rm B}).
\end{align*}
Thus
\begin{align*}
  \frac{p}{\rho} 
    &= \frac{\beta\frac{R}{V} T(1 + \alpha_{\rm A}) 
    + (1 - \beta)\frac{R}{V} T(1 + \alpha_{\rm B})}
{\beta \frac{M_{\rm A}}{V} + (1 - \beta) \frac{M_{\rm B}}{V}}%\\
  = \frac{RT\left[1 + \beta\alpha_{\rm A} + (1 - \beta)\alpha_{\rm B}\right]}
{\beta M_{\rm A} + (1 - \beta) M_{\rm B}}.
\end{align*}
Denoting $\alpha = \beta\alpha_{\rm A} + (1 - \beta)\alpha_{\rm B}$ and $M = \beta M_{\rm A} + (1 - \beta) M_{\rm B},$ we obtain
      \begin{equation}\label{eq:pressure-ion}
        p = \frac{R}{M}\rho T(1 + \alpha)
      \end{equation}
 which is the  equation of state. The total specific enthalpy is 
\begin{equation}\label{eq:total enthalpy}%\begin{align}
  h = \frac{\beta M_{\rm A}h_{\rm A} + (1 - \beta)M_{\rm B}h_{\rm B} }{\beta M_{\rm A} + (1 - \beta)M_{\rm B}}
%  & = \frac{R\left\{\frac{5}{2}\left[1 + \beta\alpha_{\rm A} + (1 - \beta)\alpha_{\rm B} \right]T + \left[\beta T_{\rm A}\alpha_{\rm A} + (1 - \beta)T_{\rm B}\alpha_{\rm B} \right]\right\}}{\beta M_{\rm A} + (1 - \beta)M_{\rm B}}\nonumber\\
%  & = \frac{5}{2} pv + \frac{R\left[\beta T_{\rm A}\alpha_{\rm A} + (1 - \beta)T_{\rm B}\alpha_{\rm B}\right]}{\beta M_{\rm A} + (1 - \beta)M_{\rm B}}\\
  = \frac{5RT}{2M}(1 + \alpha)
+ \frac{R}{M}\left[\beta T_{\rm A}\alpha_{\rm A} + (1 - \beta) T_{\rm B}\alpha_{\rm B}\right]
%\frac{5}{2} pv + \frac{R}{M}\left[\beta T_{\rm A}\alpha_{\rm A} + (1 - \beta)T_{\rm B}\alpha_{\rm B}\right] 
 \end{equation}
\par
After a short review of basic thermodynamics, we show some basic calculus lemmas in Section \ref{sec:basic-thermodynamics}. The physical entropy functions are constructed in Section \ref{sec:entropy construction}. We show that system \eqref{eq:system} is strictly hyperbolic and compute characteristic fields in Section \ref{sec:Equations Gas Dynamics}. However, unlike the ideal polytropic case, the forward and backward characteristic fields of the system are not genuinely nonlinear and we study the set where this happens in Section \ref{sec:convexity}. We refer to \cite{Dafermos}, \cite{Smoller} for more information on systems of conservation laws.
We study in Section \ref{sec:compatibility} the relation between $\alpha_{\rm A}$ and $\alpha_{\rm B}.$ 
A detailed study of the Hugoniot locus of the system is carried out in Section \ref{sec:Hugoniot}. Though Hugoniot loci are monotone in $(T, \alpha)$-plane in a single monatomic case, they are not always monotone in the present mixed monatomic case: If $\beta$ is sufficiently small, then they lose monotonicity at some base state. Thus the degree of ionization does not always increase across the shock front, even if the temperature increases. However we prove that the pressure actually increases as the temperature increases. In order to fit the mathematical data to ordinary circumstances, we propose an approximation of Hugoniot locus in Section \ref{sec:approximate Hugoniot}. We apply our results to the shock tube problem in Section \ref{sec:shock tube}. Basic results: existence and uniqueness are established, which provides  a rigorous mathematical basis to the physical phenomena observed in \cite{Fukuda-Okasaka-Fujimoto}. Behaviour of isentropes and detailed computations for the proof of uniqueness are shown in appendices.

%%%%%%%%%%%%%%%%%%%%%%%%%%%%%%%%%%%%%%%%%%%%%%%%%%%%%%%%%%%%%%%%%%%%%%%%%%%%%%%%%%%%%%%%%%
\section{Basic Thermodynamics and Calculus Lemmas}\label{sec:basic-thermodynamics}
\setcounter{equation}{0}
%%%%%%%%%%%%%%%%%%%%%%%%%%%%%%%%%%%%%%%%%%%%%%%%%%%%%%%%%%%%%%%%%%%%%%%%%%%%%%%%%%%%%%%%%
 First we adopt  $p$ and $T$ as a set of independent thermodynamic state variables.
By introducing the enthalpy $h = e + pv,$ the first and second law 1-\eqref{eq:second-principle} becomes
$$%\begin{equation}\label{eq:enthalpy}
   dh = T\,dS + v\,dp =T\left(\frac{\partial S}{\partial T}\right)_{p} dT +  T\Bigl[\left(\frac{\partial S}{\partial p} \right)_{T} + v\Bigr]dp 
$$%\end{equation}
As usual, a subscript as $T$ or $p$ above means that the derivative is computed by holding the subscripted variable  fixed. 
We also introduce the Gibbs function  
\(
g = h - TS
\), see \cite[(111)]{Fermi}, %, \cite[\S I.7, (4)]{Sommerfeld} 
and we have  
\begin{equation}\label{eq:Gibbs-derivatives}
dg = v\, dp - S\,dT = \left(\frac{\partial g}{\partial p}\right)_Tdp 
+ \left(\frac{\partial g}{\partial T}\right)_p dT.
\end{equation}
\paragraph{Maxwell's Relations:} We deduce by \eqref{eq:Gibbs-derivatives} the compatibility condition
\begin{equation}\label{eq:Maxwell}
\left(\frac{\partial v}{\partial T}\right)_p = -\left(\frac{\partial S}{\partial p}\right)_T,
\end{equation}
which is one of  so-called Maxwell relations. % \cite[I.7]{Sommerfeld}.
In turn, by \eqref{eq:Gibbs-derivatives} and \eqref{eq:Maxwell} we obtain
$$%\begin{eqnarray*}
\left(\frac{\partial h}{\partial p} \right)_{T}
  = T\left(\frac{\partial S}{\partial p} \right)_{T} + v
   =   -T\left(\frac{\partial v}{\partial T} \right)_{p} + v, \quad
  \left(\frac{\partial h}{\partial T} \right)_{p}
 = T\left(\frac{\partial S}{\partial T}\right)_{p}.
$$%\end{eqnarray*}
Thus we have the following proposition.
\begin{proposition}[$p,T:$ set of  independent variables]\label{prop:Maxwell}
$$
  \left(\frac{\partial S}{\partial p} \right)_{\! T}
  = - \left(\frac{\partial v}{\partial T} \right)_{\! p},\quad
  \left(\frac{\partial S}{\partial T}\right)_{\! p} = \frac{1}{T}\left(\frac{\partial h}{\partial T} \right)_{\! p}
$$
\end{proposition}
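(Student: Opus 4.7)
The plan is to derive both identities from the exactness of the Gibbs differential, following the trail already laid out in the paragraphs preceding the statement. I take $(p,T)$ as the independent variables and regard $v$, $S$, $h$, $g$ as smooth functions of $(p,T)$ on the domain of interest; exactness is then the classical equality of mixed second partials.

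First I would rewrite the first and second law in terms of the enthalpy. Setting $h = e + pv$ and using \eqref{eq:second-principle} gives $dh = de + p\,dv + v\,dp = T\,dS + v\,dp$. From the definition $g = h - TS$ I then compute $dg = dh - T\,dS - S\,dT = v\,dp - S\,dT$, which is exactly \eqref{eq:Gibbs-derivatives}. In particular,
$$
\left(\frac{\partial g}{\partial p}\right)_{\! T} = v, \qquad \left(\frac{\partial g}{\partial T}\right)_{\! p} = -S.
$$

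Next I would invoke equality of mixed partials for $g$: $\partial_T\partial_p g = \partial_p\partial_T g$, which yields
$$
\left(\frac{\partial v}{\partial T}\right)_{\! p} = -\left(\frac{\partial S}{\partial p}\right)_{\! T},
$$
the first claim, which is Maxwell's relation \eqref{eq:Maxwell}. For the second claim, I read off the coefficient of $dT$ in $dh = T\,dS + v\,dp$ written with $(p,T)$ as independent variables: at constant $p$ we have $dh = T\,dS$, so
$$
\left(\frac{\partial h}{\partial T}\right)_{\! p} = T\left(\frac{\partial S}{\partial T}\right)_{\! p},
$$
which rearranges to the desired identity $(\partial S/\partial T)_p = T^{-1}(\partial h/\partial T)_p$.

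There is no real obstacle here: the whole proof is three lines once one accepts that $g$ is $C^2$ so that Schwarz's theorem applies, and that the first and second laws of thermodynamics give $dh = T\,dS + v\,dp$. The only care required is notational, making sure each partial derivative is taken with the correct variable held fixed, since the identities would be false if interpreted with a different choice of independent variables.
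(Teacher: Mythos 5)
Your proposal is correct and follows essentially the same route as the paper: both derive $dh = T\,dS + v\,dp$ from the first and second law, introduce the Gibbs function $g = h - TS$ and use the exactness of $dg = v\,dp - S\,dT$ to obtain the Maxwell relation, then read off $(\partial h/\partial T)_p = T(\partial S/\partial T)_p$ from the differential of $h$. No gaps.
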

The specific volume $v$ is expressed  by 1-\eqref{eq:pressure-ion} 
as $%\begin{equation}\label{eq:spec-vol-ion}
v = \frac{RT}{Mp}(1 + \alpha)
%\cdot \left[1 + \beta\alpha_{\rm A} + (1 - \beta)\alpha_{\rm B}\right]
$
%\end{equation}
 and the enthalpy is 1-\eqref{eq:total enthalpy}.
The dimensionless entropy $\eta$ is defined by $\eta = \frac{M}{R} S.$ Consequently we have by Proposition \ref{prop:Maxwell}
\begin{lemma}\label{lem:d eta/dp, d eta/dT}
\begin{align}
  \left(\frac{\partial \eta}{\partial p} \right)_{\! T}
  &= - \frac{1}{p}\left[1 + \alpha + T \left(\frac{\partial \alpha}{\partial T} \right)_{\! p} \right] \label{eq:d eta/dp}\\
\left(\frac{\partial \eta}{\partial T} \right)_{\! p}
   &= \frac{5}{2T}(1 + \alpha)
 + \beta \left(\frac{5}{2} + \frac{T_{\rm A}}{T}\right)\left(\frac{\partial \alpha_{\rm A}}{\partial T} \right)_{\! p} + (1 - \beta)\left(\frac{5}{2} + \frac{T_{\rm A}}{T}\right)\left(\frac{\partial \alpha_{\rm B}}{\partial T} \right)_{\! p}\label{eq:d eta/dT}
\end{align}
\end{lemma}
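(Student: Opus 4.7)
The plan is to apply Proposition \ref{prop:Maxwell} directly, converting the entropy derivatives into derivatives of $v$ and $h$, both of which are known in closed form from \eqref{eq:pressure-ion} and \eqref{eq:total enthalpy}, and then differentiate using the chain rule while keeping in mind that the ionization degrees $\alpha$, $\alpha_{\rm A}$, $\alpha_{\rm B}$ are themselves functions of $(p,T)$ through Saha's laws \eqref{eq:coupled saha}.

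For \eqref{eq:d eta/dp}, since $\eta = \tfrac{M}{R}S$, the first identity in Proposition \ref{prop:Maxwell} reads $\bigl(\tfrac{\partial \eta}{\partial p}\bigr)_T = -\tfrac{M}{R}\bigl(\tfrac{\partial v}{\partial T}\bigr)_p$. Writing $v = \tfrac{RT}{Mp}(1+\alpha)$ and applying the product rule at fixed $p$ yields $\bigl(\tfrac{\partial v}{\partial T}\bigr)_p = \tfrac{R}{Mp}\bigl[(1+\alpha) + T\bigl(\tfrac{\partial \alpha}{\partial T}\bigr)_p\bigr]$; the factors $M/R$ then cancel, producing exactly \eqref{eq:d eta/dp}.

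For \eqref{eq:d eta/dT}, I start from $\bigl(\tfrac{\partial \eta}{\partial T}\bigr)_p = \tfrac{M}{RT}\bigl(\tfrac{\partial h}{\partial T}\bigr)_p$. Differentiating \eqref{eq:total enthalpy} at fixed $p$ and using $\bigl(\tfrac{\partial \alpha}{\partial T}\bigr)_p = \beta\bigl(\tfrac{\partial \alpha_{\rm A}}{\partial T}\bigr)_p + (1-\beta)\bigl(\tfrac{\partial \alpha_{\rm B}}{\partial T}\bigr)_p$ inside the $\tfrac{5RT}{2M}(1+\alpha)$ term, I collect the coefficients of $\bigl(\tfrac{\partial \alpha_{\rm A}}{\partial T}\bigr)_p$ and $\bigl(\tfrac{\partial \alpha_{\rm B}}{\partial T}\bigr)_p$. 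After multiplying through by $M/(RT)$, the $\alpha_{\rm A}$-coefficient becomes $\beta\bigl(\tfrac{5}{2} + \tfrac{T_{\rm A}}{T}\bigr)$ and the $\alpha_{\rm B}$-coefficient $(1-\beta)\bigl(\tfrac{5}{2} + \tfrac{T_{\rm B}}{T}\bigr)$, while the remaining piece from $\tfrac{5R}{2M}(1+\alpha)$ contributes the leading $\tfrac{5}{2T}(1+\alpha)$, matching \eqref{eq:d eta/dT}.

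There is essentially no real obstacle: the lemma is a mechanical consequence of Proposition \ref{prop:Maxwell} once $v$ and $h$ have been written out. The only point requiring care is the bookkeeping, namely that $\alpha_{\rm A}$ and $\alpha_{\rm B}$ are \emph{not} held fixed when differentiating at constant $p$ — their $T$-dependence through Saha's laws contributes the chain-rule terms that make up the second and third summands in \eqref{eq:d eta/dT}. The compatibility relation \eqref{eq:compatibility_1} is not used at this stage; it only becomes relevant later when one wants to express $\bigl(\tfrac{\partial \alpha_{\rm A}}{\partial T}\bigr)_p$ and $\bigl(\tfrac{\partial \alpha_{\rm B}}{\partial T}\bigr)_p$ themselves.
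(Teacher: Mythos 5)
Your proof is correct and is precisely the argument the paper intends: the lemma is stated immediately after Proposition~\ref{prop:Maxwell} with the words ``Consequently we have,'' i.e.\ one substitutes $v=\frac{RT}{Mp}(1+\alpha)$ and the enthalpy \eqref{eq:total enthalpy} into the Maxwell relations and applies the chain rule to $\alpha$, $\alpha_{\rm A}$, $\alpha_{\rm B}$ as functions of $(p,T)$, exactly as you do. One remark: your computation yields $(1-\beta)\bigl(\tfrac{5}{2}+\tfrac{T_{\rm B}}{T}\bigr)$ as the coefficient of $\bigl(\tfrac{\partial \alpha_{\rm B}}{\partial T}\bigr)_{p}$, which is the correct value and is the one used consistently later (see \eqref{eq:eta_p}, \eqref{eq:eta_T alpha} and the proof of Theorem~\ref{thm:entropy}), whereas the printed statement of \eqref{eq:d eta/dT} shows $T_{\rm A}$ in that last term --- a typographical slip in the lemma, not an error in your argument.
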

\paragraph{Saha Equations:}
Setting 
$$
 T_{\rm A} = \frac{\chi^{\rm A}}{k}, \quad T_{\rm B} = \frac{\chi^{\rm B}}{k}, \quad \mu_{\rm A}^{-1} = \frac{2G_{\rm i}^{\rm A}}{G_{\rm a}^{\rm A}}\frac{(2\pi m_{\rm e})^{\frac{3}{2}}k^{\frac{5}{2}}}{h^3}, \quad
 \mu_{\rm B}^{-1} = \frac{2G_{\rm i}^{\rm B}}{G_{\rm a}^{\rm B}}\frac{(2\pi m_{\rm e})^{\frac{3}{2}}k^{\frac{5}{2}}}{h^3},
$$
we have  from  1-\eqref{eq:p} and  1-\eqref{eq:compatibility_1}
\begin{lemma}\label{lem:saha+compatibility}
Saha's equations take the forms
\begin{equation}\label{eq:saha bar(alpha)}%\begin{align}
  \left(\frac{1}{\alpha_{\rm A}} - 1\right)\left(\frac{1}{\ \alpha\ } - 1\right) = \frac{\mu_{\rm A} p e^{\frac{T_{\rm A}}{T}}}{T^{\frac{5}{2}}}, \quad %\nonumber\\
     \left(\frac{1}{\alpha_{\rm B}} - 1\right)\left(\frac{1}{\ \alpha\ } - 1\right) =\frac{\mu_{\rm B} p e^{\frac{T_{\rm B}}{T}}}{T^{\frac{5}{2}}}
\end{equation}
and the compatibility condition
\begin{equation}\label{eq:compatibility}
   \left(\frac{1}{\alpha_{\rm A}} - 1\right)\frac{e^{-\frac{T_{\rm A}}{T}}}{\mu_{\rm A}}
    = \left(\frac{1}{\alpha_{\rm B}} - 1\right)\frac{e^{-\frac{T_{\rm B}}{T}}}{\mu_{\rm B}}.
\end{equation}
\end{lemma}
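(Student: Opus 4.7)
The plan is to treat Lemma \ref{lem:saha+compatibility} as a direct algebraic rewriting of equations \eqref{eq:p} and \eqref{eq:compatibility_1} already established in Section \ref{sec:Introduction}. All the physical content --- the coupled Saha laws together with $p=kn_0(1+\alpha)T$ --- has been packaged there, and the constants $\mu_{\rm A}, \mu_{\rm B}$ introduced just above the statement serve only to absorb the cumbersome prefactor $\frac{2G_{\rm i}}{G_{\rm a}}\frac{(2\pi m_{\rm e})^{3/2}k^{5/2}}{h^3}$ into a single symbol, so that Saha's laws appear as clean functions of $p$ and $T$ with $\alpha_{\rm A}, \alpha_{\rm B}, \alpha$ as the remaining unknowns.

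For the two Saha identities I would begin with the A-branch of \eqref{eq:p}. Observing that $(kT)^{5/2}=k^{5/2}T^{5/2}$ and that by definition $\mu_{\rm A}^{-1}=\frac{2G_{\rm i}^{\rm A}}{G_{\rm a}^{\rm A}}\frac{(2\pi m_{\rm e})^{3/2}k^{5/2}}{h^3}$, the entire constant block in \eqref{eq:p} collapses into $\mu_{\rm A}^{-1}T^{5/2}$, so that the equation takes the compact form
\[
p \;=\; \frac{(1-\alpha_{\rm A})(1+\alpha)}{\alpha_{\rm A}\alpha}\;\mu_{\rm A}^{-1}\,T^{5/2}\,e^{-T_{\rm A}/T}.
\]
Dividing through by this prefactor and applying the elementary identity $\tfrac{1-\alpha_{\rm A}}{\alpha_{\rm A}}=\tfrac{1}{\alpha_{\rm A}}-1$ (and the analogous identity for the $\alpha$-factor) yields the first formula of \eqref{eq:saha bar(alpha)}. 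The B-branch is obtained by exactly the same manipulation with A replaced by B throughout, using the B-branch of \eqref{eq:p} and the definition of $\mu_{\rm B}^{-1}$.

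For the compatibility condition I would start from \eqref{eq:compatibility_1} and multiply both sides by $\frac{(2\pi m_{\rm e})^{3/2}k^{5/2}}{h^3}$: by the definitions, the constant group on each side becomes exactly $\mu_{\rm A}^{-1}$ and $\mu_{\rm B}^{-1}$, respectively, and moving these reciprocals across the equation produces \eqref{eq:compatibility}. A slightly shorter route is to take the ratio of the two Saha equations just established; the common factors $(1/\alpha \pm 1)$ and $p/T^{5/2}$ cancel and only the A-versus-B contributions --- the two exponentials and the two $\mu$'s --- survive, which is \eqref{eq:compatibility} after rearrangement.

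There is no real technical obstacle: the whole argument is pure bookkeeping. The one point to watch is to group $k^{5/2}T^{5/2}$ into $(kT)^{5/2}$ consistently, so that the auxiliary constants $\mu_{\rm A}, \mu_{\rm B}$ --- designed with their factor $k^{5/2}$ rather than $(kT)^{5/2}$ --- match the coefficient structure already present in \eqref{eq:p} and \eqref{eq:compatibility_1}.
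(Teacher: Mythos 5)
Your approach is exactly the paper's: the lemma is stated there with no separate proof beyond the remark that it follows ``from 1-\eqref{eq:p} and 1-\eqref{eq:compatibility_1}'' once $\mu_{\rm A},\mu_{\rm B}$ absorb the constant prefactors, and your bookkeeping (including the $(kT)^{5/2}=k^{5/2}T^{5/2}$ regrouping and the ratio trick for the compatibility condition) is the intended argument.

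One caution, though: your phrase ``the analogous identity for the $\alpha$-factor'' glosses over a sign. The factor coming from \eqref{eq:p} is $\frac{1+\alpha}{\alpha}=\frac{1}{\alpha}+1$, not $\frac{1}{\alpha}-1$ as printed in \eqref{eq:saha bar(alpha)}; there is no identity turning $\frac{1+\alpha}{\alpha}$ into $\frac{1}{\alpha}-1$. The printed statement appears to carry a typo: in the single-gas limit $\alpha_{\rm A}=\alpha$ the correct form must reduce to the classical Saha law \eqref{eq:saha2}, i.e.\ to $\bigl(\frac{1}{\alpha}-1\bigr)\bigl(\frac{1}{\alpha}+1\bigr)=\frac{1-\alpha^2}{\alpha^2}$, and the Pfaff equations \eqref{eq:pfaff_1}--\eqref{eq:pfaff_2} obtained by differentiating the lemma contain the combination $\alpha(1+\alpha)+\beta q_{\rm A}$, which only arises from differentiating $\frac{(1-\alpha_{\rm A})(1+\alpha)}{\alpha_{\rm A}\alpha}$. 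So your derivation is right and consistent with everything downstream, but you should state explicitly that what you obtain is $\bigl(\frac{1}{\alpha_{\rm A}}-1\bigr)\bigl(\frac{1}{\alpha}+1\bigr)=\frac{\mu_{\rm A}pe^{T_{\rm A}/T}}{T^{5/2}}$ rather than silently matching the displayed (mistyped) formula. The compatibility condition is unaffected, since the $\alpha$-factor cancels in the ratio either way.
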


\paragraph{Computation of $\left(\dfrac{\partial \alpha_{\rm A}}{\partial p} \right)_{T},\  \left(\dfrac{\partial \alpha_{\rm A}}{\partial T} \right)_{p}, \, \left(\dfrac{\partial \alpha_{\rm B}}{\partial p} \right)_{T},\  \left(\dfrac{\partial \alpha_{\rm B}}{\partial T} \right)_{p}:$ }
For the sake of brevity, we set
$$
q_{\rm A} = \alpha_{\rm A} (1 - \alpha_{\rm A}),\
q_{\rm B} = \alpha_{\rm B} (1 - \alpha_{\rm B}),\ q = \beta q_{\rm A} + (1 - \beta)q_{\rm B}.
$$
Differentiating Saha's equations, we have a system of Pfaff equations
\begin{align}
& \frac{\alpha(1 + \alpha) + \beta q_{\rm A} }{\alpha_{\rm A}^2 \alpha^2 } d\alpha_{\rm A} + \frac{(1 - \alpha_{\rm A})(1 - \beta)}{\alpha_{\rm A} \alpha^2} d\alpha_{\rm B}\nonumber\\
   &= -\frac{\mu_{\rm A}pe^{\frac{T_{\rm A}}{T}}}{T^{\frac{5}{2}}}\left[\frac{dp}{p} - \left(\frac{5}{2} + \frac{T_{\rm A}}{T}\right)\frac{dT}{T}\right],\label{eq:pfaff_1}\\
& \frac{(1 - \alpha_{\rm B})\beta}{\alpha_{\rm B} \alpha^2} d\alpha_{\rm A} + \frac{\alpha(1 + \alpha) + (1 -  \beta)q_{\rm B}}{\alpha_{\rm B}^2 \alpha^2} d\alpha_{\rm B}\nonumber\\
   &=  -\frac{\mu_{\rm B}pe^{\frac{T_{\rm B}}{T}}}{T^{\frac{5}{2}}}\left[\frac{dp}{p} - \left(\frac{5}{2} + \frac{T_{\rm B}}{T}\right)\frac{dT}{T}\right]\label{eq:pfaff_2}
\end{align}
which constitutes a system of linear equation of $d\alpha_{\rm A}$ and $d\alpha_{\rm B}.$ 
By the inverse function theorem,  we obtain 
\begin{lemma}\label{lem:derivatives of alpha}
  \begin{align}
     \left(\frac{\partial \alpha_{\rm A}}{\partial p} \right)_{\!T}
     &= -\frac{\alpha(1 + \alpha)q_{\rm A}}{p\left[\alpha(1 + \alpha) + q\right]}, \hspace{1ex}
\left(\frac{\partial \alpha_{\rm B}}{\partial p} \right)_{\!T}
= - \frac{\alpha(1 + \alpha)q_{\rm B}}{p\left[\alpha(1 + \alpha) + q\right]}\label{eq:alpha_Ap}\\
  \left(\frac{\partial \alpha_{\rm A}}{\partial T} \right)_{\! p}
  &= \frac{\alpha(1 + \alpha)q_{\rm A}}{T\left[\alpha(1 + \alpha) + q\right]}\left(\frac{5}{2} + \frac{T_{\rm A}}{T}\right)
  + \frac{(1 - \beta)q_{\rm A}q_{\rm B}(T_{\rm A} - T_{\rm B})}{T^2\left[\alpha(1 + \alpha) + q\right]}\label{eq:alpha_AT}\\
\left(\frac{\partial \alpha_{\rm B}}{\partial T} \right)_{\! p}
&= \frac{\alpha(1 + \alpha)q_{\rm B}}{T\left[\alpha(1 + \alpha) + q\right]}\left(\frac{5}{2} + \frac{T_{\rm B}}{T}\right)
  + \frac{\beta q_{\rm A}q_{\rm B}(T_{\rm B} - T_{\rm A})}{T^2\left[\alpha(1 + \alpha) + q\right]} \label{eq:alpha_BT}
  \end{align}
 \end{lemma}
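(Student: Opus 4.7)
The approach is to regard the Pfaff equations \eqref{eq:pfaff_1}--\eqref{eq:pfaff_2} as a $2\times 2$ linear system in the differentials $(d\alpha_{\rm A}, d\alpha_{\rm B})$, verify invertibility of its coefficient matrix, and then read off the four partial derivatives by Cramer's rule upon setting $dT=0$ or $dp=0$. The preparatory step is to clear denominators: multiplying \eqref{eq:pfaff_1} by $\alpha_{\rm A}^2\alpha^2$ and \eqref{eq:pfaff_2} by $\alpha_{\rm B}^2\alpha^2$, and using Saha's laws from Lemma~\ref{lem:saha+compatibility} to rewrite $\alpha_{\rm A}^2\alpha^2 \cdot \mu_{\rm A}p\,e^{T_{\rm A}/T}/T^{5/2}$ and $\alpha_{\rm B}^2\alpha^2 \cdot \mu_{\rm B}p\,e^{T_{\rm B}/T}/T^{5/2}$ as $q_{\rm A}\,\alpha(1+\alpha)$ and $q_{\rm B}\,\alpha(1+\alpha)$ respectively, the system becomes
\[
M \begin{pmatrix} d\alpha_{\rm A}\\ d\alpha_{\rm B}\end{pmatrix}
= \alpha(1+\alpha)\begin{pmatrix} -q_{\rm A}\bigl[\tfrac{dp}{p}-(\tfrac52+\tfrac{T_{\rm A}}{T})\tfrac{dT}{T}\bigr]\\
-q_{\rm B}\bigl[\tfrac{dp}{p}-(\tfrac52+\tfrac{T_{\rm B}}{T})\tfrac{dT}{T}\bigr]\end{pmatrix},
\]
with $M=\begin{pmatrix}\alpha(1+\alpha)+\beta q_{\rm A} & (1-\beta)q_{\rm A}\\ \beta q_{\rm B} & \alpha(1+\alpha)+(1-\beta)q_{\rm B}\end{pmatrix}$.

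A short expansion, in which the two $\beta(1-\beta)q_{\rm A}q_{\rm B}$ terms cancel, gives $\det M = \alpha(1+\alpha)\bigl[\alpha(1+\alpha)+q\bigr]$, strictly positive on the physical range $0<\alpha_{\rm A},\alpha_{\rm B}<1$. This invertibility justifies, via the inverse function theorem, the smoothness of $(\alpha_{\rm A},\alpha_{\rm B})$ as functions of $(p,T)$.

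The four identities then follow from Cramer's rule. Setting $dT=0$, the numerator determinant for $d\alpha_{\rm A}$ carries the factor $\begin{vmatrix} -q_{\rm A} & (1-\beta)q_{\rm A}\\ -q_{\rm B} & \alpha(1+\alpha)+(1-\beta)q_{\rm B}\end{vmatrix}$, whose $(1-\beta)q_{\rm A}q_{\rm B}$ cross-terms cancel to leave $-q_{\rm A}\alpha(1+\alpha)$; multiplication by the overall prefactor $\alpha(1+\alpha)/p$ and division by $\det M$ produces \eqref{eq:alpha_Ap}, and $\partial\alpha_{\rm B}/\partial p$ follows from the obvious symmetry $\beta\leftrightarrow 1-\beta$, ${\rm A}\leftrightarrow{\rm B}$. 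Setting $dp=0$, the corresponding numerator splits naturally into a \emph{diagonal} piece $q_{\rm A}\bigl(\tfrac52+\tfrac{T_{\rm A}}{T}\bigr)\alpha(1+\alpha)$ and a \emph{cross} piece $(1-\beta)q_{\rm A}q_{\rm B}\bigl[(\tfrac52+\tfrac{T_{\rm A}}{T})-(\tfrac52+\tfrac{T_{\rm B}}{T})\bigr]=(1-\beta)q_{\rm A}q_{\rm B}(T_{\rm A}-T_{\rm B})/T$; after multiplication by $\alpha(1+\alpha)/T$ and division by $\det M$, the two pieces give the two summands of \eqref{eq:alpha_AT}, and \eqref{eq:alpha_BT} is entirely analogous.

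The only real difficulty is bookkeeping: keeping signs straight when passing through Saha's laws, and spotting the cancellations that collapse both $\det M$ and the $2\times 2$ Cramer numerators to the compact factored forms displayed in the lemma. No deeper structural insight is required.
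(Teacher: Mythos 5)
Your proposal is correct and follows exactly the route the paper intends: the paper simply differentiates Saha's equations to get the Pfaff system \eqref{eq:pfaff_1}--\eqref{eq:pfaff_2} and invokes the inverse function theorem, and your clearing of denominators, the factorization $\det M = \alpha(1+\alpha)\left[\alpha(1+\alpha)+q\right]$, and the Cramer-rule cancellations are precisely the omitted bookkeeping (I verified all four formulas). One small caution: your identification $\alpha_{\rm A}^2\alpha^2\,\mu_{\rm A}p\,e^{T_{\rm A}/T}/T^{5/2}=q_{\rm A}\,\alpha(1+\alpha)$ tacitly uses the version of Saha's law consistent with 1-\eqref{eq:p}, namely with the factor $\bigl(\frac{1}{\alpha}+1\bigr)$ rather than the $\bigl(\frac{1}{\alpha}-1\bigr)$ printed in \eqref{eq:saha bar(alpha)}; your reading is the correct one, but it is worth flagging the discrepancy.
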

We deduce from this lemma
\begin{align*}
  \left(\frac{\partial \alpha_{\rm A}}{\partial T} \right)_{\! p}
  &= - \frac{p}{T}\left(\frac{5}{2} + \frac{T_{\rm A}}{T}\right)\left(\frac{\partial \alpha_{\rm A}}{\partial p} \right)_{\!T}
   + \frac{(1 - \beta)q_{\rm A}q_{\rm B}(T_{\rm A} - T_{\rm B})}{T^2\left[\alpha(1 + \alpha) + q\right]}\\
\left(\frac{\partial \alpha_{\rm B}}{\partial T} \right)_{\! p}
&=  - \frac{p}{T}\left(\frac{5}{2} + \frac{T_{\rm B}}{T}\right)\left(\frac{\partial \alpha_{\rm B}}{\partial p} \right)_{\!T}
  + \frac{\beta q_{\rm A}q_{\rm B}(T_{\rm B} - T_{\rm A})}{T^2\left[\alpha(1 + \alpha) + q\right]}.
\end{align*}
Thus we obtain useful lemmas:
\begin{lemma}\label{eq:d baralpha/dT}
$$%\begin{align*}
  -\frac{T}{p}\left(\frac{\partial \alpha}{\partial T} \right)_{\! p}
%
%  &= -\frac{\beta T}{p}\left(\frac{\partial \alpha_{\rm A}}{\partial T}\right)_{\! p}  - \frac{(1 - \beta)T}{p}\left(\frac{\partial \alpha_{\rm B}}{\partial T} \right)_{\! p}\\
%  &=  \beta \left(\frac{5}{2} + \frac{T_{\rm A}}{T}\right)\left(\frac{\partial \alpha_{\rm A}}{\partial p} \right)_{\!T} + (1 - \beta)\left(\frac{5}{2} + \frac{T_{\rm B}}{T}\right)\left(\frac{\partial \alpha_{\rm B}}{\partial p} \right)_{\!T}\\
  =  \frac{5}{2}\left(\frac{\partial \alpha}{\partial p} \right)_{\!T} + \frac{\beta T_{\rm A}}{T}\left(\frac{\partial \alpha_{\rm A}}{\partial p} \right)_{\!T} +  \frac{(1 - \beta)T_{\rm B}}{T}\left(\frac{\partial \alpha_{\rm B}}{\partial p} \right)_{\!T}
$$%\end{align*}
\end{lemma}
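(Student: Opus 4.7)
The plan is to obtain the identity by taking the convex combination of the two preceding displayed equations with weights $\beta$ and $1-\beta$, exploiting the fact that $\beta$ (the mole fraction of gas A) is a conserved quantity independent of the thermodynamic state, so it commutes with the partial derivatives at fixed $p$ or at fixed $T$.

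First, since $\alpha = \beta\alpha_{\rm A} + (1-\beta)\alpha_{\rm B}$ with $\beta$ constant, I would record the two obvious linearity relations
\[
\left(\frac{\partial \alpha}{\partial T}\right)_{\!p} = \beta\left(\frac{\partial \alpha_{\rm A}}{\partial T}\right)_{\!p} + (1-\beta)\left(\frac{\partial \alpha_{\rm B}}{\partial T}\right)_{\!p},\qquad \left(\frac{\partial \alpha}{\partial p}\right)_{\!T} = \beta\left(\frac{\partial \alpha_{\rm A}}{\partial p}\right)_{\!T} + (1-\beta)\left(\frac{\partial \alpha_{\rm B}}{\partial p}\right)_{\!T}.
\]
Then I multiply the first displayed equation following Lemma~\ref{lem:derivatives of alpha} (expressing $(\partial\alpha_{\rm A}/\partial T)_p$ in terms of $(\partial\alpha_{\rm A}/\partial p)_T$ plus a correction term) by $\beta$, and the second (analogous) equation by $1-\beta$, and add.

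The key observation is that the two correction terms carry opposite signs of $(T_{\rm A}-T_{\rm B})$ and acquire the same prefactor $\beta(1-\beta)q_{\rm A}q_{\rm B}/\{T^2[\alpha(1+\alpha)+q]\}$ after the weighting, so they cancel exactly. What remains is
\[
\left(\frac{\partial \alpha}{\partial T}\right)_{\!p} = -\frac{p}{T}\left[\tfrac{5}{2}\beta\left(\frac{\partial \alpha_{\rm A}}{\partial p}\right)_{\!T}+\tfrac{5}{2}(1-\beta)\left(\frac{\partial \alpha_{\rm B}}{\partial p}\right)_{\!T}+\frac{\beta T_{\rm A}}{T}\left(\frac{\partial \alpha_{\rm A}}{\partial p}\right)_{\!T}+\frac{(1-\beta) T_{\rm B}}{T}\left(\frac{\partial \alpha_{\rm B}}{\partial p}\right)_{\!T}\right],
\]
and I collect the $5/2$-coefficient using the linearity relation for $(\partial\alpha/\partial p)_T$, then multiply by $-T/p$ to arrive at the stated identity.

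There is really no conceptual difficulty here: once the correct linear combination is taken, the proof is bookkeeping. The only point requiring a line of attention is the cancellation of the two ``off-diagonal'' terms $\beta(1-\beta)q_{\rm A}q_{\rm B}(T_{\rm A}-T_{\rm B})/\{T^2[\alpha(1+\alpha)+q]\}$ with opposite sign, which is what makes the identity clean; this is a structural feature inherited from the symmetry of the Pfaff system \eqref{eq:pfaff_1}--\eqref{eq:pfaff_2}, and I would briefly highlight it as the content of the lemma.
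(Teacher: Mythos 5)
Your proposal is correct and follows essentially the same route as the paper: the two displayed identities immediately preceding the lemma are combined with weights $\beta$ and $1-\beta$, the off-diagonal correction terms proportional to $\beta(1-\beta)q_{\rm A}q_{\rm B}(T_{\rm A}-T_{\rm B})$ cancel, and the $\tfrac{5}{2}$-terms are collected via the linearity of $\alpha=\beta\alpha_{\rm A}+(1-\beta)\alpha_{\rm B}$. Nothing is missing.
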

\begin{lemma}\label{lem:eta_p}
\begin{equation}\label{eq:eta_p}
  \left(\frac{\partial \eta}{\partial p} \right)_{\! T}
%  &= - \left(\frac{\partial v}{\partial T} \right)_{\! p}\nonumber\\
  = - \frac{1 + \alpha}{p} 
+ \beta \left(\frac{5}{2} + \frac{T_{\rm A}}{T}\right)\left(\frac{\partial \alpha_{\rm A}}{\partial p} \right)_{\!T} + (1 - \beta)\left(\frac{5}{2} + \frac{T_{\rm B}}{T}\right)\left(\frac{\partial \alpha_{\rm B}}{\partial p} \right)_{\!T}
\end{equation}
\end{lemma}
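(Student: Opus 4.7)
The plan is to derive Lemma~\ref{lem:eta_p} as a purely algebraic rearrangement of two already established identities: the expression for $(\partial\eta/\partial p)_T$ in Lemma~\ref{lem:d eta/dp, d eta/dT} and the identity of Lemma~\ref{eq:d baralpha/dT}. No new physical input or differentiation is needed; I only need to eliminate the temperature derivative $(\partial\alpha/\partial T)_p$ in favor of the three pressure derivatives $(\partial\alpha/\partial p)_T$, $(\partial\alpha_{\rm A}/\partial p)_T$, $(\partial\alpha_{\rm B}/\partial p)_T$, and then expand $(\partial\alpha/\partial p)_T$ using the linearity of $\alpha$ in $\alpha_{\rm A}, \alpha_{\rm B}$.

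First I would start from
$$
\left(\frac{\partial \eta}{\partial p}\right)_{\!T}
= -\frac{1+\alpha}{p} - \frac{T}{p}\left(\frac{\partial \alpha}{\partial T}\right)_{\!p},
$$
which is precisely \eqref{eq:d eta/dp}, and then substitute the right-hand side of Lemma~\ref{eq:d baralpha/dT} for the quantity $-(T/p)(\partial\alpha/\partial T)_p$. This substitution immediately inserts the three pressure-derivative terms with coefficients $5/2$, $\beta T_{\rm A}/T$, and $(1-\beta)T_{\rm B}/T$, added to $-(1+\alpha)/p$.

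Next, I use the fact that $\alpha = \beta\alpha_{\rm A}+(1-\beta)\alpha_{\rm B}$ is linear in the individual degrees of ionization with constant weights $\beta$ and $1-\beta$, hence
$$
\left(\frac{\partial \alpha}{\partial p}\right)_{\!T} = \beta\left(\frac{\partial \alpha_{\rm A}}{\partial p}\right)_{\!T} + (1-\beta)\left(\frac{\partial \alpha_{\rm B}}{\partial p}\right)_{\!T}.
$$
Splitting the $5/2$ coefficient accordingly and regrouping terms sharing a common pressure-derivative factor merges the $5/2$ with $T_{\rm A}/T$ on the A-piece and with $T_{\rm B}/T$ on the B-piece, producing exactly \eqref{eq:eta_p}.

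There is no real obstacle; the only care needed is sign-tracking through the $-T/p$ prefactor and keeping the weights $\beta$ and $1-\beta$ correctly paired with the corresponding species. The manipulation is mechanical once Lemmas~\ref{lem:d eta/dp, d eta/dT} and~\ref{eq:d baralpha/dT} are in hand.
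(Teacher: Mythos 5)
Your proposal is correct and follows exactly the route the paper intends: substitute the identity of Lemma~\ref{eq:d baralpha/dT} for $-\frac{T}{p}\left(\frac{\partial \alpha}{\partial T}\right)_{\! p}$ in \eqref{eq:d eta/dp}, then use the linearity $\left(\frac{\partial \alpha}{\partial p}\right)_{\! T} = \beta\left(\frac{\partial \alpha_{\rm A}}{\partial p}\right)_{\! T} + (1-\beta)\left(\frac{\partial \alpha_{\rm B}}{\partial p}\right)_{\! T}$ to regroup the $\tfrac{5}{2}$ with $\tfrac{T_{\rm A}}{T}$ and $\tfrac{T_{\rm B}}{T}$ respectively. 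The paper gives no explicit proof (it presents the lemma immediately after Lemma~\ref{eq:d baralpha/dT} as a direct consequence), and your mechanical rearrangement is precisely that consequence.
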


%%%%%%%%%%%%%%%%%%%%%%%%%%%%%%%%%%%%%%%%%%%%%%%%%%%%%%%%%%%%%%%%%%%%%%%%%%%%%%%
\section{Construction of Entropy Function}\label{sec:entropy construction}
%%%%%%%%%%%%%%%%%%%%%%%%%%%%%%%%%%%%%%%%%%%%%%%%%%%%%%%%%%%%%%%%%%%%%%%%%%%%%%
We will construct the physical entropy function for the present model system. 
First we prove:
\begin{lemma}\label{prop:entropy mod T}
The dimensionless entropy $\eta = \frac{M}{R}S$ takes the form
\begin{align}
  \eta(p,T) & =  \log \alpha \nonumber\\
  &+ \beta \log \alpha_{\rm A} + (1 - \beta) \log \alpha_{\rm B}
 - 2 \beta \log (1 - \alpha_{\rm A})  - 2 (1 - \beta) \log (1 - \alpha_{\rm B})\nonumber\\
 &+ \beta \left(\frac{5}{2} + \frac{T_{\rm A}}{T}\right)\alpha_{\rm A} 
  + (1 - \beta)\left(\frac{5}{2} + \frac{T_{\rm B}}{T}\right) \alpha_{\rm B} + \mathcal{H}(T). \label{eq:eta + H}
\end{align}
where $\mathcal{H}$ is an arbitrary  function of $T.$
\end{lemma}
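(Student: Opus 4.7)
\medskip
\textbf{Proof proposal.} The plan is to verify directly that the right-hand side of \eqref{eq:eta + H}, viewed as a function of $(p,T)$, has the partial derivative in $p$ at constant $T$ prescribed by Lemma~\ref{lem:eta_p}. Any two expressions for $\eta$ on the $(p,T)$-plane whose $p$-derivatives agree differ by a function of $T$ alone, so the arbitrary additive $\mathcal{H}(T)$ in \eqref{eq:eta + H} will absorb that indeterminacy and the claim will follow without having to integrate the more intricate identity for $(\partial \eta/\partial T)_p$ from \eqref{eq:d eta/dT}.

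First I would differentiate the proposed expression (ignoring $\mathcal{H}$) in $p$ at fixed $T$. Since $\alpha = \beta\alpha_{\rm A} + (1-\beta)\alpha_{\rm B}$, collecting all terms multiplying $(\partial\alpha_{\rm A}/\partial p)_T$ gives the coefficient
\[
\frac{\beta}{\alpha} + \frac{\beta}{\alpha_{\rm A}} + \frac{2\beta}{1 - \alpha_{\rm A}} + \beta\!\left(\frac{5}{2} + \frac{T_{\rm A}}{T}\right) = \beta\!\left[\frac{1}{\alpha} + \frac{1 + \alpha_{\rm A}}{q_{\rm A}} + \frac{5}{2} + \frac{T_{\rm A}}{T}\right],
\]
and the analogous coefficient with $\mathrm{A}\leftrightarrow\mathrm{B}$ and $\beta\to 1-\beta$ for $(\partial\alpha_{\rm B}/\partial p)_T$. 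The $\left(\tfrac{5}{2}+\tfrac{T_{\rm j}}{T}\right)$ pieces reproduce exactly the corresponding two terms of \eqref{eq:eta_p}, so matching \eqref{eq:eta_p} reduces to the identity
\[
\beta\!\left[\frac{1}{\alpha} + \frac{1 + \alpha_{\rm A}}{q_{\rm A}}\right]\!\left(\frac{\partial\alpha_{\rm A}}{\partial p}\right)_{\!T} + (1-\beta)\!\left[\frac{1}{\alpha} + \frac{1+\alpha_{\rm B}}{q_{\rm B}}\right]\!\left(\frac{\partial\alpha_{\rm B}}{\partial p}\right)_{\!T} = -\frac{1+\alpha}{p}.
\]

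Substituting the explicit formulas \eqref{eq:alpha_Ap}, with $D := \alpha(1+\alpha) + q$, turns the left side into
\[
-\frac{1+\alpha}{pD}\Bigl[\beta\bigl(q_{\rm A} + \alpha(1+\alpha_{\rm A})\bigr) + (1-\beta)\bigl(q_{\rm B} + \alpha(1+\alpha_{\rm B})\bigr)\Bigr],
\]
and the bracket collapses to $q + \alpha + \alpha^2 = D$ by the two weighted-average identities $\beta q_{\rm A} + (1-\beta)q_{\rm B} = q$ and $\beta\alpha_{\rm A} + (1-\beta)\alpha_{\rm B} = \alpha$. This yields exactly $-(1+\alpha)/p$, completing the match with \eqref{eq:eta_p}.

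The only real obstacle is the bookkeeping of cross-terms: the $\log\alpha$ factor couples $(\partial\alpha_{\rm A}/\partial p)_T$ and $(\partial\alpha_{\rm B}/\partial p)_T$ via $\alpha=\beta\alpha_{\rm A}+(1-\beta)\alpha_{\rm B}$, and the cancellation rests on combining these cross-contributions with the $\log\alpha_{\rm j}$ and $-2\log(1-\alpha_{\rm j})$ pieces (which produce the $(1+\alpha_{\rm j})/q_{\rm j}$ factor) and with the linear $\alpha_{\rm j}$ terms (which absorb the $\tfrac{5}{2} + T_{\rm j}/T$ coefficients). Once the algebraic identity above is in hand, \eqref{eq:eta + H} holds with $\mathcal{H}(T)$ serving as the constant of integration in $p$, as asserted.
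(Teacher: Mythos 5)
Your proposal is correct, and it runs in the opposite direction from the paper's argument. The paper integrates \eqref{eq:eta_p} forward in $p$ at fixed $T$: it rewrites $-\frac{1+\alpha}{p}\,dp$ as an exact form in $d\alpha_{\rm A}$, $d\alpha_{\rm B}$, $d\alpha$ by combining the Pfaff equation \eqref{eq:pfaff_1} with the relation $\frac{d\alpha_{\rm A}}{q_{\rm A}}=\frac{d\alpha_{\rm B}}{q_{\rm B}}$ (valid at $dT=0$, from the compatibility condition \eqref{eq:compatibility}), integrates, and then repeats the computation with the roles of A and B exchanged, appealing to symmetry to arrive at \eqref{eq:eta + H}. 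You instead differentiate the candidate expression in $p$ and match it against Lemma~\ref{lem:eta_p}, reducing everything to the single algebraic identity
\[
\beta\Bigl[\tfrac{1}{\alpha}+\tfrac{1+\alpha_{\rm A}}{q_{\rm A}}\Bigr]\Bigl(\tfrac{\partial\alpha_{\rm A}}{\partial p}\Bigr)_{\!T}
+(1-\beta)\Bigl[\tfrac{1}{\alpha}+\tfrac{1+\alpha_{\rm B}}{q_{\rm B}}\Bigr]\Bigl(\tfrac{\partial\alpha_{\rm B}}{\partial p}\Bigr)_{\!T}=-\tfrac{1+\alpha}{p},
\]
which I have checked: with \eqref{eq:alpha_Ap} the bracket does collapse to $q+\alpha(1+\alpha)=D$ via the two weighted-average identities, and the statement that two functions on the $(p,T)$-plane with equal $p$-derivatives differ by a function of $T$ alone is exactly what the arbitrary $\mathcal{H}(T)$ is there to absorb. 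Your route buys a shorter and more symmetric computation — it needs only the explicit derivative formulas of Lemma~\ref{lem:derivatives of alpha} and avoids both the manipulation of the Pfaff form and the ``do it twice and symmetrize'' step — at the mild cost of having to guess the answer first, which here is given by the statement of the lemma.
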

%\paragraph{Computation of $\left(\frac{\partial \eta}{\partial T} \right)_{\! p}:$}
%\par
\begin{proof}
Integrating \eqref{eq:eta_p} with respect to $p,$ we have
$$
   \eta(p,T)
  = - \int \frac{1 + \alpha}{p}\,dp  
+ \beta \left(\frac{5}{2} + \frac{T_{\rm A}}{T}\right)\alpha_{\rm A} + (1 - \beta)\left(\frac{5}{2} + \frac{T_{\rm B}}{T}\right) \alpha_{\rm B}.
$$
We notice that: if $dT=0$, then it follows from \eqref{eq:pfaff_1} and Saha's equation \eqref{eq:saha bar(alpha)}
%$$%\begin{align*}
%  \frac{\alpha(1 + \alpha) + \alpha_{\rm A} (1 - \alpha_{\rm A}) \beta}{\alpha_{\rm A}^2 \alpha^2} d\alpha_{\rm A} + \frac{(1 - \alpha_{\rm A})(1 - \beta)}{\alpha_{\rm A} \alpha^2} d\alpha_{\rm B}
 %  = -\mu_{\rm A}pT^{-\frac{5}{2}}e^{\frac{T_{\rm A}}{T}}\frac{dp}{p} 
%   = - \frac{1 - \alpha_{\rm A}}{\alpha_{\rm A}}\frac{1 + \alpha}{\alpha}\frac{dp}{p}.
%$$%\end{align*}
%Thus we have
\begin{equation}\label{eq:-(1 + alpha)dp/p}%\begin{align*}
- \frac{1 + \alpha}{p} dp
%&= \frac{\alpha(1 + \alpha) + \alpha_{\rm A} (1 - \alpha_{\rm A}) \beta}{(1 - \alpha_{\rm A})\alpha_{\rm A} \alpha} d\alpha_{\rm A} + \frac{1 - \beta}{\alpha} d\alpha_{\rm B}
   =  \frac{1 + \alpha}{q_{\rm A}} d\alpha_{\rm A} + \frac{1}{\alpha} d\alpha
   = \frac{1 + \beta \alpha_{\rm A}}{q_{\rm A}} d\alpha_{\rm A} + \frac{(1 - \beta)\alpha_{\rm B}}{q_{\rm A}} d\alpha_{\rm A}+ \frac{1}{\alpha} d\alpha.
\end{equation}%\end{align*}
It follows from the compatibility condition \eqref{eq:compatibility}
$$
   \log \left(\frac{1}{\alpha_{\rm A}} - 1\right)  -\frac{T_{\rm A}}{T} - \log \mu_{\rm A}
    = \log \left(\frac{1}{\alpha_{\rm B}} - 1\right) - \frac{T_{\rm B}}{T}- \log\mu_{\rm B}.
$$
Hence
$$%\begin{equation}\label{eq:compatibility_2}
   \frac{d \alpha_{\rm A}}{q_{\rm A}} - \frac{T_{\rm A}}{T^2} \,dT
=   \frac{d \alpha_{\rm B}}{q_{\rm B}} - \frac{T_{\rm B}}{T^2} \,dT.
$$%\end{equation}
If $dT = 0,$  then $\frac{d \alpha_{\rm A}}{q_{\rm A}} =   \frac{d \alpha_{\rm B}}{q_{\rm B}} $ and \eqref{eq:-(1 + alpha)dp/p} is found  to be
$$
   - \frac{1 + \alpha}{p} dp 
= \frac{1 + \beta \alpha_{\rm A}}{q_{\rm A}} d\alpha_{\rm A} + \frac{(1 - \beta)\alpha_{\rm B}}{q_{\rm B}} d\alpha_{\rm B}+ \frac{1}{\alpha} d\alpha.
$$
By integrating the above expression
$$%\begin{multline*}
- \int \frac{1 + \alpha}{p}\,dp = \log \alpha_{\rm A} - (1 + \beta)\log (1 - \alpha_{\rm A}) 
- (1 - \beta)\log (1 - \alpha_{\rm B}) + \log \alpha + \mathcal{H}(T)
$$%\end{multline*}
\par
In a similar manner
%$$
%%  \frac{(1 - \alpha_{\rm B})\beta}{\alpha_{\rm B} \alpha^2} d\alpha_{\rm A} + \frac{\alpha(1 + \alpha) + \alpha_{\rm B} (1 - \alpha_{\rm B})(1 -  \beta)}{\alpha_{\rm B}^2 \alpha^2} d\alpha_{\rm B} 
%   = - \frac{1 - \alpha_{\rm B}}{\alpha_{\rm B}}\frac{1 + \alpha}{\alpha}\frac{dp}{p}.
%$$
%and
\begin{align*}
   - \frac{1 + \alpha}{p} dp 
&= \frac{1 + (1 -\beta) \alpha_{\rm B}}{q_{\rm B}} d\alpha_{\rm B} + \frac{\beta \alpha_{\rm A}}{q_{\rm B}} d\alpha_{\rm B}+ \frac{1}{\alpha} d\alpha\\
&= \frac{1 + (1 -\beta) \alpha_{\rm B}}{q_{\rm B}} d\alpha_{\rm B} + \frac{\beta \alpha_{\rm A}}{q_{\rm A}} d\alpha_{\rm A}+ \frac{1}{\alpha} d\alpha
\end{align*}
and hence
$$%\begin{multline*}
 - \int \frac{1 + \alpha}{p}\,dp = \log \alpha_{\rm B} - (2 -  \beta)\log (1 - \alpha_{\rm B}) 
- \beta\log (1 - \alpha_{\rm A}) + \log \alpha+ \mathcal{H}(T).
$$%\end{multline*}
For symmetry, we have \eqref{eq:eta + H}.
%$$%\begin{multline*}
%   - \int \frac{1 + \alpha}{p}\, dp 
%= \log \alpha + \beta \log \alpha_{\rm A} + (1 - \beta) \log \alpha_{\rm B}
% - 2 \beta \log (1 - \alpha_{\rm A})  - 2 (1 - \beta) \log (1 - \alpha_{\rm B}) + \mathcal{H}(T). 
%%$$%\end{multline*}
%Thus we have proved the lemma .
\end{proof}
\par
Next we will determine the form of $\mathcal{H}(T),$ and then obtain the entropy function up to constant.
\begin{theorem}\label{thm:entropy}
  The dimensionless entropy function $\eta(p,T)$ takes the form
      \begin{align*}
        &\log \left[\beta \alpha_{\rm A} + (1 - \beta)\alpha_{\rm B}\right] \\
        &+ \beta \left[\log \alpha_{\rm A} - 2 \log (1 - \alpha_{\rm A})  + \frac{T_{\rm A}}{T}\right]
      + (1 - \beta) \left[\log \alpha_{\rm B} - 2 \log (1 - \alpha_{\rm B})  + \frac{T_{\rm B}}{T}\right]\\
&+ \beta \left(\frac{5}{2} + \frac{T_{\rm A}}{T}\right)\alpha_{\rm A} + (1 - \beta)\left(\frac{5}{2} + \frac{T_{\rm B}}{T}\right) \alpha_{\rm B}
 + \mathrm{const.}    
    \end{align*}
%Equivalently
%      \begin{align*}
%      & \log \left[\beta \alpha_{\rm A} + (1 - \beta)\alpha_{\rm B}\right] 
%- \beta \log \alpha_{\rm A} - (1 - \beta)\log \alpha_{\rm B}
% - \beta \left[2 \log \frac{1 - \alpha_{\rm A}}{\alpha_{\rm A}}  - \frac{T_{\rm A}}{T}\right]\\
% &     - (1 - \beta) \left[2 \log \frac{1 - \alpha_{\rm B}}{\alpha_{\rm B}}  - \frac{T_{\rm B}}{T}\right]
%+ \beta \left(\frac{5}{2} + \frac{T_{\rm A}}{T}\right)\alpha_{\rm A} + (1 - \beta)\left(\frac{5}{2} + \frac{T_{\rm B}}{T}\right) \alpha_{\rm B}
% + \text{\rm const.}    
%    \end{align*}
   \end{theorem}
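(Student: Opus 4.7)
The plan is to use Lemma~\ref{lem:d eta/dp, d eta/dT} to pin down the function $\mathcal{H}(T)$ left undetermined in Lemma~\ref{prop:entropy mod T}. Writing that lemma's formula as $\eta(p,T)=\Phi(p,T)+\mathcal{H}(T)$, I would match partial derivatives in $T$ at fixed $p$: compute $(\partial\Phi/\partial T)_p$, subtract it from the known expression \eqref{eq:d eta/dT} for $(\partial\eta/\partial T)_p$, verify that the difference depends only on $T$, and integrate.

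Writing $\alpha'$ for $(\partial\alpha/\partial T)_p$ and similarly for $\alpha_{\rm A}',\alpha_{\rm B}'$, the logarithmic terms of $\Phi$ combine via $\frac{1}{\alpha_{\rm A}}+\frac{2}{1-\alpha_{\rm A}}=\frac{1+\alpha_{\rm A}}{q_{\rm A}}$ (and similarly for B), while the explicit $T_{\rm A}/T$, $T_{\rm B}/T$ factors contribute $-(\beta T_{\rm A}\alpha_{\rm A}+(1-\beta)T_{\rm B}\alpha_{\rm B})/T^2$. After subtracting from \eqref{eq:d eta/dT} the two quantities $\beta(5/2+T_{\rm A}/T)\alpha_{\rm A}'$ and $(1-\beta)(5/2+T_{\rm B}/T)\alpha_{\rm B}'$ cancel directly, leaving
\[
\mathcal{H}'(T)=\frac{5(1+\alpha)}{2T}-\frac{\alpha'}{\alpha}-\frac{\beta(1+\alpha_{\rm A})\alpha_{\rm A}'}{q_{\rm A}}-\frac{(1-\beta)(1+\alpha_{\rm B})\alpha_{\rm B}'}{q_{\rm B}}+\frac{\beta T_{\rm A}\alpha_{\rm A}+(1-\beta)T_{\rm B}\alpha_{\rm B}}{T^2}.
\]

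The crucial step is to eliminate the residual $p$-dependent contributions via the Pfaff equations \eqref{eq:pfaff_1}--\eqref{eq:pfaff_2}. Using Saha's law to simplify, \eqref{eq:pfaff_1} can be rewritten as $d\alpha_{\rm A}/q_{\rm A}+d\alpha/[\alpha(1+\alpha)]=-dp/p+(5/2+T_{\rm A}/T)dT/T$, with an analogous identity for B. Setting $dp=0$ yields
\[
\frac{\alpha_{\rm A}'}{q_{\rm A}}+\frac{\alpha'}{\alpha(1+\alpha)}=\frac{T_{\rm A}}{T^2}+\frac{5}{2T},\qquad \frac{\alpha_{\rm B}'}{q_{\rm B}}+\frac{\alpha'}{\alpha(1+\alpha)}=\frac{T_{\rm B}}{T^2}+\frac{5}{2T}.
\]
Multiplying these by $\beta(1+\alpha_{\rm A})$ and $(1-\beta)(1+\alpha_{\rm B})$ respectively and summing, the partition identity $\beta(1+\alpha_{\rm A})+(1-\beta)(1+\alpha_{\rm B})=1+\alpha$ collapses the two $\alpha'/[\alpha(1+\alpha)]$ terms into a single $\alpha'/\alpha$. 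Substituting back, the $5(1+\alpha)/(2T)$ and $\alpha'/\alpha$ contributions cancel and the temperature-weighted $\alpha$ combinations simplify to give
\[
\mathcal{H}'(T)=-\frac{\beta T_{\rm A}+(1-\beta)T_{\rm B}}{T^2},
\]
manifestly a function of $T$ alone. Integration yields $\mathcal{H}(T)=\beta T_{\rm A}/T+(1-\beta)T_{\rm B}/T+\mathrm{const}$, and regrouping terms of Lemma~\ref{prop:entropy mod T} by gas (A or B) reproduces the theorem's expression.

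The main obstacle is this final cancellation. Both the multipliers $\beta(1+\alpha_{\rm A})$ and $(1-\beta)(1+\alpha_{\rm B})$ and the partition identity $\beta(1+\alpha_{\rm A})+(1-\beta)(1+\alpha_{\rm B})=1+\alpha$ are essential: any other weighting leaves a $p$-dependent residue, for instance $2\alpha'/(1-\alpha)$, and destroys the $T$-only character of $\mathcal{H}'$.
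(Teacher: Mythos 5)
Your proposal is correct, and its skeleton is the same as the paper's: both determine $\mathcal{H}(T)$ in Lemma~\ref{prop:entropy mod T} by computing $(\partial\eta/\partial T)_p$ from that lemma's expression and matching it against \eqref{eq:d eta/dT} (reading $T_{\rm B}$ in the last coefficient there, as the paper itself implicitly does). Where you genuinely diverge is in how the central cancellation is carried out. The paper substitutes the explicit solved derivatives of Lemma~\ref{lem:derivatives of alpha} and Lemma~\ref{eq:d baralpha/dT} and grinds through the $\Sigma$-algebra ("long and tedious"), isolating by hand the terms free of $T_{\rm A},T_{\rm B}$ and those containing them. You instead return to the logarithmic differentials of Saha's equations \eqref{eq:saha bar(alpha)} at $dp=0$, which yield the two compact identities
\[
\frac{\alpha_{\rm A}'}{q_{\rm A}}+\frac{\alpha'}{\alpha(1+\alpha)}=\frac{T_{\rm A}}{T^2}+\frac{5}{2T},\qquad
\frac{\alpha_{\rm B}'}{q_{\rm B}}+\frac{\alpha'}{\alpha(1+\alpha)}=\frac{T_{\rm B}}{T^2}+\frac{5}{2T},
\]
(these are easily checked against Lemma~\ref{lem:derivatives of alpha}, since the $\Omega$-type terms cancel), and then combine them with the weights $\beta(1+\alpha_{\rm A})$ and $(1-\beta)(1+\alpha_{\rm B})$ so that the partition identity $\beta(1+\alpha_{\rm A})+(1-\beta)(1+\alpha_{\rm B})=1+\alpha$ collapses the two $\alpha'/[\alpha(1+\alpha)]$ contributions into $\alpha'/\alpha$. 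This buys a substantially shorter and more structural verification that $\mathcal{H}'$ depends on $T$ alone, and it makes transparent \emph{why} the weights must be exactly these; the paper's route, though heavier, has the side benefit of reusing machinery ($\Sigma$, $\Phi$, $\Omega$) that it needs anyway in Sections~\ref{sec:Equations Gas Dynamics}--\ref{sec:convexity}. Both arrive at $\mathcal{H}'(T)=-[\beta T_{\rm A}+(1-\beta)T_{\rm B}]/T^2$ and hence the stated form of $\eta$.
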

\begin{proof}
Differentiating $\eta$ with respect to $T,$ we have
% by \eqref{eq:d eta/dT}  
\begin{multline*}
  \left(\frac{\partial \eta}{\partial T} \right)_{\! p} 
=  \frac{1}{\alpha}\left(\frac{\partial \alpha}{\partial T} \right)_{\! p} - \frac{\beta}{\alpha_{\rm A}}\left(\frac{\partial \alpha_{\rm A}}{\partial T} \right)_{\! p} - \frac{1 - \beta}{\alpha_{\rm B}}\left(\frac{\partial \alpha_{\rm B}}{\partial T} \right)_{\! p}\\
 + \frac{2\beta}{q_{\rm A}}\left(\frac{\partial \alpha_{\rm A}}{\partial T} \right)_{\! p} + \frac{2(1 - \beta)}{q_{\rm B}}\left(\frac{\partial \alpha_{\rm B}}{\partial T} \right)_{\! p}
 - \frac{\beta  \alpha_{\rm A}T_{\rm A}}{T^2} - \frac{(1 - \beta) \alpha_{\rm B}T_{\rm B}}{T^2}\\
 + \beta \left(\frac{5}{2} + \frac{T_{\rm A}}{T}\right)\left(\frac{\partial \alpha_{\rm A}}{\partial T} \right)_{\! p}
 + (1 - \beta)\left(\frac{5}{2} + \frac{T_{\rm B}}{T}\right) \left(\frac{\partial \alpha_{\rm B}}{\partial T} \right)_{\! p} + \mathcal{H}'(T).
\end{multline*}
Using the formulas in Lemma \ref{lem:derivatives of alpha} and \ref{eq:d baralpha/dT} and  setting
$  \Sigma = \alpha(1 + \alpha) + q ,$ we find that
\begin{align*}
 & \frac{1}{\alpha}\left(\frac{\partial \alpha}{\partial T} \right)_{\! p}
  - \left[\frac{\beta}{\alpha_{\rm A}}\left(\frac{\partial \alpha_{\rm A}}{\partial T} \right)_{\! p}  \frac{1 - \beta}{\alpha_{\rm B}}\left(\frac{\partial \alpha_{\rm B}}{\partial T} \right)_{\! p}\right]
  +  \frac{2\beta}{q_{\rm A}}\left(\frac{\partial \alpha_{\rm A}}{\partial T} \right)_{\! p} + \frac{2(1 - \beta)}{q_{\rm B}}\left(\frac{\partial \alpha_{\rm B}}{\partial T} \right)_{\! p}\\
&= \frac{(1 + \alpha)\beta(1 - \beta)}{\Sigma T}\left[(1 - \alpha_{\rm A})\left(\frac{5}{2} + \frac{T_{\rm A}}{T}\right) - (1 - \alpha_{\rm B})\left(\frac{5}{2} + \frac{T_{\rm B}}{T}\right)\right](\alpha_{\rm A} - \alpha_{\rm B})\\
&  + \frac{\beta(1 - \beta)(1 + \alpha)(1 - \alpha_{\rm A})(1 - \alpha_{\rm B})(\alpha_{\rm A} - \alpha_{\rm B})(T_{\rm A} - T_{\rm B})}{\Sigma T^2} \\
&  + \frac{2\alpha(1 + \alpha)}{\Sigma T}\left[\frac{5}{2} + \frac{\beta T_{\rm A}}{T} + \frac{(1 - \beta)T_{\rm B}}{T}\right]
      - \frac{2\beta(1 - \beta)(1 + \alpha)\left[q_{\rm A} - q_{\rm B}\right](T_{\rm A} - T_{\rm B})}{\Sigma T^2}.
\end{align*}
\par
The terms involving neither $T_{\rm A}$ nor $T_{\rm B}$ are 
$$%\begin{align*}
 \frac{5}{2}\left\{\frac{(1 + \alpha)\beta(1 - \beta)}{\Sigma T}\left[(1 - \alpha_{\rm A}) - (1 - \alpha_{\rm B})\right](\alpha_{\rm A} - \alpha_{\rm B})
  + \frac{2\alpha(1 + \alpha)}{\Sigma T} \right\}
  %  &= \frac{5(1 + \alpha)^2}{2\Sigma T}\left[2\alpha - \beta(1 - \beta)(\alpha_{\rm A} - \alpha_{\rm B})^2\right]
  = \frac{5(1 + \alpha)}{2T}
$$%\end{align*} 
and  the terms involving $T_{\rm A}$ and $T_{\rm B}$ are
$$
   \frac{\beta T_{\rm A}}{T^2} + \frac{(1 - \beta)T_{\rm B}}{T^2} +  \frac{\beta \alpha_{\rm A}T_{\rm A}}{T^2} + \frac{(1 - \beta)\alpha_{\rm B}T_{\rm A}}{T^2} 
$$
\par
Consequently, we have
\begin{align*}
  \left(\frac{\partial \eta}{\partial T} \right)_{\! p}
& =  \frac{5(1 + \alpha)}{2T} + \frac{\beta T_{\rm A}}{T^2} + \frac{(1 - \beta)T_{\rm B}}{T^2} \\
&  + \beta \left(\frac{5}{2} + \frac{T_{\rm A}}{T}\right)\left(\frac{\partial \alpha_{\rm A}}{\partial T} \right)_{\! p} + (1 - \beta)\left(\frac{5}{2} + \frac{T_{\rm B}}{T}\right) \left(\frac{\partial \alpha_{\rm B}}{\partial T} \right)_{\! p} + \mathcal{H}'(T)
\end{align*}
which has to be equal to \eqref{eq:d eta/dT}.
%$$
%  \left(\frac{\partial \eta}{\partial T} \right)_{\! p}
%  =  \frac{5(1 + \alpha)}{2T}
%  + \beta \left(\frac{5}{2} + \frac{T_{\rm A}}{T}\right)\left(\frac{\partial \alpha_{\rm A}}{\partial T} \right)_{\! p} + (1 - \beta)\left(\frac{5}{2} + \frac{T_{\rm B}}{T}\right) \left(\frac{\partial \alpha_{\rm B}}{\partial T} \right)_{\! p}.
%$$
Hence
$
  \mathcal{H}'(T) = - \frac{\beta T_{\rm A}}{T^2} - \frac{(1 - \beta)T_{\rm B}}{T^2}
$
  and we obtain 
  $$
  \mathcal{H} = \frac{\beta T_{\rm A}}{T} + \frac{(1 - \beta)T_{\rm B}}{T}.
  $$
\end{proof}
%\begin{remark}\label{nb:entropy}
%  $$
%   \log \frac{1 - \alpha_{\rm A}}{\alpha_{\rm A}}  - \frac{T_{\rm A}}{T} =  \log \frac{1 - \alpha_{\rm B}}{\alpha_{\rm B}}  - \frac{T_{\rm B}}{T} + \text{\rm const.}
%  $$
%\end{remark}
 
%%%%%%%%%%%%%%%%%%%%%%%%%%%%%%%%%%%%%%%%%%%%%%%%%%%%%%%%%%%%%%%%%%%%%%%%%%%%%%%%
\section{Equations of  Ionized Gas Dynamics}\label{sec:Equations Gas Dynamics}
%%%%%%%%%%%%%%%%%%%%%%%%%%%%%%%%%%%%%%%%%%%%%%%%%%%%%%%%%%%%%%%%%%%%%%%%%%%%%%%%
%\setcounter{equation}{0}
%In this section we study thermodynamic properties of the system of equations \eqref{eq:system}.
% where $p$ and $e$ are defined by \eqref{eq:pressure-ion}, \eqref{eq:e}, respectively, with $\alpha$ as in \eqref{eq:deg-ionization}.

For studying thermodynamic properties of the system \eqref{eq:system}, the Lagrangian equations \cite{Smoller} are convenient
\begin{equation}\label{eq:Lagrangian}
\left\{
  \begin{array}{l}
   v_t - u_\xi = 0,\\
   u_t + p_\xi = 0,\\
   \left(e + \frac{1}{2}u^2\right)_t + (pu)_\xi = 0
  \end{array}
\right.
\end{equation}
where $p:$ pressure, $v:$ specific volume, $e:$ specific internal energy and $u:$ flow velocity.
For $C^1$ solutions, equation $\eqref{eq:Lagrangian}_3$ can be written as
\footnote{
We notice that equation $\eta_t = 0$ is equivalent  to $\left(\rho S\right)_t + \left(\rho uS\right)_x = 0$ in Eulerian coordinates.
}
$%\begin{equation}\label{eq:entropy conservation}
  \eta_{t} = 0.
$%\end{equation}
\paragraph{Characteristic speeds and vector fields:} 
For  the set of state variables $(p,u,\eta)$ we have $v_t=v_pp_t$ and equation \eqref{eq:Lagrangian} becomes
$
%  \left\{
%  \begin{array}{r}
   p_t-\frac{1}{v_p}u_\xi = 0,\, 
   u_t + p_{\xi} = 0,\, 
   \eta_t  = 0.
%  \end{array}
%\right.
%p_t-\frac{1}{v_p}u_\xi=0, \quad u_t + p_\xi = 0, \quad S_t = 0. 
$
%For these state variables, 
The {\it Lagrangian} characteristic speeds are $\lambda_{\pm} = \pm \frac{1}{\sqrt{-v_p}}$ and  $\lambda_0 = 0,$ with corresponding characteristic vectors $\boldsymbol{r}_{\pm}
%= \left[\pm 1,\sqrt{-v_p},0\right]^T
= \begin{bmatrix}\pm 1\\ \sqrt{-v_p}\\ 0 \end{bmatrix}$ and $\boldsymbol{r}_0 =
%\left[0,0,1\right]^T
 \begin{bmatrix}0\\ 0\\ 1 \end{bmatrix}$. We note that characteristic speeds and characteristic vectors are all thermodynamic quantities. 
% We call, for simplicity, $\lambda_{\pm},$ respectively, $\pm$-characteristic speeds and $\boldsymbol{r}_{\pm},$ respectively, $\pm$-characteristic directions.
\par
For further computation, we adopt  $(p, u, T)$ as a set of  state variables. Since $v_t - u_{\xi} = v_pp_t + v_TT_t - u_{\xi} = 0$ and $\eta_t = \eta_p p_t + \eta_T T_t = 0$, we can  write system \eqref{eq:Lagrangian} in the form
\begin{equation}\label{eq:equations p, T}
  \left\{
  \begin{array}{r}
   p_t  - \frac{\eta_T}{v_p\eta_T - v_T\eta_p}u_{\xi} = 0,\\
   u_t + p_{\xi} = 0,\\
   T_t  + \frac{\eta_p}{v_p\eta_T - v_T\eta_p}u_{\xi} = 0.
  \end{array}
\right.
\end{equation}
%By introducing the notation
%\begin{equation}\label{eq:c}
%\lambda
%= \lambda(p,T) 
%= \sqrt{- \frac{\eta_T}{v_p\eta_T - v_T\eta_p}},
%\end{equation}
Characteristic speeds and vector fields are computed as the following.
\begin{lemma}\label{l:eigen}
The characteristic speeds and the corresponding characteristic vector fields of system \eqref{eq:equations p, T} are
$$
\lambda_{\pm}
=  \pm\sqrt{-\frac{\eta_T}{v_p\eta_T - v_T\eta_p}},\quad \lambda_0=0, \quad
\boldsymbol{r}_{\pm} 
= \left[
  \begin{array}{c}
    \pm 1  \\
    \frac{1}{\sqrt{-\frac{\eta_T}{v_p\eta_T - v_T\eta_p}}}\\
    \mp \frac{\eta_p}{\eta_T}
  \end{array}\right], \quad
  \boldsymbol{r}_0 = \left[
  \begin{array}{c}
    0  \\
    0\\
    1
  \end{array}\right]
%\pm \lambda,\qquad \lambda_0=0,
$$
The eigenvalue $\lambda_0$ is linearly degenerate; a pair of Riemann invariants for $\lambda_0$  is $\{u,p \}$. A Riemann invariant for both $\lambda_\pm$ is $\eta$.
%\[
%\lambda_-: \{\eta, u + H\}, \qquad \lambda_0 : \{u,p \},\qquad \lambda_+: \{\eta, u-H\}\longrightarrow \ \text{\bf Wrong?}.
%\]
The characteristic speeds of system {\em 1-\eqref{eq:system}} are then $u + \frac{1}{\rho}\lambda_{\pm}$ and $u$.
\end{lemma}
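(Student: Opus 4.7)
The plan is to treat \eqref{eq:equations p, T} as a $3\times 3$ quasilinear system $U_t + A U_\xi = 0$ with $U = (p,u,T)^\top$ and
$$
A = \begin{bmatrix} 0 & -\eta_T/D & 0 \\ 1 & 0 & 0 \\ 0 & \eta_p/D & 0 \end{bmatrix}, \qquad D := v_p\eta_T - v_T\eta_p,
$$
and then to diagonalise $A$ by direct computation. Because the third column of $A$ is zero, expansion of $\det(A-\lambda I)$ along that column immediately extracts a factor $-\lambda$, leaving the $2\times 2$ determinant $\lambda^2 + \eta_T/D$. The three eigenvalues are therefore $\lambda_0 = 0$ and $\lambda_\pm = \pm\sqrt{-\eta_T/D}$, consistent with the strict hyperbolicity assumed in the paper (equivalently $\eta_T/D < 0$).

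Next I read off the eigenvectors. For $\lambda_0$ the rows of $A$ force $r_u = 0$ and $r_p = 0$ (using $\eta_T\ne 0$), so $\boldsymbol{r}_0 = (0,0,1)^\top$. For $\lambda_\pm$ I normalise $r_p = \pm 1$; the second row $r_p = \lambda_\pm r_u$ then yields $r_u = 1/\sqrt{-\eta_T/D}$ since the signs in $r_p$ and $\lambda_\pm$ cancel. The third row $\eta_p r_u/D = \lambda_\pm r_T$, combined with $\lambda_\pm^2 = -\eta_T/D$, gives $r_T = \mp \eta_p/\eta_T$ after a one-line simplification; this is the only computation that is not entirely mechanical. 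The degeneracy structure then follows at once: since $\lambda_0\equiv 0$ as a function of the state, $\nabla\lambda_0\cdot\boldsymbol{r}_0 = 0$, so the $0$-field is linearly degenerate; the PDE $\nabla\phi\cdot \boldsymbol{r}_0 = \phi_T = 0$ admits $\{u,p\}$ as an independent pair of solutions; and because $\eta = \eta(p,T)$ is independent of $u$, one has $\nabla\eta\cdot \boldsymbol{r}_\pm = (\pm 1)\eta_p + 0\cdot r_u + (\mp \eta_p/\eta_T)\eta_T = 0$, so $\eta$ serves as a Riemann invariant for both $\pm$-fields.

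To pass to the Eulerian system 1-\eqref{eq:system} I use the standard change of independent variable $\xi_x = \rho$, $\xi_t = -\rho u$, which converts a characteristic travelling with Lagrangian speed $\lambda^L$ into one with Eulerian speed $u + \lambda^L/\rho$; applying this to $\lambda_\pm$ and $\lambda_0$ yields the stated Eulerian expressions. I do not foresee any substantive obstacle: the lemma reduces to a small linear-algebra calculation together with an identification of null directions of the relevant gradients. Its purpose is organisational, setting up the convexity analysis of Section \ref{sec:convexity}, where the non-trivial sign and monotonicity properties of $\lambda_\pm$ come into play.
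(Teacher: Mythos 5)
Your proposal is correct and follows exactly the computation the paper implicitly relies on: the paper states the lemma without proof as a routine diagonalisation of the coefficient matrix of \eqref{eq:equations p, T}, and your matrix $A$, the factorisation $\det(A-\lambda I)=-\lambda(\lambda^2+\eta_T/D)$, the eigenvector normalisations, and the Lagrangian-to-Eulerian speed conversion $u+\lambda^L/\rho$ all match the stated formulas. The verification of the Riemann invariants ($\phi_T=0$ for the $0$-field, $\nabla\eta\cdot\boldsymbol{r}_\pm=\pm\eta_p\mp\eta_p=0$ for the $\pm$-fields) is likewise the intended argument.
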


\paragraph{Computation of $\eta_p, \eta_T, v_p, v_T$:}
Fore the sake of brevity, let us introduce the quantities: $q =  \beta q_{\rm A} + (1 - \beta) q_{\rm B}, $
\begin{align*}
  \Sigma &= \alpha(1 + \alpha)  + \beta q_{\rm A} + (1 - \beta) q_{\rm B}  = \alpha(1 + \alpha)  + q \\
  \Phi &= \beta q_{\rm A}\left(\frac{5}{2} + \frac{T_{\rm A}}{T}\right)^{\! 2} + (1 - \beta)q_{\rm B}\left(\frac{5}{2} + \frac{T_{\rm B}}{T}\right)^{\! 2},\\
  \Psi &= \beta q_{\rm A}\left(\frac{15}{4} + \frac{3T_{\rm A}}{T} + \frac{T_{\rm A}^2}{T^2}\right) + (1 - \beta) q_{\rm B}\left(\frac{15}{4} + \frac{3T_{\rm B}}{T} + \frac{T_{\rm B}^2}{T^2}\right),\\
  \Omega &=  \beta(1 - \beta)q_{\rm A}q_{\rm B}\left(\frac{T_{\rm A}}{T} - \frac{T_{\rm B}}{T}\right)^2.
  \end{align*}
Substituting \eqref{eq:alpha_Ap}, \eqref{eq:alpha_AT}, \eqref{eq:alpha_BT} into 
%Lemma \ref{lem:d eta/dp, d eta/dT}
  \eqref{eq:d eta/dT} and 
 %Lemma \ref{lem:eta_p} 
  \eqref{eq:eta_p},  we obtain
\begin{align}
  \left(\frac{\partial \eta}{\partial p} \right)_{\! T}
  &= - \frac{1 + \alpha}{p} -  \frac{\alpha(1 + \alpha)\left[\beta\left(\frac{5}{2} + \frac{T_{\rm A}}{T}\right) q_{\rm A} + (1 - \beta)\left(\frac{5}{2} + \frac{T_{\rm B}}{T}\right)q_{\rm B}\right]}{p\Sigma}\label{eq:eta_p alpha}\\
 % &= - \frac{1 + \alpha}{p}\left\{1 +  \frac{\alpha\left[\beta\left(\frac{5}{2} + \frac{T_{\rm A}}{T}\right) q_{\rm A} + (1 - \beta)\left(\frac{5}{2} + \frac{T_{\rm B}}{T}\right)q_{\rm B}\right]}{\Sigma}\right\}\label{eq:eta_p alpha}\\
\left(\frac{\partial \eta}{\partial T} \right)_{\! p}
  &= \frac{5}{2T}(1 + \alpha)+  \frac{\alpha(1 + \alpha)\Phi + \Omega}{T\Sigma}.\label{eq:eta_T alpha}
%   &= \frac{1 + \alpha}{T}\left(\frac{5}{2} + 
%+  \frac{\alpha\Phi}{\Sigma}\right) + \frac{\Omega}{T\Sigma}.\label{eq:eta_T alpha}
\end{align}
\par
Since $v = \frac{a^2}{p}T(1 + \alpha)\, (a^2 = \frac{R}{\bar{M}}), $ we have by applying Lemma \ref{lem:derivatives of alpha}
$$
a^{-2}\left(\frac{\partial v}{\partial p} \right)_{\! T}
 = -\frac{T(1 + \alpha)}{p^2}
 -    \frac{T\alpha(1 + \alpha)q}{p^2\Sigma}, \quad %\label{eq:v_p}\\
     a^{-2}\left(\frac{\partial v}{\partial T} \right)_{\! p}
     = -\left(\frac{\partial \eta}{\partial p} \right)_{\! T}.
$$
\paragraph{Computation of $\lambda_{\pm}:$}
Let us first compute $\frac{1}{a^2}(v_p\eta_T - v_T\eta_p).$
That is: $\left( \frac{1 + \alpha}{p}\right)^{\! 2}$ times of
\begin{multline*}
  -\left(1 + \frac{\alpha q}{\Sigma}\right) \left[\frac{5}{2} + \frac{\alpha \Phi }{\Sigma} + \frac{\Omega}{(1 + \alpha)\Sigma}\right]\\
  +  \left\{1 +  \frac{\alpha\left[\beta q_{\rm A}\left(\frac{5}{2} + \frac{T_{\rm A}}{T}\right)  + (1 - \beta)q_{\rm B}\left(\frac{5}{2} + \frac{T_{\rm B}}{T}\right)\right]}
  {\Sigma}\right\}^{\! 2}
  = -\left(\frac{3}{2} + \frac{\alpha \Psi + \Omega}{\Sigma} \right).
\end{multline*}
Thus we have together with \eqref{eq:eta_T alpha}
%%\begin{lemma}\label{lem:v_p eta_T - v_T eta_p}
%  \begin{multline*}
%    -(v_p\eta_T - v_T\eta_p)  = \frac{a^2(1 + \alpha)^2}{p^2}\\
%    \times \left\{\frac{3}{2} + \frac{\alpha\left[\beta q_{\rm A}\left(\frac{15}{4} + \frac{3T_{\rm A}}{T} + \frac{T_{\rm A}^2}{T^2}\right) + (1 - \beta)q_{\rm B}\left(\frac{15}{4} + \frac{3T_{\rm A}}{T} + \frac{T_{\rm A}^2}{T^2}\right)\right] + \beta(1 - \beta)q_{\rm A}q_{\rm B}\left(\frac{T_{\rm A}}{T} - \frac{T_{\rm B}}{T}\right)^{\! 2}}{\left[\alpha (1 + \alpha) + q\right]}\right\}
%  \end{multline*}
%%
% $$% \begin{multline*}
% \eta_T
%   = \frac{5}{2T}(1 + \alpha)
% + \frac{\alpha(1 + \alpha)\left[\beta q_{\rm A}\left(\frac{5}{2} + \frac{T_{\rm A}}{T}\right)^{\! 2} + (1 - \beta)q_{\rm B}\left(\frac{5}{2} + \frac{T_{\rm B}}{T}\right)^{\! 2} \right]
% + \beta(1 - \beta)q_{\rm A}q_{\rm B}\left(\frac{T_{\rm A}}{T} - \frac{T_{\rm B}}{T}\right)^2}
%{\left[\alpha (1 + \alpha) + q\right]T}.
%$$%  \end{multline*}
%
%\end{lemma}
\begin{theorem}\label{thm:characteristic speeds}
  The characteristic speeds $\lambda_{\pm}$ take the forms $\lambda_{\pm} = \pm \lambda$ where 
  \begin{equation}\label{eq:characteristic speed}
    \lambda
=  \frac{p}{a\sqrt{T}(1 + \alpha)} \sqrt{
    \frac{\frac{5}{2}(1 + \alpha)\Sigma   + \alpha(1 + \alpha) \Phi  + \Omega}
      {\frac{3}{2}\Sigma  + \alpha\Psi + \Omega}
}.
\end{equation}%  \end{align}
\end{theorem}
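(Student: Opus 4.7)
The strategy is to feed the explicit formulas for the four partial derivatives $\eta_p$, $\eta_T$, $v_p$, $v_T$ assembled just before the theorem into Lemma \ref{l:eigen}, which gives
$$
\lambda^{2} \;=\; -\frac{\eta_T}{v_p\eta_T - v_T\eta_p} \;=\; \frac{\eta_T}{v_T\eta_p - v_p\eta_T}.
$$
The numerator is already in closed form via \eqref{eq:eta_T alpha}, namely $\eta_T = \frac{1}{T\Sigma}\bigl[\tfrac{5}{2}(1+\alpha)\Sigma + \alpha(1+\alpha)\Phi + \Omega\bigr]$, so the real work lies in simplifying the denominator.

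For the denominator I would first invoke Maxwell's relation (Proposition \ref{prop:Maxwell}) in the form $a^{-2}(\partial v/\partial T)_p = -(\partial \eta/\partial p)_T$, i.e.\ $v_T = -a^{2}\,\eta_p$, which gives
$$
v_p\eta_T - v_T\eta_p \;=\; v_p\eta_T + a^{2}\,\eta_p^{2}.
$$
Substituting the stated expressions for $a^{-2}v_p$ and for $\eta_p,\eta_T$ in terms of $\alpha,q,\Sigma,\Phi,\Omega$ together with the weighted bracket
$$
\mathcal{A} \;:=\; \beta q_{\rm A}\bigl(\tfrac{5}{2}+\tfrac{T_{\rm A}}{T}\bigr) + (1-\beta)\, q_{\rm B}\bigl(\tfrac{5}{2}+\tfrac{T_{\rm B}}{T}\bigr)
$$
that appears in \eqref{eq:eta_p alpha}, and pulling out the common factor $a^{2}\bigl(\tfrac{1+\alpha}{p}\bigr)^{\!2}$, reduces the task to the scalar identity
$$
-\bigl(1+\tfrac{\alpha q}{\Sigma}\bigr)\bigl[\tfrac{5}{2}+\tfrac{\alpha\Phi}{\Sigma}+\tfrac{\Omega}{(1+\alpha)\Sigma}\bigr] + \bigl(1+\tfrac{\alpha\mathcal{A}}{\Sigma}\bigr)^{\!2} \;=\; -\bigl(\tfrac{3}{2}+\tfrac{\alpha\Psi+\Omega}{\Sigma}\bigr),
$$
which is exactly the simplification displayed (without proof) just before the theorem.

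The hard part of the argument is this scalar identity. The plan is to exploit two auxiliary algebraic relations among the gauge quantities. A direct expansion of the square gives
$$
\mathcal{A}^{2} \;=\; \Phi\, q - \Omega,
$$
which cancels the $\alpha^{2}/\Sigma^{2}$ cross-terms proportional to $\Phi q$ that the product and the square both produce. Subtracting the definitions of $\Phi$ and $\Psi$ yields
$$
\Psi \;=\; \Phi - 2\mathcal{A} + \tfrac{5}{2}\,q,
$$
so that after the $\Phi q$ cancellation the surviving $\alpha/\Sigma$-coefficient collapses precisely to $\Psi$. The left-over $\Omega$-terms aggregate into
$$
\frac{1}{(1+\alpha)\Sigma} + \frac{\alpha q}{(1+\alpha)\Sigma^{2}} + \frac{\alpha^{2}}{\Sigma^{2}} \;=\; \frac{1}{\Sigma},
$$
an identity which follows at once from $\Sigma = \alpha(1+\alpha)+q$ via $\Sigma + \alpha q + \alpha^{2}(1+\alpha) = (1+\alpha)\Sigma$. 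Combining these three reductions produces the asserted right-hand side.

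With the denominator thus identified as $-a^{2}\bigl(\tfrac{1+\alpha}{p}\bigr)^{\!2}\bigl(\tfrac{3}{2}+\tfrac{\alpha\Psi+\Omega}{\Sigma}\bigr)$ and the numerator as above, the ratio simplifies to
$$
\lambda^{2} \;=\; \frac{p^{2}}{a^{2}\,T\,(1+\alpha)^{2}}\cdot\frac{\tfrac{5}{2}(1+\alpha)\Sigma + \alpha(1+\alpha)\Phi + \Omega}{\tfrac{3}{2}\,\Sigma + \alpha\Psi + \Omega},
$$
and taking the positive square root yields \eqref{eq:characteristic speed}.
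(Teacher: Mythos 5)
Your proposal is correct and follows exactly the paper's route: substitute the closed forms of $\eta_p,\eta_T,v_p,v_T$ into $\lambda^2=-\eta_T/(v_p\eta_T-v_T\eta_p)$ and reduce the denominator to $-a^{2}\bigl(\tfrac{1+\alpha}{p}\bigr)^{2}\bigl(\tfrac{3}{2}+\tfrac{\alpha\Psi+\Omega}{\Sigma}\bigr)$ via the same scalar identity the paper displays just before the theorem. The only difference is that you supply the verification of that identity (the three auxiliary relations $\mathcal{A}^{2}=\Phi q-\Omega$, $\Psi=\Phi-2\mathcal{A}+\tfrac{5}{2}q$, and the collapse of the $\Omega$-coefficients to $1/\Sigma$, all of which check out), which the paper leaves to the reader.
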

\begin{remark}[Isentropes]\label{nb:isentropes}
  In $(p,u,T)$ coordinates, an integral curve of a characteristic vector field $\boldsymbol{r}$ is a solution to the system of equations 
$
\dfrac{d}{ds}\left[\begin{array}{c}
            p\\
            u \\
            T
      \end{array}\right]
= \boldsymbol{r}
%\left[\begin{array}{c}
%             \dfrac{1 - \alpha^2}{\kappa\alpha^2}T^{\frac{5}{2}}e^{-\frac{\Ti}{T}}\\
%            u \\
%            T
%      \end{array}\right].
$
where $\boldsymbol{r}$ stands for $\boldsymbol{r}_{\pm}$ or $\boldsymbol{r}_0.$ For $\boldsymbol{r}_{\pm},$ we have
$$
\frac{d \eta}{ds} = \frac{\partial \eta}{\partial p} \frac{dp}{ds} + \frac{\partial \eta}{\partial T} \frac{dT}{ds}
= \pm\left(\frac{\partial \eta}{\partial p}  - \frac{\partial \eta}{\partial T} \frac{\eta_p}{\eta_T}\right) = 0
$$
and for $\boldsymbol{r}_0,$ $p =$ const. and $u =$ const. Thus,  the thermodynamic part of an integral curve is $\eta =$ const. for $1,2$-characteristic directions and $p =$ const. for $0$-characteristic field. A curve $\eta =$ const. is called an {\em isentrope\/}.
Since $\left(\frac{\partial \eta}{\partial \alpha_{\rm A}}\right)_{\! T} > 0$ (see Appendix \ref{sec:isentrope}), an isentrope is the graph of a differentiable function $\alpha_{\rm A} = \alpha_{\rm A}(T)$ defined on  $T \in (0, \infty).$ 
  \end{remark}

%%%%%%%%%%%%%%%%%%%%%%%%%%%%%%%%%%%%%%%%%%%%%%%%%%%%%%%%%%%%%%%%%%%%%%%%%%%%%%%%%%%%%
\section{Genuine Nonlinearity (convexity) and Inflection Loci}\label{sec:convexity}
%%%%%%%%%%%%%%%%%%%%%%%%%%%%%%%%%%%%%%%%%%%%%%%%%%%%%%%%%%%%%%%%%%%%%%%%%%%%%%%%%%%%%
Now, we investigate the convexity of the forward and backward  fields; each characteristic direction having the eigenvalue $\lambda_\pm$ is called {\em genuinely nonlinear} if $\boldsymbol{r}_{\pm}\nabla \lambda_{\pm} \neq 0.$ 
We have chosen characteristic vectors $\boldsymbol{r}_{\pm}$ so that 
\begin{equation}\label{eq:convex p}
   \boldsymbol{r}_{\pm}\nabla \lambda_{\pm}  = \frac{v_{pp}}{2(-v_p)^\frac32}
   = \frac{\partial \lambda}{\partial p} - \frac{\eta_p}{\eta_T}\frac{\partial \lambda}{\partial T} .
\end{equation}
Hence, genuine nonlinearity implies strict convexity (or concavity) of $v$ as a function of $p$ for fixed $S$. We refer to \cite{Menikoff-Plohr} for more insight about the failure of this condition and we will see in Remark \ref{nb:entropy condition} that the entropy increases across the shock front if $\boldsymbol{r}_{\pm}\nabla\lambda_{\pm} > 0.$  It is convenient to consider a differential operator
$$%\begin{align*}
\mathcal{R}
%  &=  \frac{\Sigma \alpha^3 \alpha_{\rm A}^2\alpha_{\rm B}^2}{1 + \alpha}\left[\eta_T\left(\frac{\partial}{\partial p}\right)_{\!T} - \eta_p\left(\frac{\partial }{\partial T}\right)_{\! p}\right]\\
   = \Sigma %\left[\alpha(1 + \alpha)  + q\right]
  \left[\eta_T\left(\frac{\partial}{\partial p}\right)_{\!T} - \eta_p\left(\frac{\partial }{\partial T}\right)_{\! p}\right]
  $$%\end{align*}
  which is proportional to $\boldsymbol{r}_{\pm}\nabla .$
  \par
  Computation of $\mathcal{R}\lambda$ is simple but tedious.
   First we note that Lemma \ref{lem:d eta/dp, d eta/dT},  \ref{lem:derivatives of alpha} and \ref{lem:eta_p} yield %  \begin{lemma}\label{lem:alpha_p alpha_T: q}
\begin{lemma}\label{lem:eta_T del_p - eta_p del_T :q}
\begin{align*}
\mathcal{R} \alpha_{\rm A}
&= \frac{(1 + \alpha)q_{\rm A}}{pT}\left\{\frac{\alpha(1 + \alpha)T_{\rm A}}{T} \right. \\
& \hspace{2ex}   + \left. q_{\rm B}(1 - \beta)\left[1 +  \alpha \left(\frac{5}{2} + \frac{T_{\rm B}}{T}\right)\right]\left(\frac{T_{\rm A}}{T} - \frac{T_{\rm B}}{T}\right)\right\},\\
    \mathcal{R}\alpha_{\rm B}
    &=  \frac{(1 + \alpha)q_{\rm B}}{pT}\left\{\frac{\alpha(1 + \alpha)T_{\rm B}}{T}
  + q_{\rm A}\beta\left[1 + \alpha\left(\frac{5}{2} + \frac{T_{\rm A}}{T}\right)\right]\left(\frac{T_{\rm B}}{T} - \frac{T_{\rm A}}{T}\right)\right\},\\
           \mathcal{R}\alpha
           & =  \frac{\alpha(1 + \alpha)}{pT}\left\{(1 + \alpha)\left[\beta q_{\rm A}\frac{T_{\rm A}}{T} + (1 - \beta) q_{\rm B}\frac{T_{\rm B}}{T}\right]  - \Omega\right\}.
           %q_{\rm A} q_{\rm B}\beta(1 - \beta)\left(\frac{T_{\rm A}}{T} - \frac{T_{\rm B}}{T}\right)^{\! 2}\right\} .
\end{align*}
\end{lemma}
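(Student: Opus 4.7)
The plan is to substitute into $\mathcal R$ the formulas \eqref{eq:eta_p alpha}, \eqref{eq:eta_T alpha} for $\eta_p,\eta_T$ together with the expressions of Lemma \ref{lem:derivatives of alpha} for the partial derivatives of $\alpha_{\rm A},\alpha_{\rm B}$, and then simplify. It is convenient to write
\begin{equation*}
\Theta := \beta q_{\rm A}\Bigl(\tfrac{5}{2}+\tfrac{T_{\rm A}}{T}\Bigr) + (1-\beta)q_{\rm B}\Bigl(\tfrac{5}{2}+\tfrac{T_{\rm B}}{T}\Bigr),
\end{equation*}
so that $\eta_p = -(1+\alpha)/p - \alpha(1+\alpha)\Theta/(p\Sigma)$. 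The whole simplification is governed by two elementary polynomial identities, both proved by direct expansion: the Lagrange-type identity $q\Phi - \Theta^2 = \beta(1-\beta)q_{\rm A}q_{\rm B}(T_{\rm A}-T_{\rm B})^2/T^2 = \Omega$, and its linear cousin $\Theta - q\bigl(\tfrac{5}{2}+\tfrac{T_{\rm B}}{T}\bigr) = \beta q_{\rm A}(T_{\rm A}-T_{\rm B})/T$ (and its A$\leftrightarrow$B analogue).

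For $\mathcal R\alpha_{\rm A}$, the crucial initial observation is that $\Sigma\,(\partial\alpha_{\rm A}/\partial p)_T = -\alpha(1+\alpha)q_{\rm A}/p$, which cancels the outer $\Sigma$. Pulling out the common prefactor $(1+\alpha)q_{\rm A}/(pT)$ from every resulting term, the portion with no remaining $1/\Sigma$ reduces directly to $\alpha(1+\alpha)T_{\rm A}/T + (1-\beta)q_{\rm B}(T_{\rm A}-T_{\rm B})/T$, since the $\tfrac{5}{2}$ contributions from $\eta_T$ and from $(\partial\alpha_{\rm A}/\partial T)_p$ combine against the $-(1+\alpha)/p$ piece of $\eta_p$ to produce exactly $\alpha(1+\alpha)T_{\rm A}/T$. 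The remaining $1/\Sigma$-weighted terms, after factoring out $\alpha/\Sigma$, are
\begin{equation*}
\alpha(1+\alpha)\bigl[\Theta\bigl(\tfrac{5}{2}+\tfrac{T_{\rm A}}{T}\bigr) - \Phi\bigr] - \Omega + (1-\beta)q_{\rm B}\Theta(T_{\rm A}-T_{\rm B})/T,
\end{equation*}
which, by the two identities above, collapses to $\Sigma\cdot\alpha(1-\beta)q_{\rm B}\bigl(\tfrac{5}{2}+\tfrac{T_{\rm B}}{T}\bigr)(T_{\rm A}-T_{\rm B})/T$; the $\Sigma$ cancels the $1/\Sigma$, leaving the missing $\alpha\bigl(\tfrac{5}{2}+\tfrac{T_{\rm B}}{T}\bigr)$ piece inside the braces. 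The formula for $\mathcal R\alpha_{\rm B}$ then follows immediately by the A$\leftrightarrow$B, $\beta\leftrightarrow 1-\beta$ symmetry.

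For $\mathcal R\alpha$, I use $\alpha = \beta\alpha_{\rm A} + (1-\beta)\alpha_{\rm B}$. Summing the weighted expressions in Lemma \ref{lem:derivatives of alpha} causes the cross terms in $q_{\rm A}q_{\rm B}(T_{\rm A}-T_{\rm B})$ to cancel, yielding $\Sigma(\partial\alpha/\partial T)_p = \alpha(1+\alpha)\Theta/T$ and $\Sigma(\partial\alpha/\partial p)_T = -\alpha(1+\alpha)q/p$. Substituting these into $\mathcal R\alpha$, the $1/\Sigma$-free terms contribute $\alpha(1+\alpha)^2(\Theta - \tfrac{5}{2}q)/(pT) = \alpha(1+\alpha)^2\bigl[\beta q_{\rm A}T_{\rm A}/T + (1-\beta)q_{\rm B}T_{\rm B}/T\bigr]/(pT)$, matching the leading piece of the claim, while the $1/\Sigma$-weighted terms collapse via the Lagrange identity $q\Phi - \Theta^2 = \Omega$ to $-\alpha(1+\alpha)\Omega/(pT)$, producing the $-\Omega$ contribution.

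The main difficulty is purely bookkeeping rather than insight: roughly a dozen rational expressions in $\alpha_{\rm A},\alpha_{\rm B}$ and $T_{\rm A}/T,T_{\rm B}/T$ must be produced and regrouped, and one has to recognise the two algebraic identities above as the engine driving the telescoping. Once they are isolated, each of the three formulas falls out within a few lines.
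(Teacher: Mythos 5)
Your proposal is correct and follows exactly the route the paper intends: the paper states this lemma as an unproved consequence of Lemmas \ref{lem:d eta/dp, d eta/dT}, \ref{lem:derivatives of alpha} and \ref{lem:eta_p}, and your direct substitution of those formulas into $\mathcal{R}$, with the cancellation of the outer $\Sigma$ and the simplification driven by the identities $q\Phi-\Theta^2=\Omega$ and $\Theta-q\bigl(\tfrac{5}{2}+\tfrac{T_{\rm B}}{T}\bigr)=\beta q_{\rm A}(T_{\rm A}-T_{\rm B})/T$, reproduces all three formulas (I checked that the $1/\Sigma$-weighted terms indeed collapse to $\Sigma$ times the claimed remainders, and that $\beta\,\mathcal{R}\alpha_{\rm A}+(1-\beta)\,\mathcal{R}\alpha_{\rm B}=\mathcal{R}\alpha$ as a consistency test). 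No gap.
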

\par
The above lemma give the forms of  $\mathcal{R} q,\mathcal{R} \Sigma, \mathcal{R}\Psi$ and $\mathcal{R}\Omega.$
Employing these formulas, after a long and tedious computation, we finally find that $\mathcal{R}\log \lambda$ is the summation of the following three expressions: for brevity we denote
$
 Q_T = \beta q_{\rm A}\frac{T_{\rm A}}{T} + (1 - \beta) q_{\rm B}\frac{T_{\rm B}}{T}.
$
 $$%\begin{multline*}
 \textstyle%
% &  \mathcal{R}\left(\log p - \frac{1}{2}\log T - \log (1 + \alpha)\right)\\
(1)\  \frac{1 + \alpha}{pT}\left\{2\Sigma  + \Omega  + \frac{1}{2}\alpha\left[10 q + \beta q_{\rm A}\left( \frac{7T_{\rm A}}{T} + \frac{2T_{\rm A}^2}{T^2}\right) + (1 - \beta) q_{\rm B}\left(\frac{7T_{\rm B}}{T} + \frac{2T_{\rm B}^2}{T^2}\right) \right]\right\}
$$%\end{multline*}
\begin{multline*}\textstyle
(2)\  \dfrac{1 + \alpha}{2\left[\frac{5}{2}(1 + \alpha)\Sigma + \alpha(1 + \alpha)\Phi + \Omega\right]pT}\ \text{times of}\\ \textstyle
%\begin{multline*}
  \alpha\left[\frac{5}{2}\Sigma + \frac{5}{2}(1 + \alpha)(1 + 2\alpha)+  (1 + 2\alpha)\Phi \right]\left[2(1 + \alpha)Q_T
  %\left[\beta q_{\rm A}\frac{T_{\rm A}}{T} + (1 - \beta) q_{\rm B}\frac{T_{\rm B}}{T}\right]
  -  \Omega\right] \\ \textstyle
  + (1 - 2\alpha_{\rm A})\left\{\alpha(1 + \alpha) \frac{T_{\rm A}}{T} + (1 - \beta)q_{\rm B}\left[1 +  \alpha \left(\frac{5}{2} + \frac{T_{\rm B}}{T}\right)\right]\left(\frac{T_{\rm A}}{T} - \frac{T_{\rm B}}{T}\right) \right\}\\  \textstyle
  \times 
\left\{(1 + \alpha)\beta q_{\rm A}\left[\frac{5}{2} + \alpha  \left(\frac{5}{2} + \frac{T_{\rm A}}{T}\right)^{\! 2}\right] + \Omega\right\}
 \\ \textstyle
 + (1 - 2\alpha_{\rm B})\left\{\alpha(1 + \alpha)\frac{T_{\rm B}}{T} + 2\beta q_{\rm A}\left[1 +  \alpha \left(\frac{5}{2} + \frac{T_{\rm A}}{T}\right)\right]
\left(\frac{T_{\rm B}}{T} - \frac{T_{\rm A}}{T}\right) \right\}\\ \textstyle
\times \left\{(1 + \alpha)(1 - \beta)q_{\rm B}\left[\frac{5}{2} + \alpha  \left(\frac{5}{2} + \frac{T_{\rm B}}{T}\right)^{\! 2}\right] + \Omega\right\}
\\\textstyle
%%%%%%%%%%%%%%%%%%%%%%%%%%%%%%%%%%%%%%%%%%%%%%%%%%%%%%%%%%%%%%%%%%%%%%%(added 0611)
 + \frac{5}{2}\alpha(1 + \alpha)^2\left[\beta (1 - 2\alpha_{\rm A})q_{\rm A}\frac{T_{\rm A}}{T}
 + (1 - \beta) (1 - 2\alpha_{\rm B})q_{\rm B}\frac{T_{\rm B}}{T}\right]\\ \textstyle
%%%%%%%%%%%%%%%%%%%%%%%%%%%%%%%%%%%%%%%%%%%%%%%%%%%%%%%%%%%%%%%%%%%%%%%%%%%
- 2 \left\{\alpha(1 + \alpha)\left[\beta q_{\rm A}\left(\frac{5}{2} + \frac{T_{\rm A}}{T}\right)\frac{T_{\rm A}}{T} + (1 - \beta)q_{\rm B}\left(\frac{5}{2} + \frac{T_{\rm B}}{T}\right)\frac{T_{\rm B}}{T}\right] + \Omega \right\}\\ \textstyle
\times 
\left\{\alpha(1 + \alpha)  + \beta q_{\rm A}\left[1 + \alpha \left(\frac{5}{2} + \frac{T_{\rm A}}{T}\right)\right] + (1 - \beta) q_{\rm B}\left[1  +  \alpha\left(\frac{5}{2} + \frac{T_{\rm B}}{T}\right)\right]\right\}
\end{multline*}
\begin{multline*}\textstyle
(3) - \dfrac{1 + \alpha}{2\left[\frac{3}{2}\Sigma + \alpha\Psi + \Omega\right]pT}\  \text{times of}\  \textstyle
%\begin{multline*}
  \alpha\left[\frac{3(1 + 2\alpha)}{2} + \Psi \right] \left[2(1 + \alpha)Q_T
  %\left[\beta q_{\rm A}\frac{T_{\rm A}}{T} + (1 - \beta) q_{\rm B}\frac{T_{\rm B}}{T}\right]
  -  \Omega\right]\\ \textstyle
+ (1 - 2\alpha_{\rm A})\left\{\alpha(1 + \alpha)\frac{T_{\rm A}}{T} + (1 - \beta)q_{\rm B}\left[1 +  \alpha \left(\frac{5}{2} + \frac{T_{\rm B}}{T}\right)\right]
\left(\frac{T_{\rm A}}{T} - \frac{T_{\rm B}}{T}\right)\right\}\\ \textstyle
 \times 
\left\{\beta q_{\rm A}\left[\frac{3}{2} + \alpha \left(\frac{15}{4} + \frac{3T_{\rm A}}{T} + \frac{T_{\rm A}^2}{T^2}\right)\right] + \Omega\right\}\\ \textstyle
+ (1 - 2\alpha_{\rm B})\left\{\alpha(1 + \alpha)\frac{T_{\rm B}}{T} + \beta q_{\rm A}\left[1 +  \alpha \left(\frac{5}{2} + \frac{T_{\rm A}}{T}\right)\right]
\left(\frac{T_{\rm B}}{T} - \frac{T_{\rm A}}{T}\right)\right\}\\ \textstyle
\times \left\{(1 - \beta)q_{\rm B}\left[\frac{3}{2} + \alpha \left(\frac{15}{4} + \frac{3T_{\rm B}}{T} + \frac{T_{\rm B}^2}{T^2}\right)\right] + \Omega\right\}\\ \textstyle
%%%%%%%%%%%%%%%%%%%%%%%%%%%%%%%%%%%%%%%%%%%%%%%%%%%%%%%%%%%%%%%%%%%%%%%(added 0611)
 + \frac{3}{2}\alpha(1 + \alpha)\left[\beta (1 - 2\alpha_{\rm A})q_{\rm A}\frac{T_{\rm A}}{T}
 + (1 - \beta) (1 - 2\alpha_{\rm B})q_{\rm B}\frac{T_{\rm B}}{T}\right]\\ \textstyle
%%%%%%%%%%%%%%%%%%%%%%%%%%%%%%%%%%%%%%%%%%%%%%%%%%%%%%%%%%%%%%%%%%%%%%%%%%%
- 2\left\{\alpha\left[\beta q_{\rm A}\left(\frac{3}{2} + \frac{T_{\rm A}}{T}\right)\frac{T_{\rm A}}{T} + (1 - \beta)q_{\rm B}\left(\frac{3}{2} + \frac{T_{\rm B}}{T}\right)\frac{T_{\rm B}}{T}\right] + \Omega\right\}\\ \textstyle
\times \left\{\alpha(1 + \alpha)  + \beta q_{\rm A}\left[1 + \alpha \left(\frac{5}{2} + \frac{T_{\rm A}}{T}\right)\right] + (1 - \beta) q_{\rm B}\left[1  +  \alpha\left(\frac{5}{2} + \frac{T_{\rm B}}{T}\right)\right]\right\}
\end{multline*}

\par
Now, we study  the {\em inflection locus} which is the point set 
\[
\mathcal{I}= \left\{(T, \alpha_{\rm A});\ \boldsymbol{r}_{\pm}\nabla \lambda_{\pm} = 0, \, T > 0,\, 0 < \alpha_{\rm A} < 1 \right\}.
\]
Since $\boldsymbol{r}_+\nabla \lambda_{+} = \boldsymbol{r}_-\nabla \lambda_{-} $, both cases lead to the same result. Obviously, $\boldsymbol{r}_{\pm}\nabla \lambda_{\pm} > 0$ for sufficiently large $T$ and we observe that $\mathcal{I}$ is located in a finite region. However it is difficult to get a sketch of $\mathcal{I}$ by purely mathematical reasoning and Fig. \ref{fig:2} shows results of numerical computations.
% scales f vertical axes are different. 
\begin{figure}[hbt]
\centering
 \includegraphics[width =.45\linewidth]{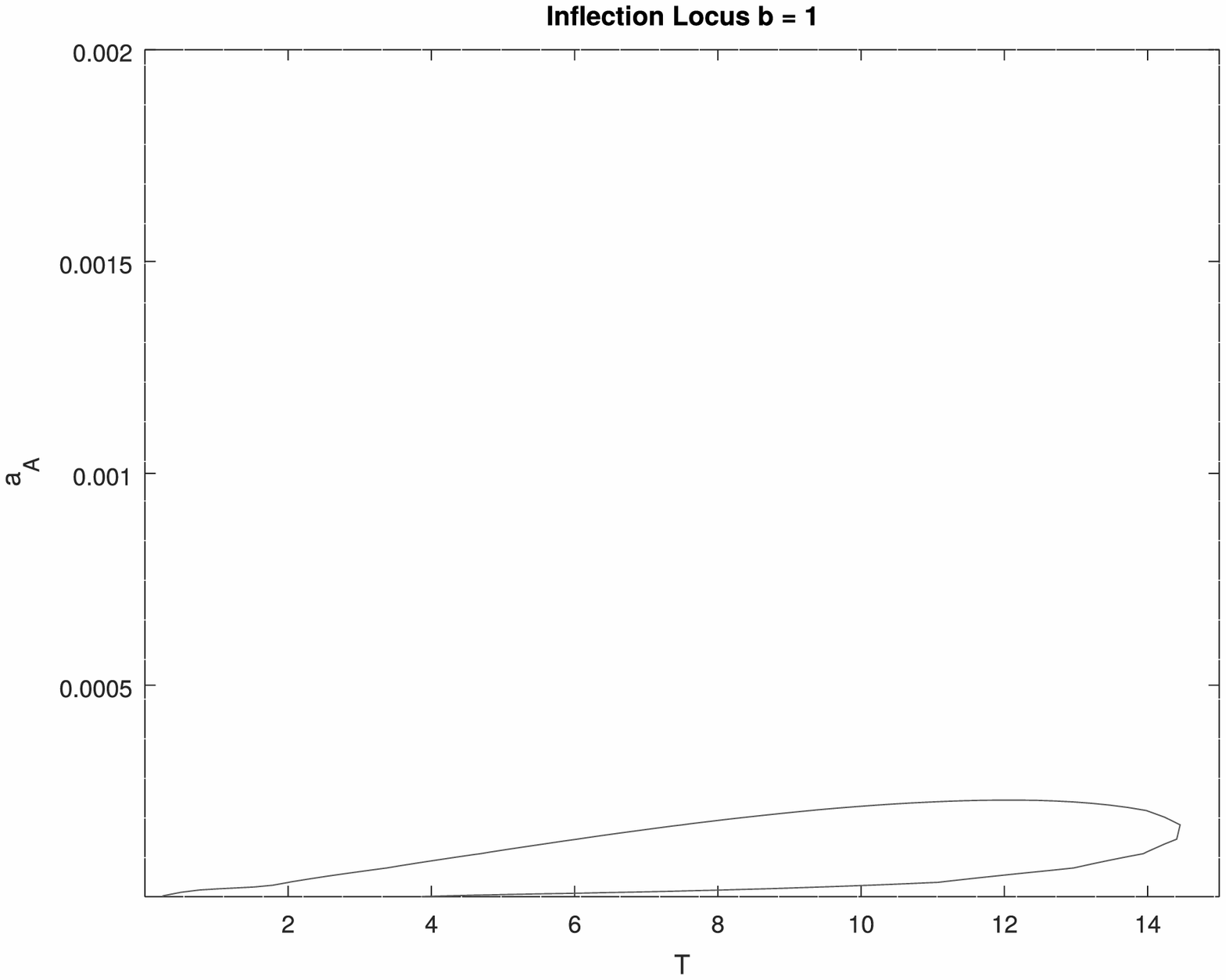}
 %\hspace{2ex}
 \includegraphics[width =.45\linewidth]{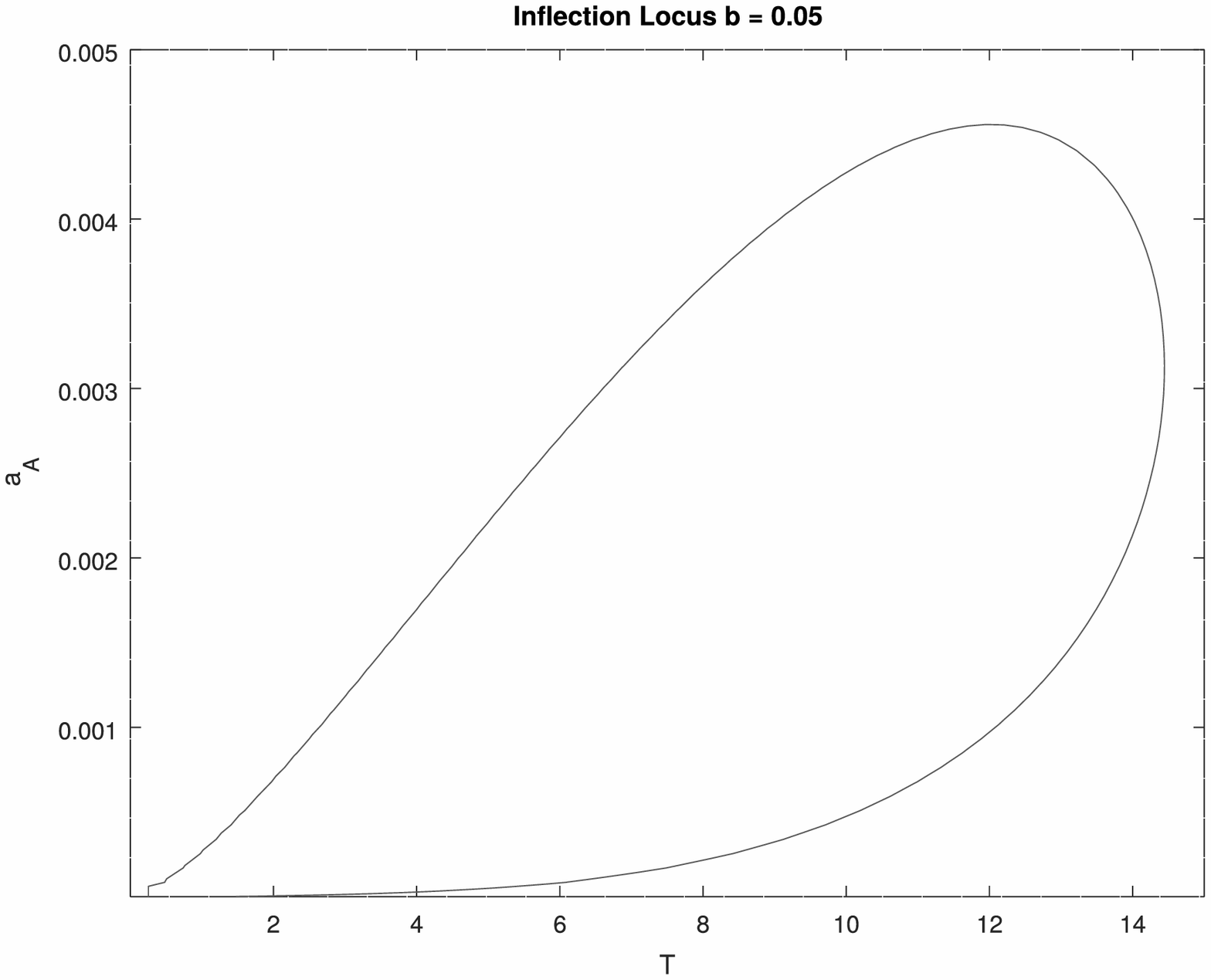}
 \caption{$T_{\rm A} = 1576.0,  T_{\rm B} = 2853.2$ left: $\beta = 1$(monatomic), right: $\beta = 0.05.$ }\label{fig:2}
\end{figure}

\par
On the other hand, it is possible to extract from the above heavy expressions asymptotics of the inflection locus for $T \to 0.$ Since $\alpha_{\rm B}$ is negligible compared with $\alpha_{\rm A},$ we observe that there are two branches such that
  $$
      \frac{\alpha_{\rm A}}{T^2} \to 0 \quad \text{or} \quad \frac{\alpha_{\rm A}}{T^2}  \to \infty.
  $$
Following theorem is a generalisation of  \cite{Asakura-Corli_ionized} Proposition 4.2.
      \begin{theorem}\label{thm:asymptotics inflection locus}
For  $T \to 0,$ the inflection locus has two branches
\begin{eqnarray*}
(1) &  \alpha_{\rm A} \sim \frac{60}{\beta} \left(\frac{T}{T_{\rm A}}\right)^{\! 3}, \quad 
	\alpha_{\rm B} \sim \frac{60\mu_{\rm A}}{\beta\mu_{\rm B}}\left(\frac{T}{T_{\rm A}}\right)^{\! 3} e^{-\frac{T_{\rm B} - T_{\rm A}}{T}}\\
(2) &   \alpha_{\rm A} \sim \frac{1}{\beta}\left(\frac{T}{T_{\rm A}}\right)^{\! \frac{3}{2}}, \quad
	\alpha_{\rm B} \sim \frac{\mu_{\rm A}}{\beta\mu_{\rm B}}\left(\frac{T}{T_{\rm A}}\right)^{\! \frac{3}{2}} e^{-\frac{T_{\rm B} - T_{\rm A}}{T}}
\end{eqnarray*}
and we conclude that the characteristic directions of $\lambda_{\pm}$ are not genuinely nonlinear in a neighbourhood of $(T, \alpha_{\rm A}) = (0,0).$
      \end{theorem}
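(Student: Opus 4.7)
The compatibility relation \eqref{eq:compatibility} reads $(1/\alpha_{\rm B}-1)=(\mu_{\rm B}/\mu_{\rm A})(1/\alpha_{\rm A}-1)\,e^{-(T_{\rm B}-T_{\rm A})/T}$. For $T\to 0$ I expect $\alpha_{\rm A}\to 0$, so $1/\alpha_{\rm A}-1\sim 1/\alpha_{\rm A}$, whence
$$
\alpha_{\rm B}\sim \frac{\mu_{\rm A}}{\mu_{\rm B}}\,\alpha_{\rm A}\,e^{-(T_{\rm B}-T_{\rm A})/T},
$$
so $\alpha_{\rm B}$ is exponentially smaller than $\alpha_{\rm A}$. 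In particular $\alpha\sim\beta\alpha_{\rm A}$ and $q\sim\beta q_{\rm A}\sim\beta\alpha_{\rm A}$, while $\Omega\propto q_{\rm A}q_{\rm B}$ is exponentially negligible compared to any polynomial in $\alpha_{\rm A}$ and $T$. The defining relation $(1)+(2)+(3)=0$ of $\mathcal{I}$ reduces, modulo exponentially small terms, to a single scalar relation $F(T,\alpha_{\rm A})=0$.

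\textbf{Two asymptotic regimes.} Under these reductions, the radicand in \eqref{eq:characteristic speed} satisfies
$$
\tfrac{5}{2}(1+\alpha)\Sigma+\alpha(1+\alpha)\Phi+\Omega \,\sim\, 5\beta\alpha_{\rm A}+\beta^{2}\alpha_{\rm A}^{2}(T_{\rm A}/T)^{2},
$$
$$
\tfrac{3}{2}\Sigma+\alpha\Psi+\Omega \,\sim\, 3\beta\alpha_{\rm A}+\beta^{2}\alpha_{\rm A}^{2}(T_{\rm A}/T)^{2}.
$$
Both have the common structure $A\alpha_{\rm A}+B\alpha_{\rm A}^{2}(T_{\rm A}/T)^{2}$, and the dimensionless quantity $x:=\alpha_{\rm A}(T_{\rm A}/T)^{2}$ controls the behaviour of $\lambda^{2}$: for $x\to 0$ one gets $\lambda^{2}\to 5/3$ (essentially atomic), whereas for $x\to\infty$ one gets $\lambda^{2}\to 1$ (essentially ionized). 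This dichotomy is the source of the two branches announced in the theorem, corresponding precisely to $\alpha_{\rm A}/T^{2}\to 0$ and $\alpha_{\rm A}/T^{2}\to\infty$.

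\textbf{Power ansatz and leading-order matching.} On each branch I insert $\alpha_{\rm A}=c\,(T/T_{\rm A})^{n}$ into $F(T,\alpha_{\rm A})=0$, expand every summand of $(1)$, $(2)$, $(3)$ in this scaling, retain only the leading contributions, and solve the resulting condition for $(n,c)$. In the regime $n>2$ (so $x\to 0$) the $\Sigma$-terms dominate both numerator and denominator; expanding $\log\lambda^{2}$ to first order in $x$ and imposing $\mathcal{R}\log\lambda=0$ at leading order forces the exponent $n=3$ and fixes $c=60/\beta$. In the regime $n<2$ (so $x\to\infty$) the $\Phi$- and $\Psi$-terms dominate; expanding $\log\lambda^{2}$ in $1/x$ and performing the analogous matching yields $n=3/2$ and $c=1/\beta$. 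The companion asymptotics for $\alpha_{\rm B}$ are then immediate from the first paragraph. Since each branch accumulates at $(T,\alpha_{\rm A})=(0,0)$, every neighbourhood of the origin meets $\mathcal{I}$, and genuine nonlinearity of $\lambda_{\pm}$ fails there.

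\textbf{Main obstacle.} The serious difficulty is combinatorial rather than conceptual: the three-term expression for $\mathcal{R}\log\lambda$ contains many summands, several of which contribute to the leading order in each regime, and the clean constants $60$ and $1$ only emerge after careful cancellation. The most efficient bookkeeping is to tabulate, for every factor appearing in $(1)$--$(3)$, its leading behaviour in $T$ and in $\alpha_{\rm A}$ under the chosen ansatz, discard all summands of strictly lower total order, and only then perform the algebraic cancellation that isolates $c$. Without such a preliminary triage the raw algebra is essentially unmanageable.
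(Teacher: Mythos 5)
Your setup coincides with what the paper does: the paper likewise observes that $\alpha_{\rm B}$ is exponentially negligible against $\alpha_{\rm A}$ as $T\to 0$ (so that $\alpha\sim\beta\alpha_{\rm A}$, $q\sim\beta q_{\rm A}$, and $\Omega$ may be dropped), and it identifies the same dichotomy $\alpha_{\rm A}/T^{2}\to 0$ versus $\alpha_{\rm A}/T^{2}\to\infty$ as the source of the two branches. Your leading-order forms of the two radicand expressions in \eqref{eq:characteristic speed} are correct. In fact the paper offers no more than this: it states the theorem as a generalisation of Proposition 4.2 of \cite{Asakura-Corli_ionized} and prints no proof, so your outline is at least as explicit as the source.

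The genuine gap is that the decisive step is announced rather than performed. The entire content of the theorem is the pair $(n,c)=(3,\,60/\beta)$ and $(n,c)=(3/2,\,1/\beta)$, and these can only come from substituting the ansatz $\alpha_{\rm A}=c\,(T/T_{\rm A})^{n}$ into the condition $\mathcal{R}\log\lambda=0$, i.e.\ into the sum of the three displayed expressions $(1)+(2)+(3)$ of Section \ref{sec:convexity}, and balancing the two leading groups of terms in each regime. You assert that this matching ``forces $n=3$ and fixes $c=60/\beta$'' (resp.\ $n=3/2$, $c=1/\beta$) without exhibiting which summands survive, what the two competing orders in $T$ are, or how the coefficient $60$ emerges from the cancellation; your closing paragraph concedes that this bookkeeping has not been carried out. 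A smaller point: the controlling parameter is really $x=\beta\alpha_{\rm A}(T_{\rm A}/T)^{2}$ rather than $\alpha_{\rm A}(T_{\rm A}/T)^{2}$ --- the factor $\beta$ matters, since it is precisely what produces the $1/\beta$ in the final constants. As it stands, the proposal is a correct and well-organised plan of attack, identical in strategy to the paper's, but it does not yet verify the exponents or the constants that the theorem actually claims.
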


%%%%%%%%%%%%%%%%%%%%%%%%%%%%%%%%%%%%%%%%%%%%%%%%%%%%%
\section{Compatibility Condition}\label{sec:compatibility}
%%%%%%%%%%%%%%%%%%%%%%%%%%%%%%%%%%%%%%%%%%%%%%%%%%%%%
The compatibility condition \eqref{eq:compatibility} constitutes a thermodynamic state space.
\begin{lemma}\label{lem:compatibility}
The compatibility condition takes the form
\begin{equation}\label{eq:state space}
	   \alpha_{\rm B}
    = \frac{\mu_{\rm A}\alpha_{\rm A}e^{-\frac{T_{\rm B} - T_{\rm A}}{T}}}{\mu_{\rm A}\alpha_{\rm A} e^{-\frac{T_{\rm B} - T_{\rm A}}{T}}+ \mu_{\rm B}(1 - \alpha_{\rm A})}.
\end{equation}
If $\alpha_{\rm A} \to 0,$ then $\alpha_{\rm B} \to 0$ and  we have
\begin{equation}\label{eq:compatibility small}
	\alpha_{\rm B}   = \frac{\mu_{\rm A}}{\mu_{\rm B}} \alpha_{\rm A} e^{-\frac{T_{\rm B} - T_{\rm A}}{T}}\left[1 + O(1)\alpha_{\rm A}\right].
	\end{equation}
%On the other hand, when $T \to \infty,$
% \begin{equation}\label{eq:compatibility big T}
%	  \alpha_{\rm B}
%    \to \frac{\mu_{\rm A}\alpha_{\rm A}}{\mu_{\rm A}\alpha_{\rm A} + \mu_{\rm B}(1 - \alpha_{\rm A})}.
%	\end{equation}
\end{lemma}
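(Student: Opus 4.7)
The proof will be a direct algebraic manipulation of the compatibility condition \eqref{eq:compatibility} established in Lemma \ref{lem:saha+compatibility}, with no further physical input required. The plan is to solve that condition for $\alpha_{\rm B}$ as a function of $\alpha_{\rm A}$ and $T$, and then to expand the resulting expression as $\alpha_{\rm A}\to 0$.

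First I would rewrite \eqref{eq:compatibility} by moving the $\mu$ factors and the exponentials to isolate $1/\alpha_{\rm B}-1$, obtaining
$$
\frac{1}{\alpha_{\rm B}} - 1 = \frac{\mu_{\rm B}}{\mu_{\rm A}}\left(\frac{1}{\alpha_{\rm A}}-1\right) e^{\frac{T_{\rm B}-T_{\rm A}}{T}}.
$$
Adding $1$ to both sides, putting over a common denominator, and inverting produces
$$
\alpha_{\rm B} = \frac{\mu_{\rm A}\alpha_{\rm A}}{\mu_{\rm A}\alpha_{\rm A} + \mu_{\rm B}(1-\alpha_{\rm A})\, e^{\frac{T_{\rm B}-T_{\rm A}}{T}}}.
$$
Multiplying numerator and denominator by $e^{-\frac{T_{\rm B}-T_{\rm A}}{T}}$ then yields the stated formula \eqref{eq:state space}. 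There is nothing subtle here beyond the ordering of the symbols, and one only needs $0<\alpha_{\rm A}<1$ and $T>0$ to ensure the denominator is strictly positive.

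For the asymptotic statement, I would factor $\mu_{\rm B}$ out of the denominator in \eqref{eq:state space}, writing it as $\mu_{\rm B}\bigl[1 + \alpha_{\rm A}(\mu_{\rm A}\mu_{\rm B}^{-1}e^{-\frac{T_{\rm B}-T_{\rm A}}{T}} - 1)\bigr]$. Since $T_{\rm A}<T_{\rm B}$, the bracketed coefficient of $\alpha_{\rm A}$ is bounded for $T$ in any fixed compact set, so the geometric series expansion $\frac{1}{1+x} = 1 + O(x)$ with $x = O(\alpha_{\rm A})$ immediately gives
$$
\alpha_{\rm B} = \frac{\mu_{\rm A}}{\mu_{\rm B}}\alpha_{\rm A} e^{-\frac{T_{\rm B}-T_{\rm A}}{T}}\bigl[1 + O(1)\alpha_{\rm A}\bigr],
$$
and in particular $\alpha_{\rm B}\to 0$ as $\alpha_{\rm A}\to 0$. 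The only mild point to keep in mind is that the constant in $O(1)$ is uniform on compact sets of $T$ but not as $T\to 0^+$, since $e^{-\frac{T_{\rm B}-T_{\rm A}}{T}}\to 0$; this is harmless for the statement as written, but should be acknowledged if the expansion is later used uniformly in $T$. There is no genuine obstacle in the proof; it is purely symbolic manipulation of \eqref{eq:compatibility}.
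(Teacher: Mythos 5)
Your derivation is correct and is exactly the straightforward algebraic manipulation of \eqref{eq:compatibility} that the paper intends (the lemma is stated there without an explicit proof). One small remark: your caveat about possible non-uniformity of the $O(1)$ constant as $T\to 0^+$ is unnecessary, since $0<e^{-\frac{T_{\rm B}-T_{\rm A}}{T}}\le 1$ for all $T>0$ (using $T_{\rm A}<T_{\rm B}$) makes the bracketed coefficient uniformly bounded over all $T>0$.
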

\begin{figure}[hbt]
\centering
 \includegraphics[width =.4\linewidth]{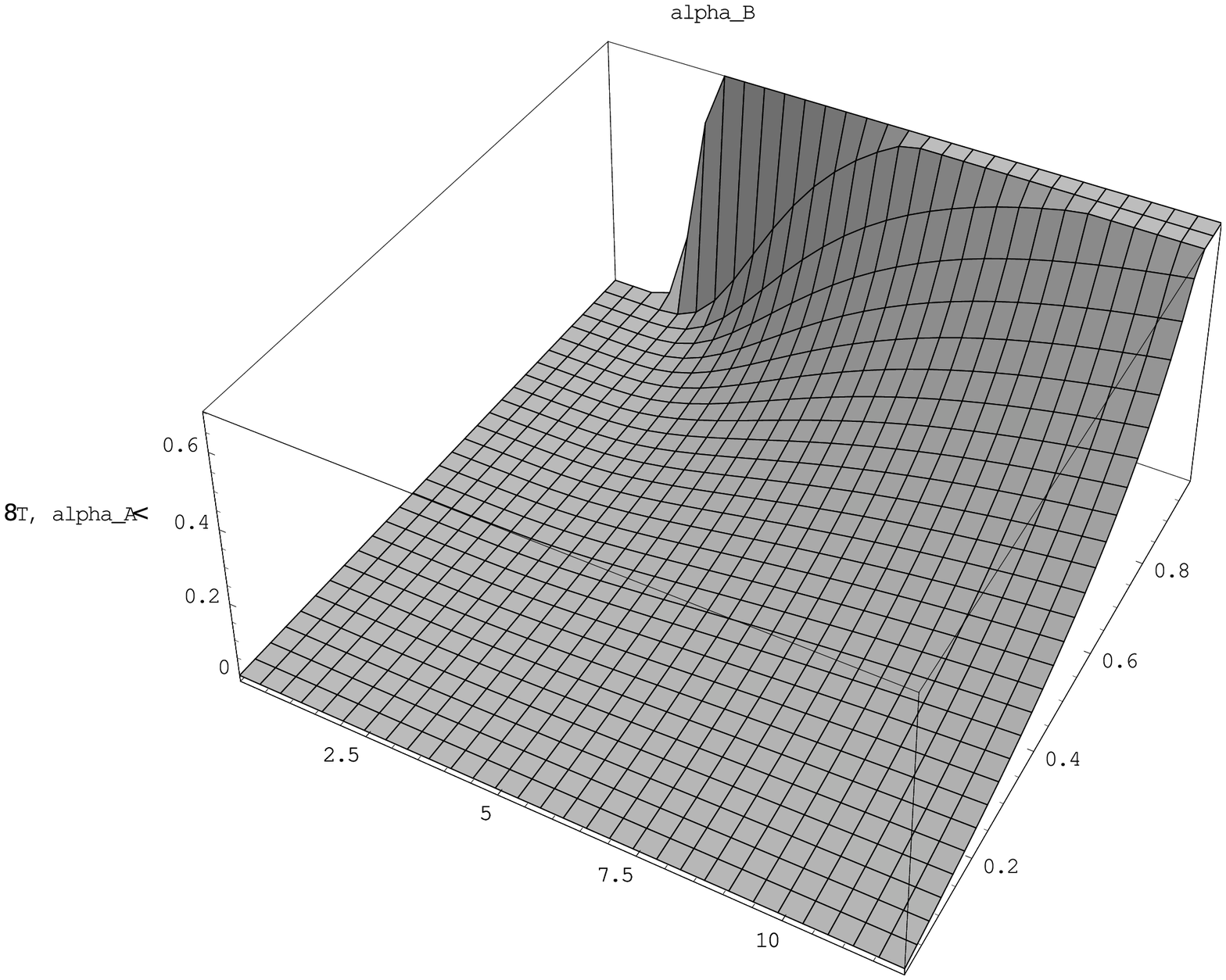}%{aBaA_gray.eps}
% \caption{State Manifold (1)  $T_{\rm A} = 15, \, T_{\rm B} = 28,\, 0 < \alpha_{\rm A} < 1, 0 < T < 15$}\label{fig:compatibility alpha_B (1)}
%\end{figure}
%\begin{figure}[hbt]
 %\centering
 \hspace{4ex}
 \includegraphics[width =.4\linewidth]{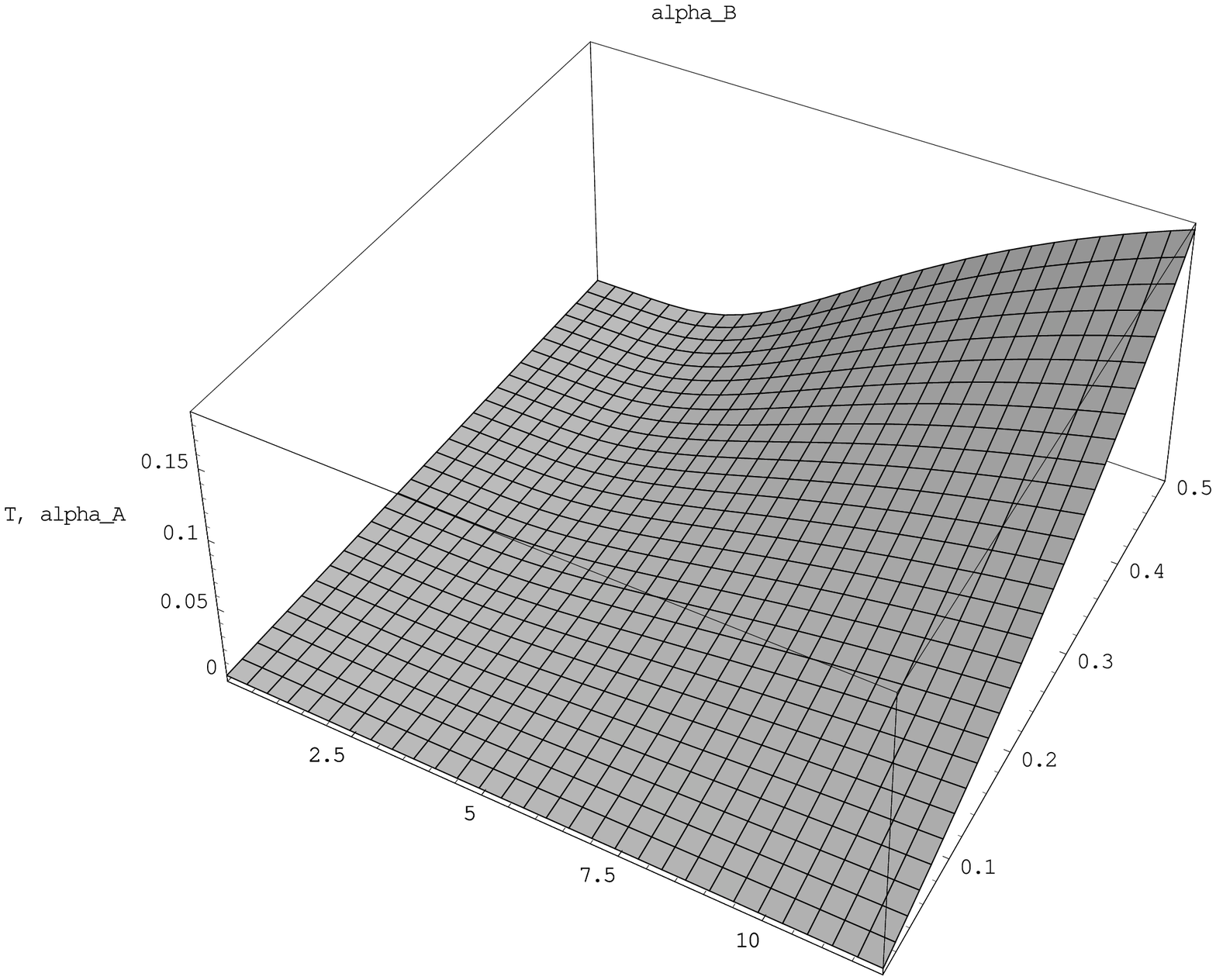}
 \caption{State space  $T_{\rm A} = 15,  T_{\rm B} = 28,$ left: $0 < \alpha_{\rm A} < 1, 0 < T < 12$, right: $0  < \alpha_{\rm A} < 0.5, 0 < T < 12$}\label{fig:1}
\end{figure}
\noindent
For A: hydrogen atom and B: helium atom, $\frac{\mu_{\rm A}}{\mu_{\rm B}} = 4. $
\par
Incidentally, we find
$$
     \alpha_{\rm B}(1 - \alpha_{\rm B})  = \frac{\mu_{\rm A}\mu_{\rm B}\alpha_{\rm A}(1 - \alpha_{\rm A})e^{-\frac{T_{\rm B} - T_{\rm A}}{T}}}{\left[\mu_{\rm A}\alpha_{\rm A} e^{-\frac{T_{\rm B} - T_{\rm A}}{T}}+ \mu_{\rm B}(1 - \alpha_{\rm A})\right]^2}
     $$

and thus derivatives of $\alpha_{\rm B}$ take the forms 
$$
	\left(\frac{\partial \alpha_{\rm B}}{\partial T}\right)_{\! \alpha_{\rm A}} 
	 = \frac{(T_{\rm B} - T_{\rm A})\alpha_{\rm B}(1 - \alpha_{\rm B})}{T^2}, \quad
	\left(\frac{\partial \alpha_{\rm B}}{\partial \alpha_{\rm A}}\right)_{\! T} 
= \frac{\alpha_{\rm B}(1 - \alpha_{\rm B})}{\alpha_{\rm A}(1 - \alpha_{\rm A})}%\label{eq:dalpha_B/dalpha_A}.
         $$%\end{align*}
showing that $\left(\frac{\partial \alpha_{\rm B}}{\partial T}\right)_{\! \alpha_{\rm A}}, \left(\frac{\partial \alpha_{\rm B}}{\partial \alpha_{\rm A}}\right)_{\! T} > 0.$
By setting for brevity
$$
q = \beta q_{\rm A} + (1 - \beta)q_{\rm B},\ Q_{\rm BA} =  \frac{(1 - \beta)(T_{\rm B} - T_{\rm A})q_{\rm B}}{T},
$$
 derivatives of $\alpha$ take the  forms
\begin{lemma}\label{lem:dalpha_B/dT}
  \begin{equation}\label{eq:dalpha_B/dT}
    \left(\frac{\partial \alpha}{\partial T}\right)_{\! \alpha_{\rm A}} 
= \frac{Q_{\rm BA}}{T}, \quad
  \left(\frac{\partial \alpha}{\partial \alpha_{\rm A}}\right)_{\!T}
  = \frac{q}{q_{\rm A}}.
\end{equation}
\end{lemma}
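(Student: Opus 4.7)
The plan is to obtain both identities by a direct computation from the definition $\alpha = \beta\alpha_{\rm A} + (1-\beta)\alpha_{\rm B}$, using the partial derivatives of $\alpha_{\rm B}$ with respect to $T$ (at fixed $\alpha_{\rm A}$) and with respect to $\alpha_{\rm A}$ (at fixed $T$) that have just been computed from the compatibility condition \eqref{eq:state space}. In effect, everything has been set up by the preceding lemma, so this statement is essentially a book-keeping consequence.

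For the first identity, I would differentiate $\alpha = \beta\alpha_{\rm A} + (1-\beta)\alpha_{\rm B}$ with respect to $T$ at fixed $\alpha_{\rm A}$. The $\beta\alpha_{\rm A}$ term has zero derivative, so
\[
\left(\frac{\partial \alpha}{\partial T}\right)_{\!\alpha_{\rm A}}
= (1-\beta)\left(\frac{\partial \alpha_{\rm B}}{\partial T}\right)_{\!\alpha_{\rm A}}
= (1-\beta)\frac{(T_{\rm B}-T_{\rm A})\,q_{\rm B}}{T^{2}},
\]
using $\alpha_{\rm B}(1-\alpha_{\rm B}) = q_{\rm B}$ and the formula for $\left(\partial \alpha_{\rm B}/\partial T\right)_{\alpha_{\rm A}}$ displayed just before the lemma. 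This is exactly $Q_{\rm BA}/T$ by the definition of $Q_{\rm BA}$.

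For the second identity, I would differentiate $\alpha = \beta\alpha_{\rm A} + (1-\beta)\alpha_{\rm B}$ with respect to $\alpha_{\rm A}$ at fixed $T$, obtaining
\[
\left(\frac{\partial \alpha}{\partial \alpha_{\rm A}}\right)_{\!T}
= \beta + (1-\beta)\left(\frac{\partial \alpha_{\rm B}}{\partial \alpha_{\rm A}}\right)_{\!T}
= \beta + (1-\beta)\frac{q_{\rm B}}{q_{\rm A}}
= \frac{\beta\,q_{\rm A} + (1-\beta)\,q_{\rm B}}{q_{\rm A}} = \frac{q}{q_{\rm A}},
\]
again invoking the formula for $(\partial\alpha_{\rm B}/\partial\alpha_{\rm A})_{T}$ from Lemma \ref{lem:compatibility} and the definition $q = \beta q_{\rm A} + (1-\beta)q_{\rm B}$.

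There is no genuine obstacle here: the only nontrivial input is the differentiation of the compatibility relation \eqref{eq:state space}, which has already been carried out to produce the derivatives of $\alpha_{\rm B}$; the present lemma is just the linearity of $\alpha$ in $(\alpha_{\rm A},\alpha_{\rm B})$ combined with those two formulas. The mild care needed is to keep track of what is held fixed: in $(\alpha_{\rm A},T)$-coordinates $\alpha_{\rm B}$ is a function of both variables through \eqref{eq:state space}, so $\alpha$ also depends on both, and one must not confuse $(\partial \alpha/\partial \alpha_{\rm A})_T$ with the formal $\beta$ one would get by treating $\alpha_{\rm A}$ and $\alpha_{\rm B}$ as independent.
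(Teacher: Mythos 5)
Your proposal is correct and follows exactly the route the paper intends: the lemma is stated immediately after the displayed formulas for $\left(\partial\alpha_{\rm B}/\partial T\right)_{\alpha_{\rm A}}$ and $\left(\partial\alpha_{\rm B}/\partial\alpha_{\rm A}\right)_{T}$ and the definitions of $q$ and $Q_{\rm BA}$, so it is obtained precisely by differentiating $\alpha=\beta\alpha_{\rm A}+(1-\beta)\alpha_{\rm B}$ and substituting those formulas. Your closing remark about not confusing $(\partial\alpha/\partial\alpha_{\rm A})_T$ with the formal $\beta$ is a sensible caution and consistent with the paper's use of $(T,\alpha_{\rm A})$ as independent variables.
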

\par
In the following sections, we shall adopt $T$ and $\alpha_{\rm A}$ as a set of independent thermodynamic state variables.

%%%%%%%%%%%%%%%%%%%%%%%%%%%%%%%%%%%%%%%%%%%%%%%%%%%%%%%%%%%
\section{Thermodynamic Hugoniot Loci}\label{sec:Hugoniot}
%%%%%%%%%%%%%%%%%%%%%%%%%%%%%%%%%%%%%%%%%%%%%%%%%%%%%%%%%%%%%
In the one-dimensional gas dynamics,
%%%%%For the gas dynamic equations \eqref{eq:system},
 the Rankine-Hugoniot conditions for a single discontinuity of constant speed $s$ are
\begin{equation}\label{eq:RH}
\left\{
\begin{array}{l}
s[\rho] = [\rho u],
\\
s\displaystyle[\rho u] = [\rho u^2 + p],
\\
s\displaystyle[\rho E] = [\rho u E + pu].
\end{array}
\right.
\end{equation}
Here we denote $[\rho]=\rho_+-\rho_-$, where $\rho_\pm$ denote the right and left limits, respectively, of $\rho$ with respect to $x$ at $x=st;$ the same notation is used for the other variables. If $[\rho]=0$ then $[u]=0$ by $\eqref{eq:RH}_1$ and $[p]=0$ by $\eqref{eq:RH}_2$; in this case, $s=u_\pm:$ the speed is equal to the flow velocity and the discontinuity is called a {\it contact discontinuity\/}. From now on we focus on the discontinuity corresponding to eigenvalues $\lambda_\pm$ and assume $[\rho]\ne0$. In this case $s$ can be eliminated  from the first equation and by substituting it into the other two equations, the conditions \eqref{eq:RH} are reduced to
\begin{equation}\label{eq:Rankine-Hugoniot}
\left\{\begin{array}{rcll}
        (u_+ - u_-)^2  + (p_+ - p_-)(v_+ - v_-) & = & 0: &\text{kinetic condition,}\\[1ex]
        e_+ -  e_- + \frac{1}{2}(p_+ + p_-)(v_+ - v_-) & = & 0: &\text{thermodynamic condition.}
                \end{array}\right.
\end{equation}
In the following, we consider a single forward shock front;  we fix a constant state $(p_+,u_+,T_+)$ and consider $(p,u,T) = (p_-,u_-,T_-)$ as a set of state variables. Under this notation, \eqref{eq:Rankine-Hugoniot} is a set of equations for {\it Hugoniot locus} of $(p_+,u_+,T_+).$
For brevity, we call solutions to $\eqref{eq:Rankine-Hugoniot}_1$ and $\eqref{eq:Rankine-Hugoniot}_2,$ respectively, the {\em kinetic} and  {\em thermodynamic Hugoniot loci. }
%\begin{equation}\label{eq:Hugoniot-locus}
%\left\{\begin{array}{rcll}
%        (u - u_0)^2  + (p - p_0)(v - v_0) & = & 0, &\text{kinetic part,}\\[1mm]
%        e -  e_0 + \frac{1}{2}(p + p_0)(v - v_0) & = & 0, &\text{thermodynamic part.}
%                \end{array}\right.
%\end{equation}
\par
In this section we will give a precise description of thermodynamic Hugoniot loci for the present model system and  evaluate, in particular, change of the thermodynamic variables along them; this analysis is fundamental for the study of shock waves (see \cite{Asakura-Corli_reflected}).
\par
The right thermodynamic state is denoted by $(p_+, T_+)$ and the left state $(p_-, T_-).$ 
The thermodynamic Rankine-Hugoniot condition  is written as\footnote{For the sake of convenience, we adopt the notation $\alpha_{\rm A}^{\pm}$ instead of $\alpha_{\rm A \pm}.$ }
\begin{align*}
& T_-\left(1 + \alpha^-\right)\left(4 + \frac{p_+}{p_-}\right)  + 2\left[\beta T_{\rm A}\alpha_{\rm A}^- + (1 - \beta)T_{\rm B}\alpha_{\rm B}^-\right]\\
& = T_+\left(1 + \alpha^+\right)\left(4 + \frac{p_-}{p_+}\right) + 2\left[\beta T_{\rm A}\alpha_{\rm A}^+ + (1 - \beta)T_{\rm B}\alpha_{\rm B}^+\right]
\end{align*}
The pressure is expressed as
$$%\begin{align*}
  p = \frac{(1 - \alpha_{\rm A})(1 + \alpha)}{\mu_{\rm A}\alpha_{\rm A}\alpha}T^{\frac{5}{2}} e^{-\frac{T_{\rm A}}{T}}
    = \frac{(1 - \alpha_{\rm B})(1 + \alpha)}{\mu_{\rm B}\alpha_{\rm B}\alpha} T^{\frac{5}{2}} e^{-\frac{T_{\rm B}}{T}}.
$$%\end{align*}
and 
thus
\begin{equation}\label{eq:p_-/p_+}
\frac{p_-}{p_+}  = \frac{(1 - \alpha_{\rm A}^-)(1 + \alpha^-)\alpha_{\rm A}^+\alpha^+}
{(1 - \alpha_{\rm A}^+)( 1 + \alpha^+)\alpha_{\rm A}^- \alpha^- }
\left(\frac{T_-}{T_+}\right)^{\! \frac{5}{2}} e^{-\frac{T_{\rm A}}{T_-} + \frac{T_{\rm A}}{T_+}}, \quad
\frac{v_-}{v_+}  
%= \frac{(1 - \alpha_{\rm A}^+)\alpha_{\rm A}^-\alpha^-}
%{(1 - \alpha_{\rm A}^-)\alpha_{\rm A}^+ \alpha^+}
%\left(\frac{T_+}{T_-}\right)^{\! -\frac{3}{2}} e^{-\frac{T_{\rm A}}{T_+} + \frac{T_{\rm A}}{T_-}}
= \frac{p_+T_-(1 + \alpha^-)}{p_-T_+(1 + \alpha^+)}.%\label{eq:v_-/v_+}
\end{equation}
Consequently we have by setting $T = T_-, \alpha = \alpha^-, \alpha_{\rm A} = \alpha_{\rm A}^-$ and $\alpha_{\rm B} = \alpha_{\rm B}^-$
\begin{align}
   & \frac{T }{T_+}
    \left\{\left(1 + \alpha \right)\left[4 +  \frac{(1 - \alpha_{\rm A}^+)(1 + \alpha^+)\alpha_{\rm A} \alpha }
{(1 - \alpha_{\rm A} )( 1 + \alpha )\alpha_{\rm A}^+ \alpha^+ }
\left(\frac{T_+}{T }\right)^{\! \frac{5}{2}} e^{-\frac{T_{\rm A}}{T_+} + \frac{T_{\rm A}}{T}}\right]\right.\nonumber\\
&\hspace{6ex} \left. + 2\left[\beta \alpha_{\rm A}  \frac{T_{\rm A}}{T} + (1 - \beta)\alpha_{\rm B} \frac{T_{\rm B}}{T}\right]\rule{0ex}{4.25ex}\right\}\nonumber\\
& = 
\left(1 + \alpha^+\right)\left[4 +  \frac{(1 - \alpha_{\rm A} )(1 + \alpha )\alpha_{\rm A}^+\alpha^+}
{(1 - \alpha_{\rm A}^+)( 1 + \alpha^+)\alpha_{\rm A}  \alpha  }
\left(\frac{T}{T_+}\right)^{\! \frac{5}{2}} e^{-\frac{T_{\rm A}}{T} + \frac{T_{\rm A}}{T_+}}\right] \nonumber\\
&\hspace{6ex} + 2\left[\beta\alpha_{\rm A}^+ \frac{T_{\rm A}}{T_+} +  (1 - \beta)\alpha_{\rm B}^+\frac{T_{\rm B}}{T_+}\right]. \label{eq:Hugoniot alpha}
\end{align}
\paragraph{Asymptotics:}
%We denote $X^-,X_- = X$ for various thermodynamic quantities $X.$  
We have the following asymptotic formulas.
\begin{theorem}[Asymptotics]\label{thm:Hugoniot asymptotics}
  On the thermodynamic Hugoniot locus \eqref{eq:Hugoniot alpha}, if $T \to 0,$ then $\alpha_{\rm A}, \alpha_{\rm B} \to 0$ and by setting 
$$
A = \textstyle \sqrt{\frac{\alpha_{\rm A}^+ \alpha^+ \left\{4 \left(1 + \alpha^+\right) + 2\left[\beta \alpha_{\rm A}^+ \frac{T_{\rm A}}{T_+} + (1 - \beta)\alpha_{\rm B}^+\frac{T_{\rm B}}{T_+}\right]\right\}e^{\frac{T_{\rm A}}{T_+}}}{\left[\beta + \frac{\mu_{\rm A}}{\mu_{\rm B}}(1 - \beta)\right](1 - \alpha_{\rm A}^+)(1 + \alpha^+)}},
$$
we have
  \begin{equation}\label{eq:asimptotics T(alpha) T to 0}
  \alpha_{\rm A} \sim A \left(\frac{T}{T_+}\right)^{\frac{3}{4}} e^{-\frac{T_{\rm A}}{2T}},\quad
  \alpha_{\rm B}  \sim  \frac{A\mu_{\rm A}}{\mu_{\rm B}} \left(\frac{T}{T_+}\right)^{\frac{3}{4}} e^{-\frac{2T_{\rm B} - T_{\rm A}}{2T}}.
  \end{equation}
  On the other hand, if $T \to \infty,$ then $\alpha_{\rm A}, \alpha_{\rm B} \to 1$ and
 \begin{equation}\label{eq:asymptotics T(alpha)}
   1 - \alpha_{\rm A} \sim \frac{4(1 - \alpha_{\rm A}^+)}{\alpha_{\rm A}^+\alpha^+}\left(\frac{T}{T_+}\right)^{\!-\frac{3}{2}}e^{-\frac{T_{\rm A}}{T_+}}, \  
   1 - \alpha_{\rm B}
    \sim \frac{4\mu_{\rm A}(1 - \alpha_{\rm A}^+)}{\mu_{\rm B}\alpha_{\rm A}^+\alpha^+}\left(\frac{T}{T_+}\right)^{\!-\frac{3}{2}}e^{-\frac{T_{\rm A}}{T_+}}%\label{eq:asymptotics T(alpha) B}
\end{equation}
\end{theorem}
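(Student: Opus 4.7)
The approach is to treat the Hugoniot relation \eqref{eq:Hugoniot alpha} as an asymptotic matching problem. In each limit one identifies which of the several competing terms balance to leading order, and then uses the compatibility relation (either \eqref{eq:compatibility small} for small $T$ or \eqref{eq:state space} for large $T$) to eliminate one of the two unknowns $\alpha_{\rm A}, \alpha_{\rm B}$.

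For the branch $T \to 0$ I would first argue that necessarily $\alpha_{\rm A}\to 0$, whence by \eqref{eq:compatibility small} also $\alpha_{\rm B}\to 0$ and $\alpha \to 0$. Under this ansatz the piece $\tfrac{T}{T_+}(1+\alpha)\cdot 4$ and the correction $2[\beta\alpha_{\rm A}\tfrac{T_{\rm A}}{T_+} + (1-\beta)\alpha_{\rm B}\tfrac{T_{\rm B}}{T_+}]$ on the left of \eqref{eq:Hugoniot alpha} both vanish, as does the $p_-/p_+$ term on the right. The only surviving right-hand contribution is the constant
\[
4(1+\alpha^+) + 2\Bigl[\beta\alpha_{\rm A}^+\tfrac{T_{\rm A}}{T_+} + (1-\beta)\alpha_{\rm B}^+\tfrac{T_{\rm B}}{T_+}\Bigr],
\]
which must therefore balance $\tfrac{T}{T_+}(1+\alpha)(p_+/p_-)$ on the left. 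Since $p_+/p_-$ is proportional to $\alpha_{\rm A}\alpha\,(T_+/T)^{5/2}e^{T_{\rm A}/T}$, this yields $\alpha_{\rm A}\alpha \sim K(T/T_+)^{3/2}e^{-T_{\rm A}/T}$ with $K$ an explicit constant depending only on the $+$-state. The compatibility relation writes $\alpha \approx \alpha_{\rm A}\bigl[\beta + (1-\beta)\tfrac{\mu_{\rm A}}{\mu_{\rm B}}e^{-(T_{\rm B}-T_{\rm A})/T}\bigr]$, and solving this algebraic equation for $\alpha_{\rm A}$ produces the first formula in \eqref{eq:asimptotics T(alpha) T to 0}. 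The asymptotic for $\alpha_{\rm B}$ then follows by substitution back into \eqref{eq:compatibility small}; the product of exponentials $e^{-T_{\rm A}/(2T)}\cdot e^{-(T_{\rm B}-T_{\rm A})/T}$ combines to give the quoted $e^{-(2T_{\rm B}-T_{\rm A})/(2T)}$.

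For the dual branch $T \to \infty$, one posits $\alpha_{\rm A},\alpha_{\rm B}\to 1$, so $\alpha \to 1$, and observes that \eqref{eq:state space} makes $1-\alpha_{\rm B}$ asymptotically proportional to $1-\alpha_{\rm A}$ with a constant determined by $\mu_{\rm A},\mu_{\rm B}$. Now the $p_+/p_-$ term on the left of \eqref{eq:Hugoniot alpha} vanishes exponentially fast, while $p_-/p_+$ is proportional to $(1-\alpha_{\rm A})(T/T_+)^{5/2}e^{T_{\rm A}/T_+}$. The left side grows linearly in $T/T_+$ via its $\tfrac{T}{T_+}(1+\alpha)\cdot 4$ term, so matching forces $(1-\alpha_{\rm A})(T/T_+)^{3/2}\to\,$const, i.e.\ $1-\alpha_{\rm A}\sim C(T/T_+)^{-3/2}$; identifying $C$ from the leading balance gives the first half of \eqref{eq:asymptotics T(alpha)}, and the asymptotic for $1-\alpha_{\rm B}$ then follows immediately from \eqref{eq:state space}.

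The main obstacle is careful bookkeeping of the competing exponential and power-law factors to confirm the dominant balance. In the small-$T$ case one must verify \emph{a posteriori} that the discarded left-hand pieces are truly subdominant (the $\tfrac{T}{T_+}\cdot 4$ term is $O(T)$ and the ionization correction $O(\alpha_{\rm A})$, against the $O(1)$ right-hand constant), that the dropped $p_-/p_+$ on the right decays faster than every retained term, and that the contribution of $\alpha_{\rm B}$ to $\alpha$ via \eqref{eq:compatibility small} is correctly tracked through the prefactor of $(T/T_+)^{3/4}e^{-T_{\rm A}/(2T)}$. In the large-$T$ case one similarly checks that neither $p_+/p_-$ nor the ionization-energy corrections alter the prefactor $C$. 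Once these subordinations are verified, the two stated asymptotics drop out by direct algebra.
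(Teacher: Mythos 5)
Your proposal follows essentially the same route as the paper: show $\alpha_{\rm A},\alpha_{\rm B}\to 0$ (resp.\ $\to 1$) by contradiction, use the compatibility condition to reduce to the single unknown $\alpha_{\rm A}$, and extract the exponent $\tfrac34$ (resp.\ $\tfrac32$) and the prefactor from the dominant balance between the $p_+/p$ term on the left and the $O(1)$ right-hand side (resp.\ the $p/p_+$ term on the right and the $O(T)$ left-hand side). The only difference is presentational: the paper rules out the competing balances by an explicit case analysis on a trial exponent $\kappa$, whereas you defer this to an a posteriori check --- and note that in the large-$T$ regime the discarded term $\frac{T}{T_+}(1+\alpha)\frac{p_+}{p}$ is in fact $O(1)$ rather than exponentially small, though still subdominant as required.
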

\begin{proof}
First we let $T \to 0.$ If $\alpha_{\rm A} \geq \alpha_0 > 0,$ then the first expression of \eqref{eq:Hugoniot alpha} tends to $\infty$ and the second remains bounded, which is contradiction. Hence $\alpha_{\rm A},\, \alpha_{\rm B} \to 0.$ By 
%Lemma \ref{lem:compatibility}
  \eqref{eq:compatibility small}, we have 
$
   	\alpha_{\rm B}   \sim  \frac{\mu_{\rm A}}{\mu_{\rm B}} \alpha_{\rm A} e^{-\frac{T_{\rm B} - T_{\rm A}}{T}}
        $
        for $\alpha_{\rm A}, \alpha_{\rm B} \to 0$
and hence 
$
  \alpha \sim \left[\beta + \frac{\mu_{\rm A}}{\mu_{\rm B}}(1 - \beta)\right]\alpha_{\rm A}.
 $
Suppose that $\alpha_{\rm A}^2T^{-\frac{5}{2}}e^{\frac{T_{\rm A}}{T}} = O(1).$ Then the first expression tends to $0$ and the second remains bounded, which is also contradiction.
\par
We set $\alpha_{\rm A} \sim A \left(\frac{T}{T_+}\right)^{\kappa} e^{-\frac{T_{\rm A}}{2T}}$ for some $A > 0.$  Then
\begin{eqnarray*}
 & &\frac{T}{T_+}\left\{4 +  \frac{(1 - \alpha_{\rm A}^+)(1 + \alpha^+)\left[\beta + \frac{\mu_{\rm A}}{\mu_{\rm B}}(1 - \beta)\right]A^2}
{\alpha_{\rm A}^+ \alpha^+ }\left(\frac{T}{T_+}\right)^{\! 2\kappa - \frac{5}{2}} e^{-\frac{T_{\rm A}}{T_+}}\right\}\\
&\sim & \left(1 + \alpha^+\right)\left\{4 +  \frac{\alpha_{\rm A}^+\alpha^+}
{(1 - \alpha_{\rm A}^+)( 1 + \alpha^+)\left[\beta + \frac{\mu_{\rm A}}{\mu_{\rm B}}(1 - \beta)\right]A^2}
\left(\frac{T}{T_+}\right)^{\!-2 \kappa +  \frac{5}{2}} e^{\frac{T_{\rm A}}{T_+}}\right\} \\
& &  + 2\left[\beta \alpha_{\rm A}^+ \frac{T_{\rm A}}{T_+} + (1 - \beta)\alpha_{\rm B}^+\frac{T_{\rm B}}{T_+}\right].
\end{eqnarray*}
If $2\kappa - \frac{5}{2} = 0, $ then $\kappa = \frac{5}{4},$ which is impossible by the above observation. If $2\kappa - \frac{3}{2} = -2 \kappa + \frac{5}{2},$ then $\kappa = 1$ and $2\kappa - \frac{3}{2} = \frac{1}{2} > 0,$ which is also contradiction. Thus we conclude that $2\kappa - \frac{3}{2} = 0$ and hence $\kappa = \frac{3}{4},$ which implies $\alpha_{\rm A} \sim A \left(\frac{T}{T_+}\right)^{\frac{3}{4}} e^{-\frac{T_{\rm A}}{2T}},\, \alpha_{\rm B}  \sim  \frac{\mu_{\rm A}}{\mu_{\rm B}} \alpha_{\rm A} e^{-\frac{T_{\rm B} - T_{\rm A}}{T}} \sim \frac{A\mu_{\rm A}}{\mu_{\rm B}} \left(\frac{T}{T_+}\right)^{\frac{3}{4}} e^{-\frac{2T_{\rm B} - T_{\rm A}}{2T}}.$ Since $-2 \kappa +  \frac{5}{2} = 1,$ $A$ is determined by the equation
\begin{multline*}
  \frac{(1 - \alpha_{\rm A}^+)(1 + \alpha^+)\left[\beta + \frac{\mu_{\rm A}}{\mu_{\rm B}}(1 - \beta)\right]A^2 e^{-\frac{T_{\rm A}}{T_+}}}
{\alpha_{\rm A}^+ \alpha^+ }\\
 =  4 \left(1 + \alpha^+\right) + 2\left[\beta \alpha_{\rm A}^+ \frac{T_{\rm A}}{T_+} + (1 - \beta)\alpha_{\rm B}^+\frac{T_{\rm B}}{T_+}\right].
\end{multline*}
\par
Next we let $T \to \infty.$
If $\alpha_{\rm A}  \leq 1 - \delta_0\, (\delta_0 > 0),$ then the first expression of \eqref{eq:Hugoniot alpha} goes to $0$ and the second $\infty,$ which is contradiction. Thus $\alpha_{\rm A} \to 1$ as $T \to \infty$ which implies $\alpha_{\rm B} \to 1$ and hence $\alpha \to 1.$ Suppose that $(1 - \alpha_{\rm A})T^{\frac{5}{2}} = O(1).$ Then the first expression is $O(1)T$ and second $O(1),$ which is also contradiction. We may set $1 - \alpha_{\rm A} = B\left(\frac{T}{T_+}\right)^{-\kappa}.$ Then
\begin{eqnarray*}
 && \frac{2T}{T_+}\left[4 +  \frac{(1 - \alpha_{\rm A}^+)(1 + \alpha^+)}
{2B \alpha_{\rm A}^+ \alpha^+ }
\left(\frac{T}{T_+}\right)^{\! \kappa -\frac{5}{2}} e^{-\frac{T_{\rm A}}{T_+}}\right] \\
& \sim &2\left(1 + \alpha^+\right)\left[2 +  \frac{B\alpha_{\rm A}^+\alpha^+}
{(1 - \alpha_{\rm A}^+)( 1 + \alpha^+) }
\left(\frac{T}{T_+}\right)^{\! -\kappa + \frac{5}{2}} e^{\frac{T_{\rm A}}{T_+}}\right]\\
  &&
  + 2\left[\beta \alpha_{\rm A}^+ \frac{T_{\rm A}}{T_+} + (1 - \beta)\alpha_{\rm B}^+\frac{T_{\rm B}}{T_+}\right].
\end{eqnarray*}
If $\kappa -\frac{5}{2} = 0,$ then the first expression is $O(1)T$ and second $O(1),$ which is  impossible.  If $\kappa -\frac{3}{2} = -\kappa +\frac{5}{2},$ then $\kappa = 2.$ In this case, the first expression is $O(1)T$ and second $O(1),$ which is  also impossible.  Thus we find  $-\kappa + \frac{5}{2} = 1$ and hence $\kappa = \frac{3}{2}.$
We have $4 = \frac{B\alpha_{\rm A}^+\alpha^+e^{\frac{T_{\rm A}}{T_+}}}{1 - \alpha_{\rm A}^+}$ and thus obtain the asymptotic form of $\alpha_{\rm A}.$ Formula for $\alpha_{\rm B}$ is derived from
$$
1 - \alpha_{\rm B}
    = \frac{\mu_{\rm B}(1 - \alpha_{\rm A})}{\mu_{\rm A}\alpha_{\rm A} e^{-\frac{T_{\rm B} - T_{\rm A}}{T}}+ \mu_{\rm B}(1 - \alpha_{\rm A})}
$$
\end{proof}
\paragraph{Loss of Monotonicity:}
For single monatomic gases, Hugoniot loci are graphs of strictly increasing functions in $(T, \alpha)$ plane (\cite{Asakura-Corli_ionized}, \cite{Asakura-Corli_reflected}). We will show in this subsection that it is not always the case for  mixed monatomic gases. 
Let us denote
$$
\Theta^+ = \left(\frac{T_+}{T}\right)^{\! \frac{5}{2}} e^{-\frac{T_{\rm A}}{T_+} + \frac{T_{\rm A}}{T}}, \quad
\Theta_+ = \left(\frac{T}{T_+}\right)^{\! \frac{5}{2}} e^{-\frac{T_{\rm A}}{T} + \frac{T_{\rm A}}{T_+}}
$$
and
$$
   K^+ =\frac{(1 - \alpha_{\rm A}^+)(1 + \alpha^+)\alpha_{\rm A}\alpha}
{(1 - \alpha_{\rm A})( 1 + \alpha)\alpha_{\rm A}^+ \alpha^+ }, \quad
   K_+ =\frac{(1 - \alpha_{\rm A})(1 + \alpha)\alpha_{\rm A}^+\alpha^+}
{(1 - \alpha_{\rm A}^+)( 1 + \alpha^+)\alpha_{\rm A} \alpha }.
$$
Note that $K^+ \to 0,\, K_+ \to \infty$ as $\alpha_{\rm A} \to 0$ and $K^ +\to \infty,\, K_+ \to 0$ as $\alpha_{\rm A} \to 1.$
Obviously
$$%\begin{equation}\label{eq:p_+/p Omega}
  \frac{p_+}{p} = K^+ \Theta^+, \ 
 \frac{p}{p_+} = K_+ \Theta_+\ 
 $$%\end{equation}
  and 
\begin{equation}\label{eq:d Theta/dT}
\frac{d\Theta^+}{dT} = -\frac{1}{T}\left(\frac{5}{2} + \frac{T_{\rm A}}{T}\right)\Theta^+, \quad
\frac{d\Theta_+}{dT} = \frac{1}{T}\left(\frac{5}{2} + \frac{T_{\rm A}}{T}\right)\Theta_+.
\end{equation}
It follows from 
%Lemma \ref{lem:dalpha_B/dT} 
 \eqref{eq:dalpha_B/dT} that
\begin{align}%\begin{equation}\label{eq:dOmega/dalpha}
& \left(\frac{\partial K^+}{\partial \alpha_{\rm A}}\right)_{\! T} 
= \frac{K^+}{q_{\rm A}}\left[1 + \frac{q}{\alpha(1 + \alpha)}\right], 
&&  \left(\frac{\partial K_+}{\partial \alpha_{\rm A}}\right)_{\! T} 
= -\frac{K_+}{q_{\rm A}}\left[\frac{1}{q_{\rm A}} + \frac{q}{\alpha(1 + \alpha)}\right]\label{eq:dOmega/dalpha}\\
%\end{equation}
%\begin{equation}\label{eq:dOmega/dT}
&  \left(\frac{\partial K^+}{\partial T}\right)_{\! \alpha_{\rm A}} 
= \frac{K^+Q_{\rm BA}}{\alpha(1 + \alpha)T}, 
&&  \left(\frac{\partial  K_+}{\partial T}\right)_{\! \alpha_{\rm A}} 
= -\frac{K_+Q_{\rm BA}}{\alpha(1 + \alpha)T^2}.\label{eq:dOmega/dT}
\end{align}%\end{equation}
\par
By defining 
\begin{align}
 H (T, \alpha_{\rm A}) & =  T\left(1 + \alpha\right)\left(4 +  K^+\Theta^+\right) + 2\left[\beta\alpha_{\rm A} T_{\rm A} +  (1 - \beta)\alpha_{\rm B}T_{\rm B}\right]\nonumber\\
& - T_+\left(1 + \alpha^+\right)\left(4 +  K_+\Theta_+\right)  - 2\left[\beta \alpha_{\rm A}^+ T_{\rm A} + (1 - \beta)\alpha_{\rm B}^+T_{\rm B}\right],\label{eq:H}
\end{align}
the Rankine-Hugoniot condition \eqref{eq:Hugoniot alpha} is equivalent to $H (T, \alpha_{\rm A}) = 0.$
\par
Using  \eqref{eq:dalpha_B/dT} and \eqref{eq:dOmega/dalpha}, 
%and \eqref{eq:p_+/p Omega},
we have
\begin{align}
  \left(\frac{\partial H}{\partial \alpha_{\rm A}}\right)_{\! T} 
  & =  \frac{T\left(1 + \alpha\right)}{q_{\rm A}}\left[1 + \frac{q}{\alpha(1 + \alpha)}\right]\frac{p_+}{p}
  + \frac{T_+\left(1 + \alpha^+\right)}{q_{\rm A}}\left[1 + \frac{q}{\alpha(1 + \alpha)}\right]\frac{p}{p_+}\nonumber\\
&+ \frac{Tq}{q_{\rm A}} \left(4 +  \frac{p_+}{p}\right)
 + \frac{2}{q_{\rm A}}\left[\beta q_{\rm A}T_{\rm A} +  (1 - \beta)q_{\rm B}T_{\rm B}\right] , \label{eq:dH/d alpha_2}
\end{align}
showing that $\left(\frac{\partial H}{\partial \alpha_{\rm A}}\right)_{\! T}  > 0.$ In the same way
\begin{align}
  & \left(\frac{\partial H}{\partial T}\right)_{\! \alpha_{\rm A}}
  = 4(1 + \alpha)\left[1 + \frac{Q_{\rm BA}}{1 + \alpha}\left(1 + \frac{T_{\rm B}}{2T}\right)\right]\nonumber\\
&- (1 + \alpha)\left(\frac{3}{2}+ \frac{T_{\rm A}}{T}
- \frac{Q_{\rm BA}}{\alpha}\right)\frac{p_+}{p}%\nonumber\\
 - (1 + \alpha^+)\frac{T_+}{T}\left[\frac{5}{2} + \frac{T_{\rm A}}{T}
 - \frac{Q_{\rm BA}}{\alpha(1 + \alpha)}\right]\frac{p}{p_+}.\label{eq:dH/dT}
\end{align}

\begin{theorem}\label{thm:Hugoniot graph}
For every $T>0,$ there is a unique $0 < \alpha_{\rm A} < 1$ such that $H (T, \alpha_{\rm A}) =  0$ and the function $\alpha_{\rm A} =  \alpha_{\rm A}(T)$ is differentiable. 
\end{theorem}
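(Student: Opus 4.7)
The plan is to exhibit $\alpha_{\rm A}(T)$, at each fixed $T > 0$, as the unique zero of the strictly increasing map $\alpha_{\rm A} \mapsto H(T, \alpha_{\rm A})$ on $(0,1)$, and then to upgrade existence to differentiability by the implicit function theorem. Strict monotonicity is already in hand: equation \eqref{eq:dH/d alpha_2} yields $\left(\partial H/\partial \alpha_{\rm A}\right)_T > 0$ on $(0,1)$. What remains is the boundary behaviour
$H(T, \alpha_{\rm A}) \to -\infty$ as $\alpha_{\rm A} \to 0^+$ and $H(T, \alpha_{\rm A}) \to +\infty$ as $\alpha_{\rm A} \to 1^-$.

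For the endpoint analysis, note that at fixed $T$ the quantities $\Theta^+$ and $\Theta_+$ are positive constants in $\alpha_{\rm A}$ and that $K^+ K_+ = 1$ directly from the definitions. As $\alpha_{\rm A} \to 0^+$, Lemma \ref{lem:compatibility} gives $\alpha_{\rm B} \to 0^+$ together with $\alpha \sim \bigl[\beta + (\mu_{\rm A}/\mu_{\rm B})(1-\beta) e^{-(T_{\rm B}-T_{\rm A})/T}\bigr]\alpha_{\rm A}$; substituting into the definition of $K^+$ yields $K^+ = O(\alpha_{\rm A}^2) \to 0$ and hence $K_+ \to +\infty$. In \eqref{eq:H} every term except $-T_+(1+\alpha^+) K_+ \Theta_+$ then remains bounded, and so $H \to -\infty$. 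Symmetrically, as $\alpha_{\rm A} \to 1^-$ compatibility forces $\alpha_{\rm B} \to 1^-$ and $\alpha \to 1$; the factor $(1-\alpha_{\rm A})$ in the denominator of $K^+$ drives $K^+ \to +\infty$ and $K_+ \to 0$, so that $T(1+\alpha)K^+\Theta^+$ diverges while the other terms stay bounded, yielding $H \to +\infty$.

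Once these limits are in place, continuity of $H(T, \cdot)$ on $(0,1)$ and the strict monotonicity from \eqref{eq:dH/d alpha_2} combine via the intermediate value theorem to produce exactly one zero $\alpha_{\rm A}(T) \in (0,1)$. Since $H$ is smooth in both arguments and $\left(\partial H/\partial \alpha_{\rm A}\right)_T(T, \alpha_{\rm A}(T)) > 0$, the implicit function theorem delivers a smooth (in particular differentiable) solution curve $T \mapsto \alpha_{\rm A}(T)$ on $(0,\infty)$.

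The only substantive step is the endpoint analysis; everything else is a direct consequence of the monotonicity already in hand and of the implicit function theorem. I do not expect any delicate cancellations, because in each of the two limits a single term of \eqref{eq:H} dominates with the correct sign while the remaining terms remain bounded, so the argument reduces to a routine bookkeeping of the asymptotics of $K^\pm$ supplied by Lemma \ref{lem:compatibility}. The heavy lifting of the theorem has therefore already been carried out in the preparatory derivation of formula \eqref{eq:dH/d alpha_2}.
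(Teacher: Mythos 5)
Your proposal is correct and follows essentially the same route as the paper: establish $H(T,\cdot)\to-\infty$ at $\alpha_{\rm A}=0$ and $H(T,\cdot)\to+\infty$ at $\alpha_{\rm A}=1$, invoke the strict positivity of $\left(\partial H/\partial\alpha_{\rm A}\right)_T$ from \eqref{eq:dH/d alpha_2} for uniqueness, and conclude differentiability by the implicit function theorem. The only difference is that you spell out the endpoint asymptotics of $K^+$ and $K_+$ via the compatibility condition, which the paper simply asserts; your bookkeeping there is accurate.
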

\begin{proof}
For every fixed $T>0,$ $H (T, \alpha_{\rm A}) \to -\infty$ as $\alpha_{\rm A} \to 0,$ and $H (T, \alpha_{\rm A}) \to \infty$ as $\alpha_{\rm A} \to 1.$ Thus there is at least one $\alpha_{\rm A}$ such that $H (T, \alpha_{\rm A}) =  0.$ Since $\left(\frac{\partial H}{\partial \alpha_{\rm A}}\right)_{\! T}  > 0,$ such $\alpha_{\rm A}$ is uniquely determined and the correspondence $T \to \alpha_{\rm A}$ is differentiable. Thus the theorem follows.
\end{proof}
\par
Let us study the sign of $\left(\frac{\partial H}{\partial T}\right)_{\! \alpha_{\rm A}}$ at $(T_{+}, \alpha_{\rm A}^{+}).$
\begin{theorem}\label{thm:sign dH/dT}
%Suppose that $\frac{18}{11}T_{\rm A} >  T_{\rm B}.$  Let $\alpha_{\rm B}^+ = \frac{2}{5}, \, \frac{T_{\rm B}}{T_+} = \frac{25}{6}$ and $\alpha_{\rm A}^+$ satisfy the compatibility condition \eqref{eq:state space}.
  If $\beta$ is sufficiently close to $0,$  then
  $$
     \frac{d\alpha_{\rm A}}{dT}(T_+) = - \frac{\left(\frac{\partial H}{\partial \alpha_{\rm A}}\right)_{\! T}(T_{+}, \alpha_{\rm A}^{+})}{\left(\frac{\partial H}{\partial T}\right)_{\alpha_{\rm A}}(T_{+}, \alpha_{\rm A}^{+}) } < 0,
     $$
 showing that  $\alpha_{\rm A}$ is a decreasing function of $T$ in a neighbourhood of $T = T_+,$
\end{theorem}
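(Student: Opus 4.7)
The plan is to apply the implicit-function-theorem formula, which at $(T_+,\alpha_{\rm A}^+)$ reads $d\alpha_{\rm A}/dT(T_+) = -(\partial H/\partial T)_{\alpha_{\rm A}}/(\partial H/\partial \alpha_{\rm A})_T$. Since \eqref{eq:dH/d alpha_2} shows $(\partial H/\partial \alpha_{\rm A})_T>0$ unconditionally, everything reduces to proving $(\partial H/\partial T)_{\alpha_{\rm A}}(T_+,\alpha_{\rm A}^+)>0$ for $\beta$ near $0$.

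First I would specialize \eqref{eq:dH/dT} at $(T_+,\alpha_{\rm A}^+)$. At this point $T=T_+$, $\alpha=\alpha^+$ and, because Saha's equations determine $p$ from $(T,\alpha_{\rm A})$, also $p=p_+$; consequently $p_+/p = p/p_+ = 1$, $T_+/T = 1$ and $(1+\alpha^+)T_+/T = 1+\alpha$. The ``ideal-gas'' contribution vanishes since $4-\tfrac{3}{2}-\tfrac{5}{2}=0$, the two $T_{\rm A}/T$-terms assemble into $-2(1+\alpha)T_{\rm A}/T$, and the three $Q_{\rm BA}$-pieces combine. The expected outcome is the compact identity
\[
T_+\left(\frac{\partial H}{\partial T}\right)_{\!\alpha_{\rm A}}\!(T_+,\alpha_{\rm A}^+) \;=\; -2(1+\alpha)T_{\rm A} \;+\; T_+ Q_{\rm BA}\left[4 + \frac{2T_{\rm B}}{T_+} + \frac{2+\alpha}{\alpha}\right].
\]

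Next I would pass to the limit $\beta\to 0$. Then $\alpha\to\alpha_{\rm B}^+$, $q_{\rm B}\to\alpha_{\rm B}^+(1-\alpha_{\rm B}^+)$ and $T_+ Q_{\rm BA}\to (T_{\rm B}-T_{\rm A})\alpha_{\rm B}^+(1-\alpha_{\rm B}^+)$, so the right-hand side collapses to
\[
-2(1+\alpha_{\rm B}^+)T_{\rm A} \;+\; (T_{\rm B}-T_{\rm A})(1-\alpha_{\rm B}^+)\bigl[(2+\alpha_{\rm B}^+) + \alpha_{\rm B}^+(4 + 2T_{\rm B}/T_+)\bigr].
\]
By continuous dependence on $\beta$ of the entire Saha/compatibility system, it is enough to exhibit a base state $(T_+,\alpha_{\rm A}^+)$ at which this $\beta=0$ expression is strictly positive; positivity will then persist for all sufficiently small $\beta>0$ and produce $d\alpha_{\rm A}/dT(T_+)<0$.

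The main obstacle is precisely this sign analysis. In the two trivial limits the expression is negative: $\alpha_{\rm B}^+\to 0$ yields $2(T_{\rm B}-2T_{\rm A})\le 0$ under the standing hypothesis $T_{\rm B}\le 2T_{\rm A}$, while $\alpha_{\rm B}^+\to 1$ yields $-4T_{\rm A}<0$. Consequently, positivity can only hold in an intermediate band of $\alpha_{\rm B}^+$ whose width is controlled by $T_+$ and by the gap $T_{\rm B}-T_{\rm A}$. Translating this band back through the compatibility formula \eqref{eq:state space} to a non-empty range of admissible base states $(T_+,\alpha_{\rm A}^+)$ -- possibly guided by the low-$T$ asymptotics of Theorem \ref{thm:Hugoniot asymptotics} -- is the delicate step; the rest of the argument is routine algebraic bookkeeping together with the implicit-function theorem applied to $H(T,\alpha_{\rm A})=0$.
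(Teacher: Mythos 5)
Your reduction is exactly the paper's: you specialize $\bigl(\partial H/\partial T\bigr)_{\alpha_{\rm A}}$ at the base state using $p=p_+$, obtain
$T_+\bigl(\partial H/\partial T\bigr)_{\alpha_{\rm A}}=-2(1+\alpha)T_{\rm A}+T_+Q_{\rm BA}\bigl[4+\tfrac{2T_{\rm B}}{T_+}+\tfrac{2+\alpha}{\alpha}\bigr]$, and then set $\beta=0$; this is precisely $-4(1+\alpha)T_+$ times the function $F$ the paper works with, and your endpoint checks ($2(T_{\rm B}-2T_{\rm A})\le 0$ as $\alpha_{\rm B}^+\to0$, $-4T_{\rm A}<0$ as $\alpha_{\rm B}^+\to1$) are consistent with it. (Your implicit-function formula is the standard one; the quotient displayed in the theorem is inverted, but only the sign matters.) The algebra you present is correct.

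The gap is the step you defer as ``delicate'': you never actually exhibit a base state where the $\beta=0$ expression is positive, and without that the continuity-in-$\beta$ argument has nothing to perturb from. Moreover your framing of the obstruction is off target: positivity is not confined to an ``intermediate band of $\alpha_{\rm B}^+$'' that might be empty. In your own expression
$-2(1+\alpha_{\rm B}^+)T_{\rm A}+(T_{\rm B}-T_{\rm A})(1-\alpha_{\rm B}^+)\bigl[2+5\alpha_{\rm B}^+ +\tfrac{2T_{\rm B}\alpha_{\rm B}^+}{T_+}\bigr]$,
fix any $\alpha_{\rm B}^+\in(0,1)$ and let $T_+\to0^+$: the term $\tfrac{2T_{\rm B}\alpha_{\rm B}^+}{T_+}$ blows up while everything else stays bounded, so the expression is strictly positive for $T_+$ small. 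This is exactly the one-line observation the paper uses. The translation back to an admissible base state is also not delicate: by Lemma \ref{lem:compatibility} the map $\alpha_{\rm A}\mapsto\alpha_{\rm B}(T_+,\alpha_{\rm A})$ is continuous, strictly increasing, and runs from $0$ to $1$, so every pair $(T_+,\alpha_{\rm B}^+)$ with $\alpha_{\rm B}^+\in(0,1)$ is realized by a unique $\alpha_{\rm A}^+\in(0,1)$; no low-$T$ Hugoniot asymptotics are needed. With these two sentences added, your argument closes and coincides with the paper's proof.
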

\begin{proof}
We find by the above expression that $\left(\frac{\partial H}{\partial T}\right)_{\! \alpha_{\rm A}} > 0$ if and only if the following $F(T, \alpha_{\rm A}) $ is negative.
\begin{align*}
 F(T, \alpha_{\rm A}) & = \frac{1}{4}\left(\frac{3}{2}+ \frac{T_{\rm A}}{T}
-  \frac{Q_{\rm BA}}{\alpha}\right)\frac{p_+}{p}\\
& +\frac{T_+(1 + \alpha^+)}{4T(1 + \alpha)}\left[\frac{5}{2} + \frac{T_{\rm A}}{T}
 -\frac{Q_{\rm BA}}{\alpha(1 + \alpha)}\right]\frac{p}{p_+}%\\
  -\left[1 + \frac{Q_{\rm BA}}{1 + \alpha}\left(1 + \frac{T_{\rm B}}{2T}\right)\right] 
\end{align*}

\par
 We set $T = T_+$ and  $\alpha_{\rm A} = \alpha_{\rm A}^{+}.$ Since $\frac{p_+}{p} = \frac{T_+}{T} = \frac{\alpha^+}{\alpha} = 1,$ we have
\begin{align*}
 F(T, \alpha_{\rm A})
& =1 + \frac{T_{\rm A}}{2T}
-  \frac{(2 + \alpha)Q_{\rm BA}}{4\alpha(1  +\alpha)}%\\
%&\hspace{4ex}
-\left[1 + \frac{Q_{\rm BA}}{1 + \alpha}\left(1 + \frac{T_{\rm B}}{2T}\right)\right] \\
%&= \frac{T_{\rm A}}{2T} - \frac{(1 - \beta)(2 + 5\alpha)(T_{\rm B} - T_{\rm A})q_{\rm B}}{4\alpha(1  +\alpha) T}
%- \frac{(1 - \beta)(T_{\rm B} - T_{\rm A})T_{\rm B}q_{\rm B}}{2(1 + \alpha)T^2}\\
&= \frac{T_{\rm A}}{2T} - \left[\frac{(1 - \beta)(2 + 5\alpha)q_{\rm B}}{2\alpha(1  +\alpha)}
+ \frac{(1 - \beta)T_{\rm B}q_{\rm B}}{(1 + \alpha)T}\right]\frac{T_{\rm B} - T_{\rm A}}{2T}
\end{align*}
Let us consider the case: $\beta = 0.$ 
\begin{align*}
   F(T, \alpha_{\rm A}) 
&= 1 + \frac{T_{\rm A}}{2T}
-  \frac{(2 + \alpha_{\rm B})(T_{\rm B} - T_{\rm A})q_{\rm B}}{4\alpha_{\rm B}(1  +\alpha_{\rm B}) T}
 -\left[1 + \frac{(T_{\rm B} - T_{\rm A})q_{\rm B}}{(1 + \alpha_{\rm B})T}\left(1 + \frac{T_{\rm B}}{2T}\right)\right]\\
%&= \frac{T_{\rm A}}{2T} - \frac{(2 + 5\alpha_{\rm B})(T_{\rm B} - T_{\rm A})(1 - \alpha_{\rm B})}{4(1  +\alpha_{\rm B}) T} - \frac{(T_{\rm B} - T_{\rm A})T_{\rm B}q_{\rm B}}{2(1 + \alpha_{\rm B})T^2}\\
% &= \frac{T_{\rm A}}{2T} - \left[\frac{(2 + 5\alpha_{\rm B})(1 - \alpha_{\rm B})}{4(1  +\alpha_{\rm B})} + \frac{T_{\rm B}q_{\rm B}}{2(1 + \alpha_{\rm B})T}\right]\frac{T_{\rm B} - T_{\rm A}}{T}\\
 &= \frac{T_{\rm A}}{2T}\left[1 - \frac{1  -\alpha_{\rm B}}{1  +\alpha_{\rm B}}\left(\frac{2 + 5\alpha_{\rm B}}{2} + \frac{T_{\rm B}\alpha_{\rm B}}{T}\right)\frac{T_{\rm B} - T_{\rm A}}{T_{\rm A}}\right].
\end{align*}
Obviously for any $\alpha_{\rm B} > 0,$ there is some $T > 0$ so that the above expression is negative. 
Recall that $\alpha_{\rm B}$ is a continuous function of $\alpha_{\rm A}$ and $T,$ satisfying $\left(\frac{\partial \alpha_{\rm B}}{\partial \alpha_{\rm A}} \right)_{\! T} > 0.$ Moreover $\alpha_{\rm B}(0,T) = 0$ and $\alpha_{\rm B}(1,T) = 1$ for any $T > 0.$
\par
Thus we find that: for any $T_+ > 0$ and $0 < \alpha_{\rm B}^+ < 1,$ there is a unique $\alpha_{\rm A}^+$ such that $\alpha_{\rm B}(T_+, \alpha_{\rm A}^+) = \alpha_{\rm B}^+$ which implies $ F(T_+, \alpha_{\rm A}^+) < 0.$  Since $F(T, \alpha_{\rm A})$ is continuous function of  $\beta,$ the theorem follows.
%We choose
%$$
%   \alpha_{\rm B} = \frac{2}{5}, \quad \frac{T_{\rm B}}{T} = \frac{25}{6}.
%$$
%Then $\frac{(2 + 5\alpha_{\rm B})(1 - \alpha_{\rm B})}{4(1  +\alpha_{\rm B})} + \frac{T_{\rm B}q_{\rm B}}{2(1 + \alpha_{\rm B})T} = \frac{11}{14}$ and
%$$
%   F(T, \alpha_{\rm A}) = \frac{T_{\rm A}}{2T} - \frac{11}{14}\frac{T_{\rm B} - T_{\rm A}}{T} = \frac{11}{14}\left(\frac{18}{11}T_{\rm A} - T_{\rm B}\right).
%$$
%   Thus we conclude that: if $\frac{18}{11}T_{\rm A} - T_{\rm B} < 0,$ then $F(T_+, \alpha_{\rm A}^+) < 0$ for sufficiently small $\beta,$ showing that the Hugoniot locus is decreasing in a neighbourhood of $(T_+, \alpha_{\rm A}^+).$
\end{proof}
\begin{figure}[hbt]
\centering
 \includegraphics[width =.485\linewidth]{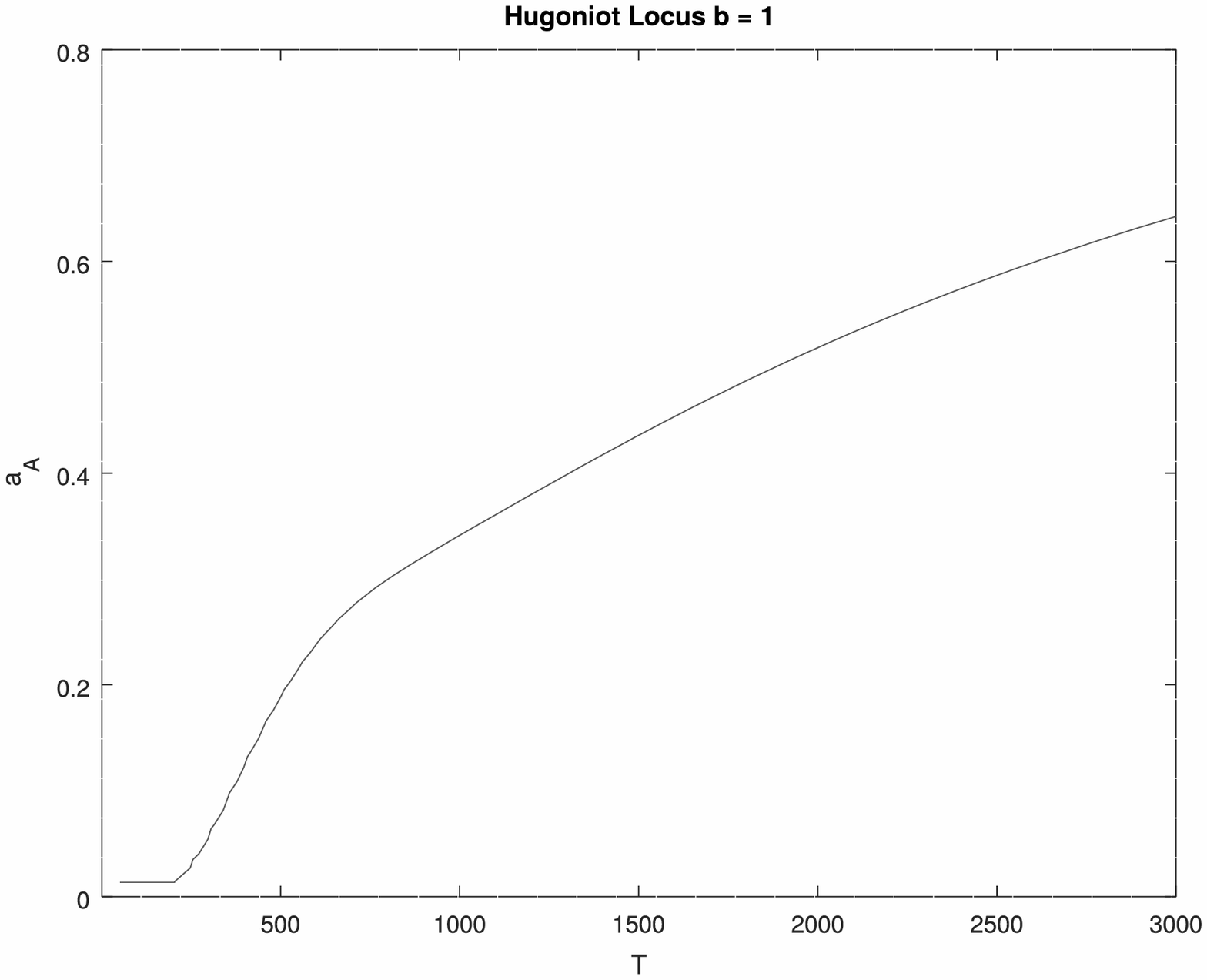}
 %\hspace{6ex}
 \includegraphics[width =.485\linewidth]{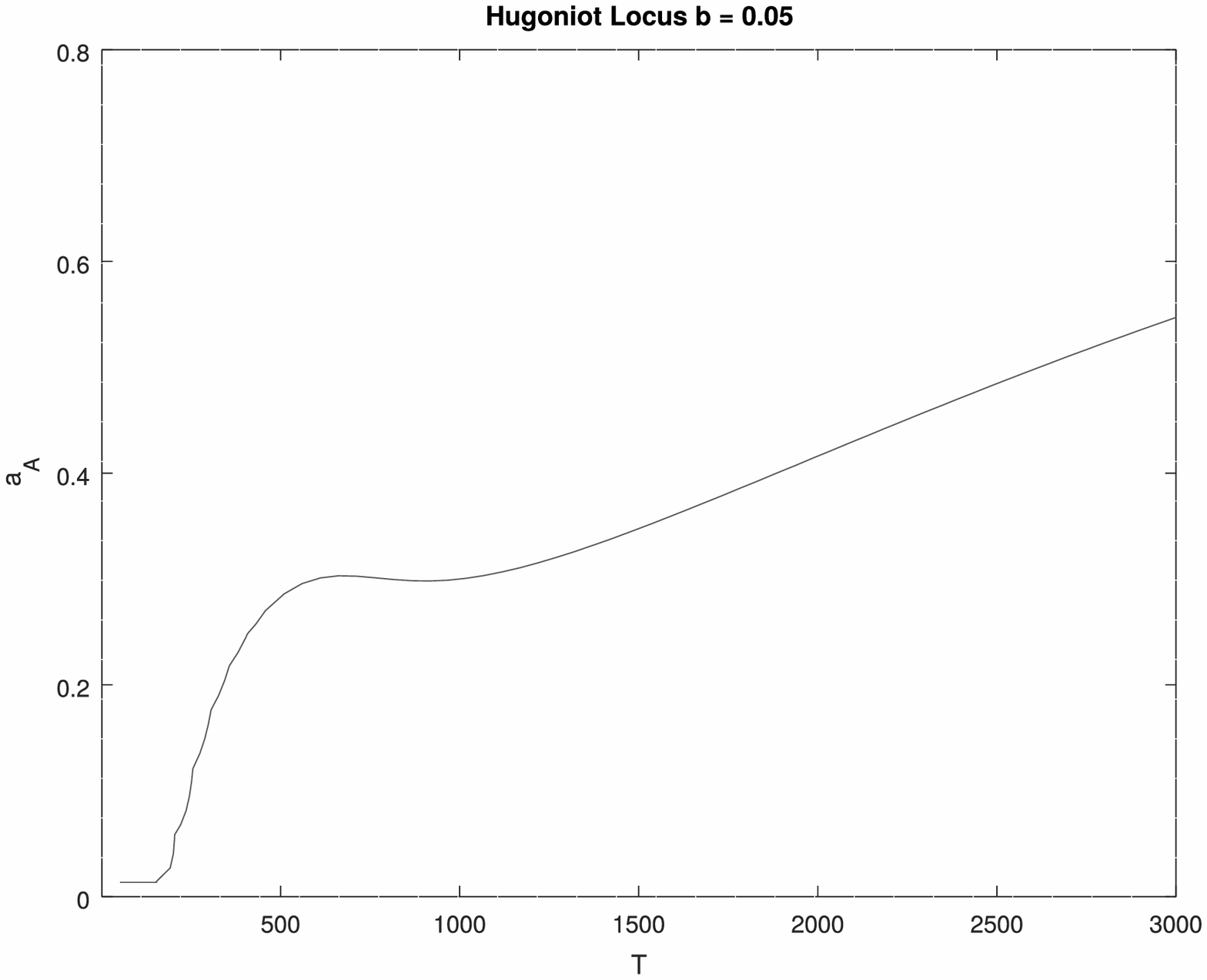}
 \caption{$T_{\rm A} = 1576.0,  T_{\rm B} = 2853.2, T= 800, \alpha_{\rm A}^+ = 0.3$ left: $\beta = 1$(single monatomic), right: $\beta = 0.05$}\label{fig:3}
\end{figure}

\paragraph{Pressure Change:}
Though the degree of ionization does not always increase across the shock front, even if the temperature increases, we will prove in this subsection:
\begin{theorem}\label{thm:dp/dT}
  The pressure $p$ strictly  increases along the Hugoniot locus as the temperature $T$ increases. 
  \end{theorem}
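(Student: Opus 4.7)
My plan is to reparametrize the thermodynamic Hugoniot locus using $(T,p)$ as independent variables rather than $(T,\alpha_{\rm A})$. Saha's equations \eqref{eq:saha bar(alpha)} and the compatibility condition \eqref{eq:compatibility} determine $\alpha_{\rm A},\alpha_{\rm B},\alpha$ smoothly as functions of $(T,p)$, and since we already know from \eqref{eq:alpha_Ap} that $(\partial\alpha_{\rm A}/\partial p)_T < 0$, the change of variables is a diffeomorphism on the physical region. Viewing $H$ in \eqref{eq:H} as a function of $(T,p)$, the implicit function theorem yields, along the locus $H=0$,
$$
\frac{dp}{dT} = -\frac{(\partial H/\partial T)_{p}}{(\partial H/\partial p)_{T}},
$$
so the claim reduces to showing $(\partial H/\partial T)_p>0$ and $(\partial H/\partial p)_T<0$ at every admissible state.

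The sign of $(\partial H/\partial p)_T$ is immediate. By \eqref{eq:alpha_Ap}, the derivatives $(\partial\alpha_{\rm A}/\partial p)_T$, $(\partial\alpha_{\rm B}/\partial p)_T$ and hence $(\partial\alpha/\partial p)_T$ are all strictly negative, and direct differentiation of \eqref{eq:H} produces four contributions: $T(\partial\alpha/\partial p)_T(4+p_+/p)$ from the $(1+\alpha)$ factor, $-T(1+\alpha)p_+/p^2$ from the $p_+/p$ factor, $2\beta T_{\rm A}(\partial\alpha_{\rm A}/\partial p)_T+2(1-\beta)T_{\rm B}(\partial\alpha_{\rm B}/\partial p)_T$ from the ionization-potential terms, and $-T_+(1+\alpha^+)/p_+$ from the base-state term. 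Each of these is manifestly negative, so no cancellations need to be tracked.

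The delicate step is $(\partial H/\partial T)_p>0$. Here I would substitute \eqref{eq:alpha_AT}--\eqref{eq:alpha_BT} into $(\partial\alpha/\partial T)_p=\beta(\partial\alpha_{\rm A}/\partial T)_p+(1-\beta)(\partial\alpha_{\rm B}/\partial T)_p$; the antisymmetric $(T_{\rm A}-T_{\rm B})$-pieces then cancel exactly, yielding $(\partial\alpha/\partial T)_p>0$. In the closely related combination $\beta T_{\rm A}(\partial\alpha_{\rm A}/\partial T)_p+(1-\beta)T_{\rm B}(\partial\alpha_{\rm B}/\partial T)_p$ the antisymmetric pieces combine instead into the manifestly nonnegative quantity $\beta(1-\beta)q_{\rm A}q_{\rm B}(T_{\rm B}-T_{\rm A})^2/[T^2(\alpha(1+\alpha)+q)]$, while the symmetric pieces form a sum of strictly positive terms. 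Adding the obvious positive contribution $(1+\alpha)(4+p_+/p)+T(\partial\alpha/\partial T)_p(4+p_+/p)$ coming from the explicit $T$-dependence in \eqref{eq:H} gives $(\partial H/\partial T)_p>0$, and the theorem follows.

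The main obstacle is precisely this last step: \eqref{eq:alpha_AT} by itself does not guarantee $(\partial\alpha_{\rm A}/\partial T)_p>0$, because its antisymmetric piece is negative, and indeed this is exactly the failure underlying the loss-of-monotonicity phenomenon of Theorem \ref{thm:sign dH/dT}. Monotonicity of $p$ in $T$ survives only because the A- and B-contributions enter $H$ weighted by $\beta T_{\rm A}$ and $(1-\beta)T_{\rm B}$, producing the perfect $(T_{\rm A}-T_{\rm B})$-antisymmetric cancellation noted above. Verifying that this structural coincidence is genuine, and not a miscount of signs, is essentially the only real labour in the argument.
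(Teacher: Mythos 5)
Your proposal is correct, but it takes a genuinely different route from the paper's. The paper stays in $(T,\alpha_{\rm A})$ coordinates: it differentiates $\log p$ along the locus, substitutes $\frac{d\alpha_{\rm A}}{dT}=-\left(\frac{\partial H}{\partial T}\right)_{\alpha_{\rm A}}\big/\left(\frac{\partial H}{\partial \alpha_{\rm A}}\right)_T$, and is then left with a single complicated numerator, $\left[1+\frac{q}{\alpha(1+\alpha)}\right]\left(\frac{\partial H}{\partial T}\right)_{\alpha_{\rm A}}+\frac{q_{\rm A}}{T}\left[\frac52+\frac{T_{\rm A}}{T}-\frac{Q_{\rm BA}}{\alpha(1+\alpha)}\right]\left(\frac{\partial H}{\partial \alpha_{\rm A}}\right)_T$, whose positivity is established through a chain of delicate estimates such as $\frac{T_{\rm B}}{T}-\frac{Q_T}{\alpha}\ge\frac{T_{\rm B}}{T}\left(1-\frac{q}{\alpha}\right)\ge 0$. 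Your passage to $(T,p)$ coordinates is legitimate: $\left(\frac{\partial p}{\partial \alpha_{\rm A}}\right)_T<0$ everywhere, so the coordinate change is a diffeomorphism and your formula for $dp/dT$ agrees with the paper's by the chain rule. It decouples the problem completely: every term of $\left(\frac{\partial H}{\partial p}\right)_T$ is manifestly negative by \eqref{eq:alpha_Ap}, and every term of $\left(\frac{\partial H}{\partial T}\right)_p$ is nonnegative once the $(T_{\rm A}-T_{\rm B})$-antisymmetric parts of \eqref{eq:alpha_AT}--\eqref{eq:alpha_BT} are combined --- cancelling exactly in $\left(\frac{\partial \alpha}{\partial T}\right)_p$ and producing the square $\beta(1-\beta)q_{\rm A}q_{\rm B}(T_{\rm B}-T_{\rm A})^2/(T^2\Sigma)$ in the weighted sum. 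I have checked both sign claims and both cancellation identities; they hold, and your closing observation --- that $\left(\frac{\partial \alpha_{\rm A}}{\partial T}\right)_p$ alone may be negative, so positivity lives only in the combinations $\beta\alpha_{\rm A}+(1-\beta)\alpha_{\rm B}$ and $\beta T_{\rm A}\alpha_{\rm A}+(1-\beta)T_{\rm B}\alpha_{\rm B}$ that actually enter $H$ --- is precisely the structural fact that makes the paper's longer computation close as well. What your version buys is that nothing sharper than term-by-term sign inspection is needed; what the paper's version buys is that it works in the coordinates $(T,\alpha_{\rm A})$ used throughout the Hugoniot analysis, so the derivatives \eqref{eq:dH/dT 2} and \eqref{eq:dH/dalpha 2} computed there are reused verbatim in the uniqueness argument of Theorem \ref{thm:1ness intersection}.
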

\begin{proof}
First we notice that:  by setting
\begin{equation}\label{eq:Q_T}
 Q_T = \frac{\beta q_{\rm A}T_{\rm A} +  (1 - \beta)q_{\rm B}T_{\rm B}}{T},\
 Q_{\rm BA} =  \frac{(1 - \beta)(T_{\rm B} - T_{\rm A})q_{\rm B}}{T}, 
\end{equation}
\eqref{eq:dH/d alpha_2} and \eqref{eq:dH/dT} together with \eqref{eq:p_-/p_+} yield
\begin{align}
   &\textstyle \left(\frac{\partial H}{\partial T}\right)_{\! \alpha_{\rm A}} 
  =   4(1 + \alpha)\left[1 + \frac{ Q_{\rm BA}}{1 + \alpha}\left(1 + \frac{T_{\rm B}}{2T}\right)\right]\nonumber \\
  &\textstyle \quad - \frac{T_+(1 + \alpha^+)}{T}\left\{ \left[\frac{5}{2} + \frac{T_{\rm A}}{T}
     + \frac{ Q_{\rm BA}}{\alpha(1 + \alpha)}\right]\frac{p}{p_+} + \left(\frac{3}{2}+ \frac{T_{\rm A}}{T}
- \frac{ Q_{\rm BA}}{\alpha}\right)\frac{v}{v_+}\right\} \label{eq:dH/dT 2}\\
&\textstyle \left(\frac{\partial H}{\partial \alpha_{\rm A}}\right)_{\! T} 
=  \frac{4}{q_{\rm A}}\left(q  + \frac{Q_T}{2}\right) +  \frac{T_+\left(1 + \alpha^+\right)}{q_{\rm A}}\left\{ \left[1 + \frac{q}{\alpha(1 + \alpha)}\right]\frac{p}{p_+}%\nonumber\\
+ \left(1 + \frac{q}{\alpha}\right)\frac{v}{v_+}\right\} \label{eq:dH/dalpha 2}
\end{align}
\par
 We now compute $\frac{dp}{dT}$ by differentiating both sides of 
$$
   \log p = \log (1 - \alpha_{\rm A}) - \log \alpha_{\rm A} + \log (1 + \alpha) - \log \alpha + \frac{5}{2}\log T - \frac{T_{\rm A}}{T} + \text{const.}
   $$
Using \eqref{eq:dalpha_B/dT} and \eqref{eq:Q_T}, we have
$$% \begin{align*}
           \frac{d\alpha}{dT} 
   %&= \beta \frac{d\alpha_{\rm A}}{dT} + (1 - \beta)  \frac{d\alpha_{\rm B}}{dT}  
           = \beta \frac{d\alpha_{\rm A}}{dT} + (1 - \beta) \left(\frac{\partial\alpha_{\rm B}}{\partial T} + \frac{\partial\alpha_{\rm B}}{\partial \alpha_{\rm A}}\frac{d\alpha_{\rm A}}{d T}  \right)
%           &= \frac{1}{q_{\rm A}}\left\{q\frac{d\alpha_{\rm A}}{dT} + \frac{(1 - \beta)q_{\rm A}q_{\rm B}(T_{\rm B} - T_{\rm A})}{T^2}\right\}\\
           = \frac{1}{q_{\rm A}}\left(q\frac{d\alpha_{\rm A}}{dT} + \frac{Q_{\rm BA}q_{\rm A}}{T}\right).
$$%    \end{align*}
   Thus we get
\begin{align*}
  \frac{1}{p} \frac{dp}{dT}
  %&= -\left(\frac{1}{1 - \alpha_{\rm A}}  + \frac{1}{\alpha_{\rm A}}\right)\frac{d\alpha_{\rm A}}{dT} + \left(\frac{1}{1 + \alpha} - \frac{1}{\alpha}\right)\frac{d\alpha}{dT} + \frac{1}{T}\left(\frac{5}{2} + \frac{T_{\rm A}}{T}\right) \\
 &= -\frac{1}{\alpha_{\rm A}(1 - \alpha_{\rm A})}\frac{d\alpha_{\rm A}}{dT} - \frac{1}{\alpha(1 + \alpha)}\left(\frac{q}{q_{\rm A}}\frac{d\alpha_{\rm A}}{dT} + \frac{Q_{\rm BA}}{T}\right) + \frac{1}{T}\left(\frac{5}{2} + \frac{T_{\rm A}}{T}\right) \\
%  &= -\frac{1}{q_{\rm A}}\left[1 + \frac{q}{\alpha(1 + \alpha)}\right]\frac{d\alpha_{\rm A}}{dT} - \frac{Q_{\rm BA}}{T\alpha(1 + \alpha)} + \frac{1}{T}\left(\frac{5}{2} + \frac{T_{\rm A}}{T}\right)\\
%  &= \frac{1}{q_{\rm A}}\left[1 + \frac{q}{\alpha(1 + \alpha)}\right]\frac{\frac{\partial H}{\partial T}}{\frac{\partial H}{\partial \alpha_{\rm A}}} - \frac{Q_{\rm BA}}{T\alpha(1 + \alpha)} + \frac{1}{T}\left(\frac{5}{2} + \frac{T_{\rm A}}{T}\right)\\
  &= \frac{\left[1 + \frac{q}{\alpha(1 + \alpha)}\right]\left(\frac{\partial H}{\partial T}\right)_{\! \alpha_{\rm A}}  + \frac{q_{\rm A}}{T}\left[\frac{5}{2} + \frac{T_{\rm A}}{T} - \frac{Q_{\rm BA}}{\alpha(1 + \alpha)}\right]\left(\frac{\partial H}{\partial \alpha_{\rm A}}\right)_{\! T}}
       {q_{\rm A}\left(\frac{\partial H}{\partial \alpha_{\rm A}}\right)_{\! T}}.
\end{align*}
Since $\frac{q_{\rm A}}{T} \left(\frac{\partial H}{\partial \alpha_{\rm A}}\right)_{\! T} > 0,$ we examine the numerator, which is computed as
\begin{multline*}
\textstyle  \frac{T_+\left(1 + \alpha^+\right)}{T}\!\left\{\!\left(1 + \frac{q}{\alpha}\right)\!\left[\frac{5}{2} + \frac{T_{\rm A}}{T} - \frac{Q_{\rm BA}}{\alpha(1 + \alpha)}\right]
 - \left(\frac{3}{2}+ \frac{T_{\rm A}}{T}
-\frac{ Q_{\rm BA}}{\alpha}\right)\!\left[1 + \frac{q}{\alpha(1 + \alpha)}\right]\right\}\!\frac{v}{v_+}\\
\textstyle  + 4(1 + \alpha)\!\left[1 + \frac{ Q_{\rm BA}}{1 + \alpha}\left(1 + \frac{T_{\rm B}}{2T}\right)\right]\!\left[1 + \frac{q}{\alpha(1 + \alpha)}\right]+ 4\left(q  + \frac{Q_T}{2}\right)\!\left[\frac{5}{2} + \frac{T_{\rm A}}{T} - \frac{Q_{\rm BA}}{\alpha(1 + \alpha)}\right].
\end{multline*}
Note that
\begin{align*}
&   \frac{5}{2}\left(1 + \frac{q}{\alpha}\right) - \frac{3}{2}\left[1 + \frac{q}{\alpha(1 + \alpha)}\right]
= 1 + \frac{1 + \frac{5}{2}\alpha}{\alpha(1 + \alpha)},  \\
& \frac{T_{\rm A}}{T}\left[1 + \frac{q}{\alpha} - 1 - \frac{q}{\alpha(1 + \alpha)}\right] = \frac{qT_{\rm A}}{T(1 + \alpha)}
\end{align*}
and
$$
  -\left(1 \!+\! \frac{q}{\alpha}\right)\!\frac{Q_{\rm BA}}{\alpha(1 + \alpha)}
\!+\! \frac{ Q_{\rm BA}}{\alpha}\!\left[1 \!+\! \frac{q}{\alpha(1 + \alpha)}\right]
= -\frac{Q_{\rm BA}}{\alpha(1 + \alpha)} \!+\! \frac{ Q_{\rm BA}}{\alpha}
= \frac{ Q_{\rm BA}}{1 + \alpha}.
$$
Then we find that
\begin{align*}
  &\left(1 + \frac{q}{\alpha}\right)\left[\frac{5}{2} + \frac{T_{\rm A}}{T} - \frac{Q_{\rm BA}}{\alpha(1 + \alpha)}\right]
- \left(\frac{3}{2}+ \frac{T_{\rm A}}{T}  -\frac{ Q_{\rm BA}}{\alpha}\right)\left[1 + \frac{q}{\alpha(1 + \alpha)}\right]\\
&= 1 + \frac{1 + \frac{5}{2}\alpha}{\alpha(1 + \alpha)} + \frac{qT_{\rm A}}{T(1 + \alpha)} +  \frac{ Q_{\rm BA}}{1 + \alpha} > 0.
\end{align*}
Moreover
\begin{align*}
&4Q_{\rm BA}\left(1 + \frac{T_{\rm B}}{2T}\right)\left[1 + \frac{q}{\alpha(1 + \alpha)}\right]
- 4\left(q  + \frac{Q_T}{2}\right)\frac{Q_{\rm BA}}{\alpha(1 + \alpha)}\\
&> 2Q_{\rm BA}\left[\frac{T_{\rm B}}{T} - \frac{Q_T}{\alpha(1 + \alpha)}\right]  > 2Q_{\rm BA}\left[\frac{T_{\rm B}}{T} - \frac{Q_T}{\alpha}\right] 
\end{align*}
and noticing
$$%\begin{align*}
  \frac{T_{\rm B}}{T} - \frac{Q_T}{\alpha}
  = \frac{1}{T}\left[T_{\rm B} - \frac{\beta q_{\rm A}T_{\rm A} + (1 - \beta)q_{\rm B}T_{\rm B}}{\alpha}\right]%\\
  > \frac{T_{\rm B}}{T}\left(1 - \frac{q}{\alpha}\right) \geq 0,
$$%\end{align*}
we conclude that the numerator is strictly positive and hence the theorem is proved.
\end{proof}
\begin{remark}\label{nb:entropy condition}
 It is well known that  (\cite[\S 86]{Landau-Lifshitz_SP} ): if $(p_0, u_0, S_0)$ and $(p_1,u_1,S_1)$ are connected by a shock front, then 
\begin{equation}\label{eq:entropy variation}
   S_1 -  S_0 = \frac{1}{12T_0} \left(\frac{\partial^2 v}{\partial p_0^2}\right)_{\substack{S\\ \rule{1ex}{0ex}}}(p_1 - p_0)^3 + O(1)(p_1- p_0)^4.
\end{equation}
This formula, first obtained by H. Bethe in  \cite{Bethe}, is notable, because it depends on neither the particular equation of state nor the form of internal structure. In particular, it is true for present mixed ionized system of equations.
\par
Suppose that $v_{pp}(p, S)  > 0.$ Then the entropy increases as the pressure increases. It follows from \eqref{eq:convex p} that this condition  implies that this characteristic direction is genuinely nonlinear. Consequently, if  $|p_1 - p_0|$ is sufficiently small,  the Lax condition (see \cite{Dafermos}, \cite{Smoller})  holds in this case. Thus we can call the above discontinuity a {\it shock wave\/}  as long as $p_1 > p_0$ and  $|p_1 - p_0|$ is sufficiently small.
\end{remark} 
\par For discontinuities with arbitrary amplitude
\begin{theorem}[Bethe-Weyl] \label{thm:Bethe-Weyl}
The thermodynamic Hugoniot locus of the state $(v_0, S_0)$ intersects each isentrope at least once. Moreover, if $p_{vv} > 0$ along an isentrope, then the  locus intersects it exactly once; in this case,  $|u - \sigma| < c,$ if $v_1 < v_0.$ while the opposite inequalities hold if $v_1 > v_0.$
\end{theorem}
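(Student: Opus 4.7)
The approach is the classical one of Bethe and Weyl, adapted to our mixed ionized system. I would first parametrize the thermodynamic Hugoniot locus by $T \in (0, \infty)$: Theorem \ref{thm:Hugoniot graph} yields a unique smooth branch $\alpha_{\rm A} = \alpha_{\rm A}(T)$, and Theorem \ref{thm:dp/dT} shows that $p$ is strictly monotone in $T$ along it, so the curve is equivalently parametrized by $p$ and, away from the base state, by $v$. Differentiating the thermodynamic Rankine-Hugoniot condition $\eqref{eq:Rankine-Hugoniot}_2$ and combining with the second law $de = T\,dS - p\,dv$ yields the classical Hugoniot identity
\begin{equation*}
T\,dS\big|_H = \tfrac{1}{2}(p - p_0)\,dv - \tfrac{1}{2}(v - v_0)\,dp,
\end{equation*}
which drives the remainder of the argument.

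For \emph{existence}, the plan is to show that $S$ ranges over all of $\mathbb{R}$ along the Hugoniot. Substituting the asymptotics of Theorem \ref{thm:Hugoniot asymptotics} into the entropy formula of Theorem \ref{thm:entropy}, a careful accounting shows that the $T_{\rm A}/T$ and $T_{\rm B}/T$ singularities cancel as $T \to 0$, leaving a $\tfrac{3}{2}\log T$ contribution, so $\eta \to -\infty$; at $T \to \infty$, the $-2\beta\log(1 - \alpha_{\rm A})$ and $-2(1-\beta)\log(1 - \alpha_{\rm B})$ terms dominate and drive $\eta \to +\infty$. The intermediate value theorem then produces an intersection with every isentrope.

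For \emph{uniqueness} under $p_{vv}|_S > 0$, I would fix an arbitrary isentrope $\bar{S}$ and set
\begin{equation*}
F(v) = e(v, \bar{S}) - e_0 + \tfrac{1}{2}\bigl(p(v, \bar{S}) + p_0\bigr)(v - v_0),
\end{equation*}
so that the Hugoniot meets $\{S = \bar{S}\}$ exactly at the zeros of $F$. Using $(\partial e/\partial v)_S = -p$, a short computation gives $F''(v) = \tfrac{1}{2}\,p_{vv}|_S\,(v - v_0)$, so $F$ is strictly concave on $v < v_0$ and strictly convex on $v > v_0$. Combined with the Bethe expansion \eqref{eq:entropy variation}, which forces $S_1 > S_0$ on the compressive side and $S_1 < S_0$ on the expansive side near $(v_0, S_0)$, and with the endpoint behaviour of $F$ from the preceding paragraph, one concludes that $F$ has at most one zero on each open half-line $v \ne v_0$. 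For the \emph{shock-speed} claim, mass conservation $\eqref{eq:RH}_1$ yields $|u_1 - \sigma|^2 = -v_1^2(p_1 - p_0)/(v_1 - v_0)$ and the Lagrangian sound speed is $c_1^2 = -v_1^2(\partial p/\partial v)_S(v_1, S_1)$; the inequality $|u_1 - \sigma| \lessgtr c_1$ thus reduces to comparing the Rayleigh chord slope with the isentropic slope at $(v_1, S_1)$, which follows from the convexity of the isentrope together with the already-established sign of $S_1 - S_0$.

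The main technical obstacle is the \emph{global} character of the uniqueness statement. The Bethe expansion handles only a neighborhood of $(v_0, S_0)$, so ruling out further intersections at large amplitude requires controlling $F$ uniformly on $(0, v_0)$ and $(v_0, \infty)$ via the endpoint asymptotics of Theorem \ref{thm:Hugoniot asymptotics}; the piecewise convexity of $F$ must then be leveraged together with the monotonicity $(\partial p/\partial S)_v > 0$ (visible in the entropy formula through the dependence of $\alpha$ on $T$ at fixed $v$) to exclude a second crossing on each half-line.
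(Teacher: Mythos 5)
The paper does not actually prove Theorem \ref{thm:Bethe-Weyl}: immediately after the statement it says ``Proof is found in \cite{Bethe}, \cite{Weyl} and \cite[(3.44)]{Menikoff-Plohr}.'' So there is no in-paper argument to compare against; what you have written is a reconstruction of the classical proof from exactly those references, and its ingredients are the right ones. The Hugoniot identity $T\,dS|_H=\tfrac12(p-p_0)\,dv-\tfrac12(v-v_0)\,dp$ is correct, the computation $F''(v)=\tfrac12\,p_{vv}|_S\,(v-v_0)$ is correct, and your entropy asymptotics along the Hugoniot check out: with \eqref{eq:asimptotics T(alpha) T to 0} the $T_{\rm A}/T$ singularities in Theorem \ref{thm:entropy} cancel, leaving $\eta\sim\tfrac32\log T\to-\infty$, while \eqref{eq:asymptotics T(alpha)} gives $\eta\sim 3\log T\to+\infty$; since isentropes are exactly the level sets of $S$, the intermediate value theorem gives existence.

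The part that is still genuinely open in your plan is the one you flag yourself, and it is worth being precise about why. Piecewise concavity/convexity of $F$ together with the local Bethe expansion \eqref{eq:entropy variation} only controls zeros of $F$ near $v_0$ and on the ``expected'' side. For an isentrope with $\bar S>S_0$ one has $F(v_0)>0$ but $F'(v_0)=\tfrac12\bigl(p_0-p(v_0,\bar S)\bigr)<0$ (using $\Gamma>0$), so on the convex half-line $v>v_0$ the function $F$ starts positive with negative slope and nothing so far prevents it from crossing zero there, i.e.\ the same isentrope being met a second time on the expansive side. Excluding this requires the boundary behaviour of $e$ and $p$ along the isentrope as $v\to\infty$ (and as $v\to0^+$ for the mirror case) --- the ``weak constraints'' of Menikoff--Plohr --- which for the present equation of state would have to be extracted from the Saha relations and Proposition \ref{prop:isentrope T to 0}; this is not a formality, it is where the actual work lies. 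Two smaller points: the claim that Theorem \ref{thm:dp/dT} lets you parametrize the Hugoniot by $v$ away from the base state is unjustified ($v$ need not be monotone along the locus and $dv=0$ at the base point) but also unnecessary; and the sign information from \eqref{eq:entropy variation} is only needed to decide which half-line carries the compressive intersection, not for the counting itself.
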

\noindent
Hence the Lax condition holds even for large $|p_1 - p_0|$ as long as $p_1 > p_0.$ Proof is found in \cite{Bethe}, \cite{Weyl} and \cite[(3.44)]{Menikoff-Plohr}.  We may also call this \lq\lq shock wave\rq\rq, however the physical entropy does not necessarily increase. 
\par
The following theorem in  \cite{Bethe}  guarantees increase of the physical entropy.  Let us introduce the {\it Gr\"uneisen coeficient} $\Gamma$ defined by
$$
\Gamma = -\frac{v}{T}\frac{\partial^2 e}{\partial S \partial v} = \frac{v}{T}\left(\frac{\partial p}{\partial S}\right)_{\! v} = v\left(\frac{\partial p}{\partial e}\right)_{\! v}.
$$
\begin{theorem}[Bethe] \label{thm:Bethe}
Suppose that $p_{vv} > 0$ and $\Gamma \geq -2.$ Then the thermodynamic Hugoniot locus of the state $(v_0, S_0)$ intersects each isentrope exactly  once and  $S_1 >S_0$ if $v_1 < v_0,$ while  $S_1 < S_0$ if $v_1 > v_0.$
\end{theorem}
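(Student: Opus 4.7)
My plan follows the classical argument of Bethe. Using Theorem~\ref{thm:Bethe-Weyl}, we may parametrize the Hugoniot locus of $(v_0, S_0)$ by $v$, writing $p = p_H(v)$ and $S = S_H(v)$. Differentiating the thermodynamic Rankine-Hugoniot relation $2(e - e_0) + (p_H + p_0)(v - v_0) = 0$ with respect to $v$ and using $de = T\,dS - p\,dv$, one obtains the central identity
$$
2T\,\frac{dS_H}{dv}
= (p_H(v) - p_0) - (v - v_0)\,p_H'(v)
= (v - v_0)\left[\frac{p_H(v) - p_0}{v - v_0} - p_H'(v)\right].
$$
Thus $\frac{dS_H}{dv}$ equals $(v - v_0)$ times the gap between the slope of the chord joining $(v_0, p_0)$ to $(v, p_H(v))$ and the slope of the tangent to the Hugoniot at the latter point.

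Once the Hugoniot is known to be strictly convex in the $(v,p)$-plane, that is $p_H''(v) > 0$, the sign conclusion of the theorem is immediate. For $v < v_0$, strict convexity implies that the tangent slope at $(v,p_H(v))$ lies strictly below the chord slope, so the bracket is positive and $\frac{dS_H}{dv} < 0$, giving $S_H(v) > S_0$. For $v > v_0$, the inequality reverses and the bracket becomes negative, so $\frac{dS_H}{dv} < 0$ again, which yields $S_H(v) < S_0$. The \emph{exactly once} intersection with each isentrope has already been supplied by Theorem~\ref{thm:Bethe-Weyl}, so this completes the argument modulo the convexity of $p_H$.

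The main obstacle is therefore to establish $p_H''(v) > 0$ along the entire Hugoniot, and this is where the Gr{\"u}neisen bound $\Gamma \geq -2$ enters. Near $v=v_0$ there is no difficulty: the Hugoniot and the isentrope through $(v_0,S_0)$ are tangent to second order at the base state (the cubic formula \eqref{eq:entropy variation} is the quantitative version of this fact), so $p_{vv}|_S > 0$ forces $p_H''(v_0) > 0$. To propagate convexity globally, I would differentiate the chain-rule identity $p_H'(v) = (\partial_v p)_S + (\Gamma T/v)\,S_H'(v)$ once more in $v$ and substitute the expression for $2T S_H'(v)$ above; the result writes $p_H''$ as $(\partial_v^2 p)_S$ plus correction terms controlled by $(\Gamma + 2)$ and by the chord-tangent gap. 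A careful sign analysis then shows that $p_H''$ cannot first vanish at any $v$ without contradicting either $p_{vv}|_S>0$ or $\Gamma \geq -2$. This calculation is the heart of Bethe's paper and is also reproduced in \cite{Weyl} and \cite[\S 3]{Menikoff-Plohr}; I would invoke it rather than redo it here.
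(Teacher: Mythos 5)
First, a point of reference: the paper does not prove Theorem \ref{thm:Bethe} at all. It is quoted from Bethe's report, with the reader sent to \cite{Bethe} and \cite{Menikoff-Plohr}; the only thing the paper verifies for its own model is $\Gamma>0$. So your attempt has to stand on its own as a reconstruction of the classical argument. The good parts: your central identity $2T\,\frac{dS_H}{dv}=(p_H-p_0)-(v-v_0)\,p_H'(v)$ is correct (differentiate $\eqref{eq:Rankine-Hugoniot}_2$ along the locus and use \eqref{eq:second-principle}), the chord-versus-tangent reading of it is the right geometric picture, and delegating the ``exactly once'' clause to Theorem \ref{thm:Bethe-Weyl} is legitimate since $p_{vv}>0$ is assumed.

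The gap is precisely the step you defer. You reduce the whole theorem to global convexity of the Hugoniot curve, $p_H''(v)>0$ for all $v$, and then gesture at a ``careful sign analysis.'' That reduction is not Bethe's argument and is not sound: under the stated hypotheses the Hugoniot need not be convex in the $(v,p)$-plane (convexity of the isentropes does not transfer to the Hugoniot; this is discussed at length in \cite{Menikoff-Plohr}), so you are routing the proof through an intermediate claim that is both unproven and, in general, false. The classical proof instead works only at critical points of $S_H$. If $S_H'(v_1)=0$, then differentiating your identity once more gives $2T\,S_H''(v_1)=-(v_1-v_0)\,p_H''(v_1)$; moreover at such a point the Hugoniot is tangent to the isentrope through $(v_1,S_H(v_1))$, so $p_H''(v_1)=p_{vv}|_S+\frac{\Gamma T}{v_1}S_H''(v_1)$. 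Eliminating $p_H''$ yields
$$
S_H''(v_1)\left[2+\Gamma\,\frac{v_1-v_0}{v_1}\right]=-\frac{(v_1-v_0)\,p_{vv}|_S}{T},
$$
and this is exactly where $\Gamma\geq-2$ is used (on the expansive branch $0<\frac{v_1-v_0}{v_1}<1$, so the bracket is positive and every critical point would be a strict local maximum); combined with the local cubic behaviour \eqref{eq:entropy variation} at $v_0$, this forbids any interior critical point and gives monotonicity of $S_H$ on each branch without ever asserting global convexity of $p_H$. Your sketch never uses the hypothesis $\Gamma\geq-2$ in any checkable way, so as written the proof is incomplete at exactly the point that separates Bethe's theorem from the Bethe--Weyl theorem.
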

%\begin{remark}\label{nb:gruneisen}
  In our case, a set of independent thermodynamic variables are $\alpha_{\rm A}$ and $T.$ The Gr\"uneisen coefficient is expressed as
    $$
  \Gamma
%= 	\frac{v}{T}\left(\frac{\partial p}{\partial S} \right)_{\! v} 
= \frac{v\left(\frac{\partial p}{\partial \alpha_{\rm A}}\frac{\partial v}{\partial T} - \frac{\partial v}{\partial \alpha_{\rm A}} \frac{\partial p}{\partial T}\right)}
  {T\left(\frac{\partial S}{\partial \alpha_{\rm A}}\frac{\partial v}{\partial T} - \frac{\partial v}{\partial \alpha_{\rm A}} \frac{\partial S}{\partial T}\right)},
  $$
where $ S = \frac{R}{M}\eta.$ After simple but tedious computations like in Appendix \ref{sec:computation}, we  prove finally $\Gamma > 0.$
%\end{remark}

%%%%%%%%%%%%%%%%%%%%%%%%%%%%%%%%%%%%%%%%%%%%%%%%%%%%%
\section{Approximation of Thermodynamic Hugoniot Loci}\label{sec:approximate Hugoniot}
%%%%%%%%%%%%%%%%%%%%%%%%%%%%%%%%%%%%%%%%%%%%%%%%%%%%%
Next section, we consider a forward shock front having the right state $(p_+,T_+)$ in ordinary circumstances:  the pressure $p_+$ and the temperature $T_+$ have proper finite values and $\alpha_{\rm A}^+,\,\alpha_{\rm B}^+$ are supposed to be $0.$ 
%Since 
%\eqref{eq:p_-/p_+} holds,
%$$
%\frac{p_-}{p_+}  = \frac{(1 - \alpha_{\rm A}^-)( 1 + \alpha^-)\alpha_{\rm A}^+ \alpha^+ }{(1 - \alpha_{\rm A}^+)(1 + \alpha^+)\alpha_{\rm A}^-\alpha^-}
%\left(\frac{T_-}{T_+}\right)^{\! \frac{5}{2}} e^{-\frac{T_{\rm A}}{T_-} + \frac{T_{\rm A}}{T+-}},
%$$
However, we find by \eqref{eq:p_-/p_+} that $\frac{p }{p_+} \to 0$ as $\alpha_{\rm A}^+, \alpha_{\rm B}^+ \to 0,$ which is contradictory. Notice that
$$
	p = \frac{(1 - \alpha_{\rm A})(1 + \alpha)}{\mu_{\rm A}\alpha_{\rm A}\alpha}T^{\frac{5}{2}}e^{-\frac{T_{\rm A}}{T}}.
$$
Then we observe that
$$
	\alpha_{\rm A}^+\alpha^+ = \frac{(1 - \alpha_{\rm A}^+)(1 + \alpha^+)}{\mu_{\rm A}p_+}T_+^{\frac{5}{2}}e^{-\frac{T_{\rm A}}{T_+}}
	\sim \frac{T_+^{\frac{5}{2}}e^{-\frac{T_{\rm A}}{T_+}}}{\mu_{\rm A}p_+}, \quad \text{as} \quad \alpha_{\rm A}^+, \alpha_{\rm B}^+ \to 0.
 $$
 Setting $\hat{L}_+^2 =  \frac{T_+^{\frac{5}{2}}}{\mu_{\rm A}p_+},$ we obtain 
$\alpha_{\rm A}^+\alpha^+ \sim \hat{L}_+^2e^{-\frac{T_{\rm A}}{T_+}}$ and an approximate formula
 $$
 	\frac{p_+}{p }  = \frac{\alpha_{\rm A}  \alpha }{(1 - \alpha_{\rm A} )(1 + \alpha )\hat{L}_+^2}.
\left(\frac{T_+}{T }\right)^{\! \frac{5}{2}} e^{\frac{T_{\rm A}}{T }}
 $$
 By letting $\alpha_{\rm A}^+ \to 0, $ the thermodynamic Rankine-Hugoniot condition takes the form
 %$$
%T_-\left(1 + \alpha^-\right)\left(4 + \frac{p_+}{p_-}\right)  + 2\left[\beta T_{\rm A}\alpha_{\rm A}^- + (1 - \beta)T_{\rm B}\alpha_{\rm B}^-\right]
%= 
%T_+\left(4 + \frac{p_-}{p_+}\right) 
%$$
%or
$$
\frac{T }{T_+}\left(1 + \alpha \right)\left(4 + \frac{p_+}{p }\right)  + \frac{2 \left[\beta T_{\rm A}\alpha_{\rm A}  +  (1 - \beta)T_{\rm B}\alpha_{\rm B} \right]}{T_+}
= 
4 + \frac{p }{p_+}
$$
whose  solution is called the {\it approximate thermodynamic Hugoniot locus\/} of a laboratory state $(p_+,T_+).$
\begin{theorem}\label{thm:GN small alpha^-}
Let $\sigma$ be a positive constant satisfying $\sigma < 60.$ If $ T$ and $T_+$ are sufficiently small, so that
$$\textstyle
    \sqrt{\frac{\beta T_+T_{\rm A}^{\frac{3}{2}}}{\mu_{\rm A}p_+}} \left(\frac{T }{T_{\rm A}}\right)^{\! - \frac{9}{4}}e^{- \frac{T_{\rm A}}{2T }} \leq \sigma,
$$
then the approximate thermodynamic Hugoniot locus is located in a genuinely nonlinear region for sufficiently small $\alpha_{\rm A} , \alpha_{\rm B} .$
\end{theorem}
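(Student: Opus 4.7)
The strategy is to compare the leading-order asymptotics of the approximate Hugoniot locus as $T\to 0$ against the two branches of the inflection locus given in Theorem \ref{thm:asymptotics inflection locus}, and to use the hypothesised inequality to force the Hugoniot strictly below the lower branch, placing it in a genuinely nonlinear component of the complement of $\mathcal{I}$.

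First, I would specialise Theorem \ref{thm:Hugoniot asymptotics} to the approximate Hugoniot by letting $\alpha_{\rm A}^+,\alpha_{\rm B}^+\to 0$ and inserting the Saha-derived relation $\alpha_{\rm A}^+\alpha^+\sim\hat{L}_+^{\,2}e^{-T_{\rm A}/T_+}$ (with $\hat{L}_+^{\,2}=T_+^{5/2}/(\mu_{\rm A}p_+)$, as in the preamble to the statement). In this limit the brace inside the coefficient $A$ of \eqref{eq:asimptotics T(alpha) T to 0} reduces to $4$ and $(1-\alpha_{\rm A}^+)(1+\alpha^+)$ to $1$, yielding $A^2\to 4\hat{L}_+^{\,2}/[\beta+(\mu_{\rm A}/\mu_{\rm B})(1-\beta)]$. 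Substituting back gives, along the approximate Hugoniot,
\[
\alpha_{\rm A}\;\sim\;\frac{2\,T_+^{1/2}\,T^{3/4}}{\sqrt{\mu_{\rm A}p_+\bigl[\beta+(\mu_{\rm A}/\mu_{\rm B})(1-\beta)\bigr]}}\,e^{-\frac{T_{\rm A}}{2T}}
\]
as $T\to 0$, with a companion formula for $\alpha_{\rm B}$ supplied by the same theorem.

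Next, I would compare this to branch (1) of the inflection locus, $\alpha_{\rm A}^{\mathcal{I},1}\sim(60/\beta)(T/T_{\rm A})^3$, which is the lower of the two branches once $T/T_{\rm A}$ is small enough (branch (2) then lies strictly above it and can be dropped). Forming the ratio $\alpha_{\rm A}/\alpha_{\rm A}^{\mathcal{I},1}$ and applying the elementary bound $\beta+(\mu_{\rm A}/\mu_{\rm B})(1-\beta)\ge\beta$ produces
\[
\frac{\alpha_{\rm A}}{\alpha_{\rm A}^{\mathcal{I},1}}\;\le\;\frac{1}{30}\sqrt{\frac{\beta\,T_+\,T_{\rm A}^{3/2}}{\mu_{\rm A}p_+}}\left(\frac{T}{T_{\rm A}}\right)^{-9/4}e^{-\frac{T_{\rm A}}{2T}}\bigl(1+o(1)\bigr),
\]
whose radical-times-power-times-exponential factor is exactly the hypothesised quantity, bounded by $\sigma$. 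Hence the ratio is at most $\sigma/30\cdot(1+o(1))$, and the margin furnished by $\sigma<60$ (roughly a factor of two of slack beyond the pointwise inequality) absorbs the $o(1)$ corrections and drives the ratio strictly below $1$ for $T$ and $T_+$ small enough.

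Finally, because $\mathcal{I}$ is confined to a bounded region of the $(T,\alpha_{\rm A})$-plane (see Fig.\ \ref{fig:2}) and $\boldsymbol{r}_{\pm}\nabla\lambda_{\pm}>0$ both for large $T$ and near $\alpha_{\rm A}=0$, the portion of the approximate Hugoniot lying below branch (1) must sit in the genuinely nonlinear component adjacent to the $T$-axis, which is exactly what the conclusion asserts. The principal obstacle is making the two leading-order expansions---those of Theorems \ref{thm:Hugoniot asymptotics} and \ref{thm:asymptotics inflection locus}---compatible in the sense of providing a \emph{uniform} strict inequality throughout the $T$-range permitted by the hypothesis, rather than merely in the limit $T\to 0$. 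This requires equipping both asymptotics with explicit $O$-term remainders; the strict inequality $\sigma<60$ is precisely the slack designed to absorb those remainders.
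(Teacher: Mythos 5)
Your overall strategy---obtain an upper bound for $\alpha_{\rm A}$ along the approximate Hugoniot locus and compare it with branch (1) of Theorem \ref{thm:asymptotics inflection locus}, which is indeed the lower branch for small $T/T_{\rm A}$---is the same as the paper's, but the way you produce that upper bound has two defects that together make the argument fail. First, the expansion \eqref{eq:asimptotics T(alpha) T to 0} is the $T\to 0$ asymptotics of the Hugoniot locus \emph{with the base state $(T_+,\alpha_{\rm A}^+)$ held fixed}, i.e.\ it describes the portion of the locus with $T\ll T_+$; on that portion one finds $p/p_+\sim T/(4T_+)\to 0$, so it is the expansive part. The theorem concerns the opposite, compressive portion $T\ge T_+$ with \emph{both} temperatures small, where (as the paper proves) $p/p_+> T/T_+\ge 1$. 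Since $p\propto T^{5/2}e^{-T_{\rm A}/T}/(\alpha_{\rm A}\alpha)$ by Saha's law, the factor of $4$ between $T/(4T_+)$ and $T/T_+$ is precisely the factor of $2$ by which your coefficient for $\alpha_{\rm A}$ exceeds the correct one on the compressive branch; your double limit ($T\to0$ at fixed $T_+$, then $\alpha_{\rm A}^+\to0$) does not capture the regime of the statement. Second, even accepting your asymptotic, the numbers do not close: your ratio to branch (1) is bounded only by $\sigma/30$, which for $30\le\sigma<60$ exceeds $1$. The ``factor of two of slack'' you invoke is not there---the hypothesis $\sigma<60$ is calibrated so that the correct coefficient gives ratio $\le\sigma/60<1$ with essentially no room to spare, so there is nothing left over to absorb your extra factor of $2$, let alone the $o(1)$ remainders.

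The paper's route avoids both problems and is entirely non-asymptotic on the Hugoniot side. It writes the approximate thermodynamic Rankine--Hugoniot condition as a quadratic $\Gamma(\Pi)=0$ in $\Pi=p_+/p$, checks $\Gamma(0)=-1<0$ and $\Gamma(T_+/T)>0$ using $T_+\le T$, and concludes the pointwise inequality $p_+/p<T_+/T$ on the whole compressive part. Saha's law together with the crude estimates $(1-\alpha_{\rm A})(1+\alpha)<1$ and $\alpha\ge\beta\alpha_{\rm A}$ then yield the explicit bound $\alpha_{\rm A}<\sqrt{T_+T^{3/2}/(\beta\mu_{\rm A}p_+)}\,e^{-T_{\rm A}/(2T)}$, valid throughout the relevant range of $T$, whose quotient by $(60/\beta)(T/T_{\rm A})^{3}$ is exactly the hypothesised quantity divided by $60$, hence $\le\sigma/60<1$. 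This also disposes of the uniformity issue you flag at the end of your proposal. To salvage your version you would need either to derive the compressive-branch expansion with its correct leading constant, or to strengthen the hypothesis to $\sigma<30$, which proves a strictly weaker theorem.
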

\begin{proof}
Let us denote $\Pi = \frac{p_+}{p }.$ Then the thermodynamic Rankine-Hugoniot condition is found to be a quadratic equation of $\Pi:$
$$
  \left(1 + \alpha \right) \frac{T }{T_+}\Pi^2 + 2\left[2\left(1 + \alpha \right)\frac{T }{T_+}
+ \frac{\beta T_{\rm A}\alpha_{\rm A}  +  (1 - \beta)T_{\rm B}\alpha_{\rm B} }{T_+} - 2\right]\Pi -1 = 0
$$
Let  $\Gamma(\Pi)$ denote the left side of the above expression. Clearly $\Gamma (0) = -1 < 0.$ Since $T_+ \leq T ,$ 
$$
2\left(1 + \alpha \right)\frac{T }{T_+} + \frac{\beta T_{\rm A}\alpha_{\rm A}  +  (1 - \beta)T_{\rm B}\alpha_{\rm B} }{T_+} - 2 \geq 2\left(\frac{T }{T_+} - 1\right) \geq 0,
$$
which implies
$$
 \Gamma\left(\frac{T_+}{T }\right) \geq (1 + \alpha )\frac{T_+}{T } +  4\left(\frac{T }{T_+} - 1\right)\frac{T_+}{T } - 1 > 3\left(1 - \frac{T_+}{T }\right) \geq 0.
$$
Thus we conclude that $0 < \Pi = \frac{p_+}{p } < \frac{T_+}{T },$ which is
$$
 \frac{\mu_{\rm A} p_+ \alpha_{\rm A}  \alpha }{(1 - \alpha_{\rm A} )(1 + \alpha )T ^{\frac{5}{2}}}e^{\frac{T_{\rm A}}{T }} < \frac{T_+}{T }.
$$
\par
Since we may assume $\alpha_{\rm A}  > \alpha_{\rm B} ,$ we find that
\begin{align*}
& (1 - \alpha_{\rm A} )(1 + \alpha )\\
& = 1 + \alpha  -  \alpha_{\rm A}  - \alpha_{\rm A}  \alpha  = 1 - (1 - \beta)(\alpha_{\rm A}  - \alpha_{\rm B} )  - \alpha_{\rm A}  \alpha  < 1.
\end{align*}
Thus
$$
 \frac{\beta \mu_{\rm A} p_+ \left(\alpha_{\rm A} \right)^2}{T^{\frac{5}{2}}}e^{\frac{T_{\rm A}}{T}} <  \frac{\mu_{\rm A} p_+ \alpha_{\rm A}  \alpha }{T ^{\frac{5}{2}}}e^{\frac{T_{\rm A}}{T }} < \frac{T_+}{T }.
$$
and we have
$$\textstyle
   0 < \alpha_{\rm A}  < \sqrt{\frac{T_+T ^{\frac{3}{2}}}{\beta\mu_{\rm A}p_+}} e^{- \frac{T_{\rm A}}{2T }}
= \sqrt{\frac{T_+T_{\rm A}^{\frac{3}{2}}}{\beta \mu_{\rm A}p_+}} \left(\frac{T }{T_{\rm A}}\right)^{\! \frac{3}{4}}e^{- \frac{T_{\rm A}}{2T }}
$$
By virtue of Theorem \ref{thm:asymptotics inflection locus}, we have the theorem.
\end{proof}
 %%%%%%%%%%%%%%%%%%%%%%%%%%%%%%%%%%%%%%%%%%%%%%%%%%%%%%%%%%%%%%%%%%%%%
%%%%%%%%%%%%%%%%%%%%%%%%%%%%%%%%%%%%%%%%%%%%%%%%%%%%%%%%%%%%%%%%%%%%
\section{Shock Tube Problem}\label{sec:shock tube}
%%%%%%%%%%%%%%%%%%%%%%%%%%%%%%%%%%%%%%%%%%%%%%%%%%%%%%%%%%%%%%%%%%%%
The shock tube problem consists in finding the state $(\alpha_{\rm A}^- 
%(\text{and hence}\,\alpha_{\rm B}^-)
, T_- ),$ for given  $(\alpha_{\rm A}^+, T_+)$ and $u_{\pm},$ satisfying Rankine-Hugoniot conditions
\begin{equation}\label{eq:RH_2}
  \left\{
  \begin{array}{ccll}
    (u_-  - u_+)^2 &= -(p_-  - p_+)(v_-  - v_+):& \text{kinetic part}, \\ [1ex]
    h_-  - h_+ &= {\textstyle \frac{1}{2}}(v_-  + v_+)(p_-  - p_+):& \text{thermodynamic part} .
  \end{array}
   \right.
\end{equation}
The kinetic part takes a form
\begin{equation}\label{eq:RH-kinetic_2}
  (u_-  - u_+)^2 = -p_+v_+\left(\frac{p_- }{p_+} - 1\right)\left(\frac{v_- }{v_+} - 1\right).
%=  -p_+v_+\left(\frac{p v }{p_+v_+} - \frac{p }{p_+} - \frac{v }{v_+} + 1\right).
\end{equation}
Denoting  $\alpha_{\rm A}  = \alpha_{\rm A}^-, T = T_-$ and $u_A = u_-,$ we define $G$ to be 
%  H(\alpha_{\rm A}, T) = T\left(1 + \alpha\right)\left(4 + \frac{p_+}{p}\right) + 2\left[\beta T_{\rm A}\alpha_{\rm A} + (1 - \beta)T_{\rm B}\alpha_{\rm B}\right]\\
% - T_+\left(1 + \alpha^+\right)\left(4 + \frac{p}{p_+}\right)  - 2\left[\beta T_{\rm A}\alpha_{\rm A}^+ - (1 - \beta)T_{\rm B}\alpha_{\rm B}^+\right]
%\end{multline*}
\begin{align*}
  G(\alpha_{\rm A}, T) &=  - \frac{(u_A - u_+)^2}{p_+v_+} -\left(\frac{p }{p_+} - 1\right)\left(\frac{v }{v_+} - 1\right)\\
  &= - \frac{M(u_A - u_+)^2}{RT_+(1 + \alpha^+)} + \frac{p}{p_+} + \frac{v}{v_+} - \frac{T(1 + \alpha)}{T_+(1 + \alpha^+)} - 1 .
\end{align*}
Then the kinetic condition \eqref{eq:RH-kinetic_2} is equivalent to $G(\alpha_{\rm A}, T) = 0.$ Substituting \eqref{eq:p_-/p_+} into the above expression, we have actually
\begin{align}
   G(\alpha_{\rm A}, T)
  &= - \frac{M(u_A - u_+)^2}{RT_+(1 + \alpha^+)} 
+ \frac{(1 - \alpha_{\rm A})(1 + \alpha)\alpha_{\rm A}^+\alpha^+}
{(1 - \alpha_{\rm A}^+)( 1 + \alpha^+)\alpha_{\rm A} \alpha }
\left(\frac{T}{T_+}\right)^{\! \frac{5}{2}} e^{-\frac{T_{\rm A}}{T} + \frac{T_{\rm A}}{T_+}}\nonumber\\
&+ \frac{(1 - \alpha_{\rm A}^+)\alpha_{\rm A}\alpha}
{(1 - \alpha_{\rm A})\alpha_{\rm A}^+ \alpha^+}
\left(\frac{T}{T_+}\right)^{\! -\frac{3}{2}} e^{-\frac{T_{\rm A}}{T_+} + \frac{T_{\rm A}}{T}} - \frac{T(1 + \alpha)}{T_+(1 + \alpha^+)} - 1 . \label{eq:GGG}
\end{align}
By denoting
$
  \Lambda^+ =  \frac{(1 - \alpha_{\rm A}^+)\alpha_{\rm A}\alpha}
{(1 - \alpha_{\rm A})\alpha_{\rm A}^+ \alpha^+},
$
we have $\frac{v}{v_+} = \Lambda^+\left(\frac{T}{T_+}\right)^{\! -\frac{3}{2}} e^{-\frac{T_{\rm A}}{T_+} + \frac{T_{\rm A}}{T}}$ and
\begin{multline*}
   G(\alpha_{\rm A}, T) 
  =  K_+\left(\frac{T}{T_+}\right)^{\! \frac{5}{2}} e^{-\frac{T_{\rm A}}{T} + \frac{T_{\rm A}}{T_+}}
+ \Lambda^+\left(\frac{T}{T_+}\right)^{\! -\frac{3}{2}} e^{-\frac{T_{\rm A}}{T_+} + \frac{T_{\rm A}}{T}} \\
  - \frac{T(1 + \alpha)}{T_+(1 + \alpha^+)} - \frac{M(u_A - u_+)^2}{RT_+(1 + \alpha^+)} - 1 .
\end{multline*}
\par
Since
\begin{equation}\label{eq:dPsi/dalpha}%\begin{align*}
  \left(\frac{\partial  \Lambda^+}{\partial T}\right)_{\! \alpha_{\rm A}} 
= \frac{\Lambda^+Q_{\rm BA}}{\alpha T}, \qquad 
   \left(\frac{\partial \Lambda^+}{\partial \alpha_{\rm A}}\right)_{\! T} 
= \frac{\Lambda^+}{q_{\rm A}}\left(1 +  \frac{q}{\alpha}\right),
\end{equation}%$$%\end{align*}
 we find together with \eqref{eq:dOmega/dalpha} that
\begin{align}
 & q_{\rm A}\left(\frac{\partial G}{\partial \alpha_{\rm A}}\right)_{\! T}
 =  - \left[1 + \frac{q }{\alpha(1 + \alpha)}\right]\left(\frac{p}{p_+}\right)%\nonumber\\
 + \left(1 + \frac{q }{\alpha}\right)\left(\frac{v}{v_+}\right)
  - \frac{Tq }{T_+(1 + \alpha^+)}, \label{eq:dG/d alpha}\\
%\end{align}
%\begin{align}
    & T\frac{\partial G}{\partial T} (\alpha_{\rm A}, T)
=  \frac{p}{p_+}\left[\frac{5}{2} + \frac{T_{\rm A}}{T} -  \frac{Q_{\rm BA}}{\alpha(1 + \alpha)}\right]
- \frac{v}{v_+}\left(\frac{3}{2} + \frac{T_{\rm A}}{T} -  \frac{Q_{\rm BA}}{\alpha }\right)\nonumber\\
& \hspace{12.5ex} - \frac{T(1 + \alpha)}{T_+(1 + \alpha^+)} -  \frac{TQ_{\rm BA}}{T_+(1 + \alpha^+)}\label{eq:TdG/dT_1}\\
%&=  \left(\frac{p}{p_+}- \frac{v}{v_+}\right)\left[\frac{3}{2} + \frac{T_{\rm A}}{T} -  \frac{(1 - \beta)(T_{\rm B} - T_{\rm A})q_{\rm B}}{\alpha T}\right] + \frac{p}{p_+} +  \frac{(1 - \beta)(T_{\rm B} - T_{\rm A})q_{\rm B}}{(1 + \alpha) T} \frac{p}{p_+} \\
%& - \frac{T(1 + \alpha)}{T_+(1 + \alpha^+)} -  \frac{(1 - \beta)(T_{\rm B} - T_{\rm A})q_{\rm B}}{T_+(1 + \alpha^+)} \\
&=  \left(\frac{p}{p_+}- \frac{v}{v_+}\right)\left(\frac{3}{2} + \frac{T_{\rm A}}{T} -  \frac{Q_{\rm BA}}{\alpha}\right) %\nonumber\\
 +\left[ \frac{p}{p_+} - \frac{T(1 + \alpha)}{T_+(1 + \alpha^+)} \right] \left(1 + \frac{Q_{\rm BA}}{1 + \alpha}\right),\label{eq:TdG/dT_2}
\end{align}
%where equations \eqref{eq:dOmega/dalpha},
%\eqref{eq:p_+/p Omega}, \eqref{eq:dPsi/dalpha} and 
where the identity $\frac{1}{\alpha} - \frac{1}{\alpha(1 + \alpha)} = \frac{1}{1 + \alpha}$ is used.
\par 
%For next proposition, we use the  assumption: $T_{\rm A} < T_{\rm B} \leq  2T_{\rm A}.$
%$$
%    \frac{T_{\rm A}}{T} -  \frac{Q_{\rm BA}}{\alpha} =  \frac{T_{\rm A}}{T} -  \frac{(1 - \beta)(T_{\rm B} - T_{\rm A})q_{\rm B}}{\alpha T} \geq  \frac{2T_{\rm A}}{T} - \frac{T_{\rm B}}{T} \geq 0
%$$
\begin{proposition}\label{prop:set G=0}
For every  $\alpha_{\rm A},$ there are at least two values $T^{(\pm)} = T^{(\pm)}(\alpha_{\rm A})$, with $T^{(-)} < T^{(+)}$, such that $G\left(\alpha_{\rm A}, T^{(\pm)}\right)= 0.$ Moreover, $p^{(+)} > p_+ > p^{(-)}.$
\end{proposition}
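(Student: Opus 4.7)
The plan is to exploit the representation
\[
G(\alpha_{\rm A},T)=-\frac{(u_A-u_+)^2}{p_+v_+}-\left(\frac{p}{p_+}-1\right)\!\left(\frac{v}{v_+}-1\right)
\]
and to show that, with $\alpha_{\rm A}$ held fixed, $p$ is a strictly increasing and $v$ a strictly decreasing function of $T$. Granting this, the product $(p/p_+-1)(v/v_+-1)$ tends to $-\infty$ at both endpoints $T\to0^+$ and $T\to\infty$, while it vanishes at the unique intermediate point $T^*$ where $p=p_+$; at this point $G(\alpha_{\rm A},T^*)=-(u_A-u_+)^2/(p_+v_+)<0$, and the intermediate value theorem produces $T^{(-)}\in(0,T^*)$ and $T^{(+)}\in(T^*,\infty)$ with $G(\alpha_{\rm A},T^{(\pm)})=0$. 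The strict monotonicity of $p$ then immediately gives $p^{(-)}<p_+<p^{(+)}$.

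The heart of the argument, and the place where the hypothesis $T_{\rm B}\le 2T_{\rm A}$ is used, is the monotonicity of $p$ and $v$. I would differentiate the logarithms of the expressions
\[
p=\frac{(1-\alpha_{\rm A})(1+\alpha)}{\mu_{\rm A}\alpha_{\rm A}\alpha}T^{5/2}e^{-T_{\rm A}/T},\qquad
v/v_+=\Lambda^+(T/T_+)^{-3/2}e^{-T_{\rm A}/T_++T_{\rm A}/T},
\]
invoking Lemma \ref{lem:dalpha_B/dT} $(\partial\alpha/\partial T)|_{\alpha_{\rm A}}=Q_{\rm BA}/T$ and the first part of \eqref{eq:dPsi/dalpha}, to obtain
\[
\left(\frac{\partial\log p}{\partial T}\right)_{\!\alpha_{\rm A}}
=\frac{1}{T}\!\left[\frac{5}{2}+\frac{T_{\rm A}}{T}-\frac{Q_{\rm BA}}{\alpha(1+\alpha)}\right],\qquad
\left(\frac{\partial\log v}{\partial T}\right)_{\!\alpha_{\rm A}}
=\frac{1}{T}\!\left[\frac{Q_{\rm BA}}{\alpha}-\frac{3}{2}-\frac{T_{\rm A}}{T}\right].
\]
Using $\alpha\ge(1-\beta)\alpha_{\rm B}\ge(1-\beta)q_{\rm B}$ one gets $Q_{\rm BA}/\alpha\le(T_{\rm B}-T_{\rm A})/T\le T_{\rm A}/T$, the last step precisely the postulate $T_{\rm B}\le 2T_{\rm A}$; a fortiori $Q_{\rm BA}/[\alpha(1+\alpha)]\le T_{\rm A}/T$. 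Hence the $T$-derivative of $\log p$ is $\ge 5/(2T)>0$ and that of $\log v$ is $\le -3/(2T)<0$.

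For the asymptotic behaviour, the compatibility relation \eqref{eq:state space} shows that for $\alpha_{\rm A}$ fixed, $\alpha_{\rm B}\to 0$ as $T\to 0^+$, so $\alpha$ remains bounded away from $0$ and $\infty$; hence $p\sim T^{5/2}e^{-T_{\rm A}/T}\to 0$ and $v=\frac{R}{M}T(1+\alpha)/p\to\infty$, giving $(p/p_+-1)(v/v_+-1)\to-\infty$ and $G\to+\infty$. As $T\to\infty$, $\alpha_{\rm B}$ tends to a finite limit, $p\sim T^{5/2}\to\infty$ and $v\sim T^{-3/2}\to 0$, again sending $G\to+\infty$. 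Monotonicity of $p$ produces a unique $T^*$ where $G(\alpha_{\rm A},T^*)<0$, and IVT completes the argument.

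The main obstacle is the monotonicity step, because both derivative formulas a priori contain the correction term $Q_{\rm BA}$ whose sign is not obvious and which could conceivably destroy monotonicity; the saving grace is that the structural bound $(1-\beta)q_{\rm B}\le\alpha$ together with the hypothesis $T_{\rm B}\le 2T_{\rm A}$ is exactly what is needed to dominate $Q_{\rm BA}/\alpha$ by the explicit term $T_{\rm A}/T$. Note that the proposition implicitly assumes $u_A\neq u_+$; if $u_A=u_+$ the argument instead yields two roots as the (generally distinct) solutions of $p=p_+$ and $v=v_+$, and the strict inequality $p^{(-)}<p_+<p^{(+)}$ becomes one equality.
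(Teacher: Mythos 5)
Your proposal is correct and follows essentially the same route as the paper: the key inequality $\frac{T_{\rm A}}{T}-\frac{Q_{\rm BA}}{\alpha}\ge\frac{2T_{\rm A}-T_{\rm B}}{T}\ge 0$ (the paper's \eqref{eq:TA/T-QAB/alpha}) yields strict monotonicity of $T\mapsto p/p_+$, the unique point $T_*$ with $p=p_+$ gives $G(\alpha_{\rm A},T_*)=-\frac{(u_A-u_+)^2}{p_+v_+}<0$, and the intermediate value theorem together with $G\to+\infty$ at both ends produces the two roots with $p^{(-)}<p_+<p^{(+)}$. Your additional verification of the monotonicity of $v$ and the explicit asymptotics of $p$ and $v$ as $T\to 0$ and $T\to\infty$ merely fill in details the paper asserts without proof.
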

\begin{proof}
  By virtue of the assumption: $T_{\rm A} < T_{\rm B} \leq  2T_{\rm A}$, we find that
\begin{equation}\label{eq:TA/T-QAB/alpha}
    \frac{T_{\rm A}}{T} -  \frac{Q_{\rm BA}}{\alpha} =  \frac{T_{\rm A}}{T} -  \frac{(1 - \beta)(T_{\rm B} - T_{\rm A})q_{\rm B}}{\alpha T} \geq  \frac{2T_{\rm A}}{T} - \frac{T_{\rm B}}{T} \geq 0. 
\end{equation}
Hence
$$
\left[\frac{\partial }{\partial T}\left(\frac{p}{p_+}\right)\right]_{\! \alpha_{\rm A}}
=
%\frac{K_+}{T}\left(\frac{T}{T_+}\right)^{\! \frac{5}{2}} e^{-\frac{T_{\rm A}}{T} + \frac{T_{\rm A}}{T_+}}
\frac{p}{p_+ T}\left[\frac{5}{2} + \frac{T_{\rm A}}{T} -  \frac{(1 - \beta)(T_{\rm B} - T_{\rm A})q_{\rm B}}{\alpha(1 + \alpha) T}\right]
\geq  \frac{5p}{2p_+T}
$$
which shows that the function $T\mapsto \frac{p}{p_+}$ is strictly increasing and valued in $[0,\infty)$. Then, for every $\alpha_{\rm A} > 0$ there exists a unique $T_{\ast} = T_{\ast}(\alpha_{\rm A})$ $(T_\ast(\alpha^+)=T_+)$ such that, by \eqref{eq:p_-/p_+},
$$
\frac{p_{\ast}}{p_+} = \frac{(1 - \alpha_{\rm A})(1 + \alpha)\alpha_{\rm A}^+\alpha^+}
{(1 - \alpha_{\rm A}^+)( 1 + \alpha^+)\alpha_{\rm A} \alpha }
\left(\frac{T_{\ast}}{T_+}\right)^{\! \frac{5}{2}} e^{-\frac{T_{\rm A}}{T_{\ast}} + \frac{T_{\rm A}}{T_+}} = 1,
$$
for $p_*=p\left(\alpha_{\rm A},T_*(\alpha_{\rm A})\right).$ Thus 
$G\left(\alpha_{\rm A},T_{\ast}(\alpha_{\rm A})\right) = - \frac{ (u - u_0)^2}{a^2T_0(1 + \alpha_0)}  < 0.$
\par
For every fixed $\alpha_{\rm A}$ we have that $G(\alpha_{\rm A}, T) \to \infty$ for both $T \to 0$ and $T\to\infty$.
We conclude that there are at least two values $T^{(\pm)} = T^{(\pm)}(\alpha_{\rm A})$, with $T_- < T_*<T_+$, such that $G\left(\alpha_{\rm A}, T^{(\pm)}\right)= 0.$  Note that $\frac{p^{(+)}}{p_+} > \frac{p_{\ast}}{p_+} = 1$ and $\frac{p^{(-)}}{p_+} < \frac{p_{\ast}}{p_+} = 1.$
\end{proof}
We call the set of $(\alpha_{\rm A}, T^{(+)})$ the {\it compressive part\/} of $G\left(\alpha_{\rm A}, T\right)= 0$ and the set  of  $(\alpha_{\rm A}, T^{(-)})$ its  {\it expansive part\/} .
\paragraph{Asymptotics and Monotonicity:}
First we shall show the asymptotic behavior of the set $G(\alpha_{\rm A},T^{(\pm)})=0$ for $\alpha_{\rm A}$ close to $1$. 
\begin{proposition}\label{prop:G=0 asymptotics}
  Suppose that $0 < \alpha_{\rm A}^+ < 1.$ Then we have
\begin{equation}\label{eq:asymptotics T_+(alpha)}
  1 - \alpha_{\rm A} \sim \frac{1 - \alpha_{\rm A}^+}{\alpha_{\rm A}^+\alpha^+}e^{- \frac{T_{\rm A}}{T_+}}\left(\frac{T}{T_+}\right)^{\! -\frac{3}{2}}
  \quad for \quad T \to \infty
\end{equation}
for the compressive part and
\begin{equation}\label{eq:asymptotics T_-(alpha)}
  1 - \alpha_{\rm A} \sim \frac{(1 - \alpha_{\rm A}^+)(1 + \alpha^+)}{2\alpha_{\rm A}^+\alpha^+}e^{- \frac{T_{\rm A}}{T_+}} \left(\frac{T}{T_+}\right)^{\! -\frac{5}{2}}
    \quad for \quad T \to \infty
\end{equation}
for the expansive part. Hence, for $\alpha_{\rm A}$ close to $1,$ both parts of $G(\alpha_{\rm A}, T) = 0$ constitute strictly increasing  curves in $(T, \alpha_{\rm A})$ plane.
\end{proposition}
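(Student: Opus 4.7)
The proposition describes the two large-$T$ branches of the level set $G(\alpha_{\rm A},T)=0$, so I would proceed by a dominant-balance analysis in the spirit of the proof of Theorem \ref{thm:Hugoniot asymptotics}. The first step is to show $\alpha_{\rm A}\to 1$ along every branch as $T\to\infty$: if $\alpha_{\rm A}$ stayed in a compact subinterval of $(0,1)$ along some sequence $T_n\to\infty$, both $K_+$ and $\Lambda^+$ would remain bounded and bounded away from zero, so the first summand of $G$ in \eqref{eq:GGG} would grow like $T^{5/2}$ while the negative linear-in-$T$ term grows only like $T$, forcing $G\to+\infty$, a contradiction. The compatibility relation \eqref{eq:state space} then yields $\alpha_{\rm B},\alpha\to 1$ as well.

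The second step inserts the ansatz $1-\alpha_{\rm A}=C(T/T_+)^{-\kappa}\bigl(1+o(1)\bigr)$ with $C,\kappa>0$. Using $\alpha_{\rm A},\alpha\to 1$ and $e^{\pm T_{\rm A}/T}\to 1$, one has
\[
K_+\sim\frac{2\alpha_{\rm A}^+\alpha^+}{(1-\alpha_{\rm A}^+)(1+\alpha^+)}(1-\alpha_{\rm A}),\qquad
\Lambda^+\sim\frac{1-\alpha_{\rm A}^+}{\alpha_{\rm A}^+\alpha^+}\cdot\frac{1}{1-\alpha_{\rm A}},
\]
so the equation $G=0$ reduces, at leading order, to a balance among three terms of orders $T^{5/2-\kappa}$, $T^{\kappa-3/2}$, and $T$. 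The only admissible pairings that cancel the negative linear term are $\kappa=3/2$ (first term vs.\ third; the middle term is $O(1)$, giving the compressive branch) and $\kappa=5/2$ (second term vs.\ third; the first term is $O(1)$, giving the expansive branch); the alternative $\kappa=2$ equating the two growing exponents produces only order $T^{1/2}$, which is too small. Equating coefficients in each case yields $C=\frac{1-\alpha_{\rm A}^+}{\alpha_{\rm A}^+\alpha^+}e^{-T_{\rm A}/T_+}$ and $C=\frac{(1-\alpha_{\rm A}^+)(1+\alpha^+)}{2\alpha_{\rm A}^+\alpha^+}e^{-T_{\rm A}/T_+}$, which are exactly \eqref{eq:asymptotics T_+(alpha)} and \eqref{eq:asymptotics T_-(alpha)}. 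Monotonicity is then immediate: since $1-\alpha_{\rm A}\sim C(T/T_+)^{-\kappa}$ with $C,\kappa>0$, the map $T\mapsto\alpha_{\rm A}$ is eventually strictly increasing, so each branch is the graph of a strictly increasing function in the $(T,\alpha_{\rm A})$-plane.

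\textbf{Main obstacle.} The delicate point is not the formal balance but its rigorous justification: one must rule out anomalous behavior of $1-\alpha_{\rm A}$ (decay faster than any power, or curves interpolating between the two regimes) and prove that $G=0$ has exactly these two large-$T$ branches. The cleanest route is to rescale $\xi=(1-\alpha_{\rm A})(T/T_+)^{\kappa}$ for each candidate $\kappa\in\{3/2,5/2\}$, multiply $G=0$ by an appropriate power of $T/T_+$ so that the leading equation becomes algebraic in $\xi$ with a simple root at $\xi=C$, and then apply the implicit function theorem; the non-degeneracy needed for that step is provided by \eqref{eq:dG/d alpha}, and the sign information in \eqref{eq:TdG/dT_2} keeps the two branches separated.
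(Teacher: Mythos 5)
Your proposal is correct and follows essentially the same route as the paper: establish that $\alpha_{\rm A}\to 1$ together with $T\to\infty$ along $G=0$, insert the power-law ansatz $1-\alpha_{\rm A}\sim K(T/T_+)^{-\mu}$ into \eqref{eq:GGG}, discard the balance $\mu=2$, and match the remaining exponents $\mu=\tfrac32$ and $\mu=\tfrac52$ against the linear term to obtain the two constants. Your closing remarks on justifying the ansatz and the monotonicity claim concern gaps the paper itself leaves (it simply declares monotonicity "obvious"), so they are a welcome addition rather than a divergence.
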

\begin{proof}
It is easy to see from \eqref{eq:GGG} and compatibility condition that: if $\alpha_{\rm A}\to 1$ on  $G(\alpha_{\rm A},T)=0,$ then $\alpha_{\rm B}\to 1$ and  $T \to \infty.$  We set
$
   1 - \alpha_{\rm A} \sim K\left(\frac{T}{T_+}\right)^{\! -\mu}
$  
with  $\mu > 0$. By \eqref{eq:GGG}
\begin{align*}
&\frac{2\alpha_{\rm A}^+ \alpha^+K}{(1 - \alpha_{\rm A}^+)(1 + \alpha^+)}\left(\frac{T}{T_+}\right)^{\!\frac{5}{2} - \mu}e^{ \frac{T_{\rm A}}{T_+}}   +  \frac{1 - \alpha_{\rm A}^+}{\alpha_{\rm A}^+\alpha^+K}\left(\frac{T_0}{T}\right)^{\!\frac{3}{2} - \mu}e^{-\frac{T_{\rm A}}{T_+}}\\
 & \sim    \frac{2T}{T_+(1 + \alpha^+)} + 1 + \frac{M (u_A - u_+)^2}{RT_+(1 + \alpha^+)} .
\end{align*} 
If $T^{\frac{5}{2} - \mu}$ and $T^{\mu - \frac{3}{2}}$ are equally large as $T \to \infty,$ we have $\mu = 2.$ Then both terms are $O(1)T^{\frac{1}{2}},$ which is a contradiction. In the case $\frac{5}{2} - \mu = 1,$ we have $\mu = \frac{3}{2}$ and $K = \frac{1 - \alpha_{\rm A}^+}{\alpha_{\rm A}^+\alpha^+}e^{- \frac{T_{\rm A}}{T_+}}, $ which is the compressive part, and obtain \eqref{eq:asymptotics T_+(alpha)}.
\par
Otherwise, if $\mu - \frac{3}{2} = 1,$ we have $\mu = \frac{5}{2}$ and  $K = \frac{(1 - \alpha_{\rm A}^+)(1 + \alpha^+)}{2\alpha_{\rm A}^+\alpha^+}e^{- \frac{T_{\rm A}}{T_+}},$ which is the expansive part. In both cases monotonicity with respect to $\alpha_{\rm A}$ close to $1$ is obvious.
\end{proof}
\paragraph{Smoothness of the Comressive Part:}
\begin{theorem}\label{thm:G=0}
For any fixed $(\alpha_{\rm A}^+, T_+),$   the compressive part of $G(\alpha_{\rm A},T) = 0,$ constitutes a single differentiable curve.
\end{theorem}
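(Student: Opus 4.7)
My plan is to show that on the compressive part the partial derivative $\partial G/\partial T$ is strictly positive, so that the implicit function theorem represents the zero set locally as a graph $T=T^{(+)}(\alpha_{\rm A})$, and then to globalize this to a single connected differentiable curve parametrized by $\alpha_{\rm A}\in(0,1)$.

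The key computation would use formula \eqref{eq:TdG/dT_2}:
\begin{equation*}
T\frac{\partial G}{\partial T} = \left(\frac{p}{p_+}- \frac{v}{v_+}\right)\left(\frac{3}{2} + \frac{T_{\rm A}}{T} -  \frac{Q_{\rm BA}}{\alpha}\right) + \left[ \frac{p}{p_+} - \frac{T(1 + \alpha)}{T_+(1 + \alpha^+)} \right] \left(1 + \frac{Q_{\rm BA}}{1 + \alpha}\right).
\end{equation*}
By definition the compressive part is where $p/p_+>1$. Because $pv=(R/M)T(1+\alpha)$, one has $pv/(p_+v_+)=T(1+\alpha)/[T_+(1+\alpha^+)]$; the kinetic relation \eqref{eq:RH-kinetic_2} forces $(p-p_+)(v-v_+)<0$, so $v/v_+<1$ on the compressive part. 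Hence $p/p_+-v/v_+>0$, and $p/p_+-T(1+\alpha)/[T_+(1+\alpha^+)]=(p/p_+)(1-v/v_+)>0$. The bracket $\tfrac{3}{2}+T_{\rm A}/T-Q_{\rm BA}/\alpha$ is positive by \eqref{eq:TA/T-QAB/alpha} (which invokes the standing hypothesis $T_{\rm B}\leq 2T_{\rm A}$), and $1+Q_{\rm BA}/(1+\alpha)$ is manifestly positive. Thus $\partial G/\partial T>0$ throughout the compressive part.

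Given this strict monotonicity, for each fixed $\alpha_{\rm A}\in(0,1)$ the map $T\mapsto G(\alpha_{\rm A},T)$ is strictly increasing on $\{p>p_+\}$, so it has at most one zero there; Proposition \ref{prop:set G=0} supplies at least one. This defines $T^{(+)}$ as a single-valued function on $(0,1)$, and the implicit function theorem gives $C^1$ regularity at every $\alpha_{\rm A}$. The compressive part therefore coincides globally with the graph of $T^{(+)}$. Its asymptotic behaviour as $\alpha_{\rm A}\to 1^-$ is controlled by \eqref{eq:asymptotics T_+(alpha)}, and continuity as $\alpha_{\rm A}\to 0^+$ can be read directly from \eqref{eq:GGG}, so the graph is a single connected differentiable curve.

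The main obstacle is the sign argument for $\partial G/\partial T$: the delicate algebraic identity producing the form \eqref{eq:TdG/dT_2} is essential, and the positivity of $\tfrac{3}{2}+T_{\rm A}/T-Q_{\rm BA}/\alpha$ is the one point at which the physical restriction $T_{\rm A}<T_{\rm B}\leq 2T_{\rm A}$ is genuinely used. Without this inequality the term $Q_{\rm BA}/\alpha$ could dominate and the proof strategy would fail. A secondary concern is that the compressive branch stays in the domain $p>p_+$ throughout $\alpha_{\rm A}\in(0,1)$, so that strict monotonicity is not lost at an interior point; this is not a priori obvious, but follows once one verifies that $G(\alpha_{\rm A},T^{(+)}(\alpha_{\rm A}))=0$ with $\partial G/\partial T>0$ at that point, since the identity $\frac{p}{p_+}=1$ coincides (as in Proposition \ref{prop:set G=0}) with the minimum of $G(\alpha_{\rm A},\cdot)$ on the admissible range.
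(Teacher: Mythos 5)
Your proof is correct and follows essentially the same route as the paper: both rest on the identity \eqref{eq:TdG/dT_2}, the inequality \eqref{eq:TA/T-QAB/alpha} (where the standing hypothesis $T_{\rm B}\le 2T_{\rm A}$ enters), and the fact that $v<v_+$ on the compressive branch, yielding $\partial G/\partial T>0$ there and hence the implicit function theorem. The only cosmetic difference is that the paper substitutes the constraint $G=0$ to rewrite the second bracket as $1-\frac{v}{v_+}+\frac{M(u_A-u_+)^2}{RT_+(1+\alpha^+)}$, whereas you use the state relation $\frac{pv}{p_+v_+}=\frac{T(1+\alpha)}{T_+(1+\alpha^+)}$ to write it as $\frac{p}{p_+}\left(1-\frac{v}{v_+}\right)$; both are positive for the same reason.
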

\begin{proof}
  Since
  $
  G(\alpha_{\rm A}, T) = 0
  $
  is written as
  $\ 
   \frac{p}{p_+} - \frac{T(1 + \alpha)}{T_+(1 + \alpha^+)}
=  1 -  \frac{v}{v_+} + \frac{M(u_A - u_+)^2}{RT_+(1 + \alpha^+)}
$
%on the set $G(\alpha_{\rm A}, T) = 0$\UTF{00A9}\UTF{00A9}%
    then it follows from \eqref{eq:TdG/dT_2} and \eqref{eq:TA/T-QAB/alpha} that 
\begin{align}
  T\frac{\partial G}{\partial T} (\alpha_{\rm A}, T)
  &=  \left(\frac{p}{p_+}- \frac{v}{v_+}\right)\left(\frac{3}{2} + \frac{T_{\rm A}}{T} -  \frac{Q_{\rm BA}}{\alpha}\right) \nonumber\\
&+  \left(1 + \frac{Q_{\rm BA}}{1 + \alpha}\right)\left[\left(1 -  \frac{v}{v_+}\right) + \frac{M(u_A - u_+)^2}{RT_+(1 + \alpha^+)}\right] > 0 \label{eq:dG/dT}
\end{align} 
%\par
%Then we conclude that: for every $\alpha\in(0,1)$ the equation $G(\alpha_{\rm A}, T)=0$ has exactly the two zeroes $T_\pm(\alpha_{\rm A})$.
%\par
on the compressive part  $p > p_+, \, v < v_+.$ 
Thus we have proved that: in a neighbourhood of every state $(\alpha_{\rm A}, T^{(+)}),$ the compressive part of $G(\alpha_{\rm A},T) = 0$ is a graph of a differentiable function of $\alpha_{\rm A}.$ Consequently, the compressive part constitutes a differentiable curve.
\end{proof}
\paragraph{Solution to the Shock Tube Problem:}
\begin{theorem}[Intersection with Hugoniot loci]\label{thm:intersection}
Fix $(T_+,\alpha_{\rm A}^+)$ and $u_+\ne u_A$. 
Then, in the region $\alpha_{\rm A}^+ < \alpha_{\rm A}  < 1,$ there is at least one  intersection point of the compressive part of $G(\alpha_{\rm A},  T) = 0$ and the thermodynamic Hugoniot locus \eqref{eq:Rankine-Hugoniot} of $(T_+, \alpha_{\rm A}^+)$. 
%The same result holds also in the region $0<\alpha<\alpha_0$ if $(u - u_0)^2 \leq 2a^2T_{\rm A}\alpha_0.$
\end{theorem}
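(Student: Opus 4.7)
The plan is to reduce the problem to a one-dimensional intermediate value argument along the Hugoniot curve. By Theorem~\ref{thm:Hugoniot graph} I parametrize the thermodynamic Hugoniot locus through $(T_+,\alpha_{\rm A}^+)$ as a differentiable graph $\alpha_{\rm A}=\alpha_{\rm A}^H(T)$, $T>0$, with $\alpha_{\rm A}^H(T_+)=\alpha_{\rm A}^+$, and set $g(T):=G\bigl(\alpha_{\rm A}^H(T),T\bigr)$; intersections with the compressive part of $\{G=0\}$ correspond to zeros of $g$ with $p>p_+$. Using the equivalent form
\[
 G(\alpha_{\rm A},T)=-\frac{M(u_A-u_+)^2}{RT_+(1+\alpha^+)}-\Bigl(\frac{p}{p_+}-1\Bigr)\Bigl(\frac{v}{v_+}-1\Bigr),
\]
and noting that at $T=T_+$ the Hugoniot state satisfies $p=p_+$ and $v=v_+$, the ``kinetic defect'' term vanishes and $g(T_+)=-M(u_A-u_+)^2/[RT_+(1+\alpha^+)]<0$ because $u_A\ne u_+$.

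Next I compute the asymptotics of $g$ as $T\to\infty$. By Theorem~\ref{thm:Hugoniot asymptotics}, $\alpha_{\rm A}^H,\alpha_{\rm B}^H\to 1$, hence $\alpha\to 1$. Balancing the leading terms of the Hugoniot relation~\eqref{eq:Hugoniot alpha}—whose left-hand side grows like $8T/T_+$—forces $p/p_+\sim 8T/[T_+(1+\alpha^+)]\to\infty$. The equation of state $pv=(R/M)T(1+\alpha)$ then yields $v/v_+\to 1/4$, so $\bigl(p/p_+-1\bigr)\bigl(v/v_+-1\bigr)\to -\infty$ and $g(T)\to +\infty$.

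With $g(T_+)<0$ and $g(T)\to+\infty$, the intermediate value theorem produces some $T^\ast>T_+$ with $g(T^\ast)=0$. By Theorem~\ref{thm:dp/dT} the pressure strictly increases along the Hugoniot, so $p(T^\ast)>p_+$, which places the intersection on the compressive branch.

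The delicate point—and the main obstacle—is ensuring the strict inclusion $\alpha_{\rm A}^+<\alpha_{\rm A}^H(T^\ast)<1$: Theorem~\ref{thm:sign dH/dT} shows that $\alpha_{\rm A}^H$ may decrease below $\alpha_{\rm A}^+$ just past $T_+$, so a naively chosen zero of $g$ could violate the left inequality. I would handle this by exploiting the tail comparison $1-\alpha_{\rm A}^H\sim 4(1-\alpha_{\rm A}^C)$ obtained by combining Theorem~\ref{thm:Hugoniot asymptotics} with Proposition~\ref{prop:G=0 asymptotics}, which shows that for all sufficiently large $T$ the Hugoniot and the compressive branch of $\{G=0\}$ sit on opposite sides of each other while both approach $\alpha_{\rm A}=1$. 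Choosing $T_1$ so large that $\alpha_{\rm A}^H(T)>\alpha_{\rm A}^+$ for all $T\ge T_1$, I select $T^\ast\in[T_1,\infty)$: if $g(T_1)\le 0$ the IVT on $[T_1,\infty)$ produces the desired zero, and if $g(T_1)>0$ one uses the sign change of $g$ forced by the tail comparison (together with the strict monotonicity of the pressure) to extract a zero of $g$ still lying in the admissible region $\alpha_{\rm A}^+<\alpha_{\rm A}<1$.
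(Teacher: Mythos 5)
Your existence argument is sound and in places more quantitative than the paper's. Parametrizing the Hugoniot locus by $T$ via Theorem~\ref{thm:Hugoniot graph}, the value $g(T_+)=-M(u_A-u_+)^2/[RT_+(1+\alpha^+)]<0$, the strong-shock limits $p/p_+\sim 8T/[T_+(1+\alpha^+)]$ and $v/v_+\to 1/4$ (hence $g\to+\infty$), and the identification of any zero $T^\ast>T_+$ with the compressive branch through Theorem~\ref{thm:dp/dT} and the characterization $p>p_+$ are all correct. The paper takes a different route: it compares the two loci as curves over $\alpha_{\rm A}$, observing that at $\alpha_{\rm A}=\alpha_{\rm A}^+$ the compressive curve sits at $T^{(+)}(\alpha_{\rm A}^+)>T_+$, i.e.\ above the Hugoniot point, while the tail asymptotics \eqref{eq:asymptotics T(alpha)} and \eqref{eq:asymptotics T_+(alpha)} (the coefficient $4$ versus $1$ in front of $(T/T_+)^{-3/2}$) put it below the Hugoniot locus as $\alpha_{\rm A}\to 1$; the resulting crossing is therefore automatically located in $\alpha_{\rm A}^+<\alpha_{\rm A}<1$, which is the whole point of working over $\alpha_{\rm A}$ rather than over $T$.

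The genuine gap is exactly where you flag it, and your proposed repair does not close it. Since $g\to+\infty$, once $T_1$ is chosen large enough that $\alpha_{\rm A}^H(T)>\alpha_{\rm A}^+$ for all $T\ge T_1$ you should expect $g(T_1)>0$; in that case there is no forced sign change on $[T_1,\infty)$, and the only zeros of $g$ guaranteed by the intermediate value theorem lie in $(T_+,T_1)$, where Theorem~\ref{thm:sign dH/dT} is precisely the statement that $\alpha_{\rm A}^H$ may have dipped below $\alpha_{\rm A}^+$. ``The sign change of $g$ forced by the tail comparison'' is not available on a bounded interval: the comparison $1-\alpha_{\rm A}^H\sim 4\bigl(1-\alpha_{\rm A}^{(+)}\bigr)$ only controls the regime $T\to\infty$, where $g$ is already positive. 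So your argument proves that the compressive part of $\{G=0\}$ meets the Hugoniot locus somewhere with $T>T_+$ and $p>p_+$, but not that some such intersection has $\alpha_{\rm A}>\alpha_{\rm A}^+$. To recover the stated localization you need the paper's two-sided comparison along the compressive curve (e.g.\ examine the sign of $H\bigl(T^{(+)}(\alpha_{\rm A}),\alpha_{\rm A}\bigr)$, which is negative where the compressive curve lies above the Hugoniot and positive near $\alpha_{\rm A}=1$ by the asymptotics), so that the sign change occurs within the interval $(\alpha_{\rm A}^+,1)$ itself.
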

\begin{proof}
%We refer to Figure \ref{fig:G_graph}.
  Let $\alpha_{\rm A} = \alpha_{\rm A}(T)$ denote thermodynamic Hugoniot locus of $(T_+, \alpha_{\rm A}^+)$ in $(\alpha_{\rm A},T)$-plane. Clearly, $G(\alpha_{\rm A}^+, T_+) = - \frac{M(u_A - u_+)^2}{RT_+(1 + \alpha^+)} < 0$ by \eqref{eq:GGG}, while  $\alpha_{\rm A}^+  = \alpha_{\rm A}(T_+).$ Through the proof of Proposition \ref{prop:set G=0}, we have found that the graph of $T = T^{(+)}(\alpha_{\rm A})$ is located above the thermodynamic part the Hugoniot locus of $(T_+, \alpha_{\rm A}^+)$ near $\alpha_{\rm A} = \alpha_{\rm A}^+.$
%Then, in a neighborhood of $(\alpha_{\rm A},T_+),$ the Hugoniot locus is located under $T = T^{(+)}(\alpha_{\rm A})$. 
\par
On the other hand, by \eqref{eq:asymptotics T(alpha)} the Hugoniot locus of $(\alpha_{\rm B}, T_B)$ takes an aymptotic form 
$
    	1 - \alpha_{\rm A} \sim \frac{4(1 - \alpha^+)}{\alpha_{\rm A}^+\alpha^+}\left(\frac{T}{T_+}\right)^{\!-\frac{3}{2}}e^{-\frac{T_{\rm A}}{T_+}}
$
and by \eqref{eq:asymptotics T_+(alpha)}
% and \eqref{eq:asymptotics T_-(alpha)} 
$T_{+}(\alpha)$ has 
$
 1 - \alpha_{\rm A} \sim \frac{1 - \alpha_{\rm A}^+}{\alpha_{\rm A}^+\alpha^+}e^{- \frac{T_{\rm A}}{T_+}}\left(\frac{T}{T_+}\right)^{\! -\frac{3}{2}}.
$ as $\alpha_{\rm A} \to 1.$
Thus we conclude that the graph of  $T = T^{(+)}(\alpha_{\rm A})$ is located under the Hugoniot locus as $\alpha_{\rm A} \to 1,$ which proves the assertion. 
\end{proof}
\paragraph{Uniqueness of the Solution:}
The aim of this subsection is to prove that the solution obtained in Theorem \ref{thm:intersection} is unique.
\begin{theorem}[Uniqueness of the intersection point]\label{thm:1ness intersection}
Fix $(\alpha_{\rm A}^+, T_+)$ and $u_+\ne u_A$. 
Then, in the region $\alpha_{\rm A}^+ < \alpha_{\rm A} < 1$ the intersection point of the compressive part of $G(\alpha_{\rm A},  T) = 0$ and the thermodynamic Hugoniot locus \eqref{eq:Rankine-Hugoniot} of $(T_+, \alpha_{\rm A}^+)$ is unique. 
%The same result holds also in the region $0<\alpha<\alpha_0$ if $(u - u_0)^2 \leq 2a^2T_{\rm A}\alpha_0.$
\end{theorem}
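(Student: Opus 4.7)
The plan is to parametrize the thermodynamic Hugoniot locus of $(T_+,\alpha_{\rm A}^+)$ by pressure and reduce the uniqueness question to a monotonicity statement. By Theorem~\ref{thm:dp/dT} the map $T\mapsto p$ is strictly increasing along the locus, so one may invert it smoothly and regard every thermodynamic quantity along the locus as a smooth function of $p$, writing $\alpha_{\rm A}=\alpha_{\rm A}(p)$, $T=T(p)$, $\alpha=\alpha(p)$, $v=v(p)$; the compressive branch corresponds to $p>p_+$. Using the equation of state $pv=\frac{R}{M}T(1+\alpha)$, the identity $\frac{T(1+\alpha)}{T_+(1+\alpha^+)}=\frac{pv}{p_+v_+}$ collapses the expression \eqref{eq:GGG} on the Hugoniot locus to
$$
\Phi(p)\;:=\;G\bigl(\alpha_{\rm A}(p),T(p)\bigr)\;=\;-\,\frac{M(u_A-u_+)^2}{RT_+(1+\alpha^+)}\;-\;\Bigl(\tfrac{p}{p_+}-1\Bigr)\!\Bigl(\tfrac{v(p)}{v_+}-1\Bigr).
$$
Intersection points of the two curves on the compressive branch of $\{G=0\}$ correspond precisely to zeros of $\Phi$ on $(p_+,\infty)$, and $\Phi(p_+)=-\tfrac{M(u_A-u_+)^2}{RT_+(1+\alpha^+)}<0$.

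Differentiating yields
$$
p_+v_+\,\Phi'(p)\;=\;-\bigl(v(p)-v_+\bigr)\;-\;(p-p_+)\,v'(p).
$$
The next step is to show that $v'(p)\le 0$ along the compressive branch of the Hugoniot locus. Combined with $v(p_+)=v_+$, this gives $v(p)<v_+$ for every $p>p_+$, so the first term on the right is strictly positive while the second is nonnegative. Hence $\Phi'(p)>0$ throughout $(p_+,\infty)$, so $\Phi$ has at most one zero there, and together with Theorem~\ref{thm:intersection} this yields the claimed uniqueness.

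The substance of the proof, and the main obstacle, is the monotonicity $v'(p)\le 0$ along the compressive Hugoniot branch. For a polytropic ideal gas this is classical and immediate, but here the difficulty is that $\alpha_{\rm A}(T)$ itself may be non-monotone along the locus (Theorem~\ref{thm:sign dH/dT}), so the sign of
$$
v'(p)\;=\;-\frac{v}{p}+\frac{R}{Mp}\Bigl[(1+\alpha)\tfrac{dT}{dp}+T\tfrac{d\alpha}{dp}\Bigr]
$$
is not manifest. I would attack it directly: use implicit differentiation of $H=0$, together with the $p$-analog of \eqref{eq:dH/dT 2} and formula \eqref{eq:dH/dalpha 2}, to express $\frac{d\alpha_{\rm A}}{dp}$ and $\frac{dT}{dp}$; then feed these through Lemma~\ref{lem:dalpha_B/dT} to obtain $\frac{d\alpha}{dp}$; and finally reduce $v'(p)\le 0$ to a polynomial inequality in the nonnegative quantities $q,q_{\rm A},q_{\rm B},\alpha,\alpha_{\rm A},\alpha^+,T_{\rm A}/T,T_{\rm B}/T$, whose sign can be forced term by term by the standing hypothesis $T_{\rm A}<T_{\rm B}\le 2T_{\rm A}$, exactly as in \eqref{eq:TA/T-QAB/alpha} and in the proof of Theorem~\ref{thm:dp/dT}. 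A more conceptual alternative is to invoke the Bethe-Weyl framework of Theorem~\ref{thm:Bethe-Weyl} together with the inequality $\Gamma>0$ recorded in the remark following Theorem~\ref{thm:Bethe}. Either way, the bulk of the work is this lengthy but routine monotonicity check, which is precisely what the author defers to the appendix on \lq\lq detailed computations for the proof of uniqueness\rq\rq{} announced in the introduction.
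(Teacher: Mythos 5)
Your reduction is tidy and the identity $p_+v_+\,\Phi'(p)=-(v-v_+)-(p-p_+)\,v'(p)$ is correct, but the proposal has a genuine gap: the key claim, $v'(p)\le 0$ along the \emph{entire} compressive Hugoniot branch, is never established. You only sketch how one might attempt it and assert that the resulting sign check is ``routine''; in this system that cannot be taken on faith. Theorem~\ref{thm:sign dH/dT} shows $\alpha_{\rm A}(T)$ can \emph{decrease} along the Hugoniot locus, and Section~\ref{sec:convexity} shows that convexity ($v_{pp}>0$ on isentropes) fails on a nonempty inflection locus, so your ``conceptual alternative'' via Theorem~\ref{thm:Bethe-Weyl} is unavailable globally, and the direct computation you defer is exactly where the entire difficulty of the theorem sits. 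Worse, your route demands strictly more than the paper ever proves or needs: the paper's proof parametrizes the Hugoniot locus by $T$ (Theorem~\ref{thm:Hugoniot graph}), writes
$$
\frac{dG}{dT}=\frac{\left(\frac{\partial G}{\partial T}\right)\left(\frac{\partial H}{\partial \alpha_{\rm A}}\right)-\left(\frac{\partial G}{\partial \alpha_{\rm A}}\right)\left(\frac{\partial H}{\partial T}\right)}{\left(\frac{\partial H}{\partial \alpha_{\rm A}}\right)},
$$
and shows in Appendix~\ref{sec:computation} that, after substituting $H=0$, the numerator is a combination of terms whose signs are controlled by $p>p_+$, $v<v_+$ and $\alpha_{\rm A}>\alpha_{\rm A}^+$. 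Crucially, these sign inputs are needed only \emph{at the intersection points}, where $G=0$ forces $(p-p_+)(v-v_+)<0$ and hence hands you $p>p_+$, $v<v_+$ for free on the compressive part. You replace this pointwise argument by a global monotonicity lemma for $v(p)$ along the whole branch, which is at least as laborious to verify and is not obviously true outside the convex region; as written the proposal is a plan, not a proof.

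A secondary, fixable looseness: $v'\le 0$ together with $v(p_+)=v_+$ yields only $\Phi'\ge 0$, i.e.\ $\Phi$ non-decreasing, which does not by itself exclude $\Phi$ vanishing on an interval. You would need the strict inequality $v(p)<v_+$ for $p>p_+$ (or $v'<0$) to conclude that $\Phi$ has at most one zero on $(p_+,\infty)$.
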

\begin{proof}
Since the thermodynamic Hugoniot locus is the graph of $\alpha_{\rm A} = \alpha_{\rm A}(T),$ the solution is a zero of $G(\alpha_{\rm A}(T),T)$ (denoted by $T = T_{\ast}$) and proof will be completed by showing
$$
  \frac{dG}{dT} (\alpha_{\rm A}(T_{\ast}),T_{\ast})  
  = \frac{\left[\left(\frac{\partial G}{\partial T}\right)_{\rm \!A} {\left(\frac{\partial H}{\partial \alpha_{\rm A}}\right)_{\! T}}
    - \left(\frac{\partial G}{\partial \alpha_{\rm A}}\right)_{\! T}\left(\frac{\partial H}{\partial T}\right)_{\rm \!A}\right]_{(\alpha_{\rm A}(T_{\ast}),T_{\ast})} }{\left(\frac{\partial H}{\partial \alpha_{\rm A}}\right)_{\! T}(\alpha_{\rm A}(T_{\ast}),T_{\ast}) } >  0.
$$
Recall that $\left(\frac{\partial H}{\partial \alpha_{\rm A}}\right)_{\! T} > 0.$ By substituting the expressions \eqref{eq:dH/dT 2}, \eqref{eq:dH/dalpha 2},  \eqref{eq:dG/d alpha},  \eqref{eq:TdG/dT_1} into the above, the numerator
is written as $ \frac{T_+\left(1 + \alpha^+\right)}{q_{\rm A}T(1 + \alpha)}$ times of 
\begin{multline*}
  % \frac{T_+\left(1 + \alpha^+\right)}{q_{\rm A}T(1 + \alpha)}
  \left|
    \begin{array}{c}
      \left[\frac{5}{2} + \frac{T_{\rm A}}{T} -  \frac{Q_{\rm BA}}{\alpha(1 + \alpha) }\right]\frac{p}{p_+}
- \left(\frac{3}{2} + \frac{T_{\rm A}}{T} -  \frac{Q_{\rm BA}}{\alpha }\right)\frac{v}{v_+}
 - \frac{T(1 + \alpha)}{T_+(1 + \alpha^+)} -  \frac{TQ_{\rm BA}}{T_+(1 + \alpha^+)} \\
     - (1 + \alpha)\left[1 + \frac{q}{\alpha(1 + \alpha)}\right]\frac{p}{p_+}
+ (1 + \alpha)\left(1 + \frac{q}{\alpha}\right)\frac{v}{v_+}
  - \frac{qT(1 + \alpha) }{T_+(1 + \alpha^+)}
    \end{array}
\right.\\
\left.
  \begin{array}{c}
   \! -\! \left[\frac{5}{2}\! + \!\frac{T_{\rm A}}{T}
 \!-\! \frac{ Q_{\rm BA}}{\alpha(1 + \alpha)}\right]\frac{p}{p_+}\!-\! \left(\frac{3}{2}\!+\! \frac{T_{\rm A}}{T}
\!-\! \frac{ Q_{\rm BA}}{\alpha}\right)\frac{v}{v_+}
\!+\! \frac{4T(1 + \alpha)}{T_+\left(1 + \alpha^+\right)}\!\left[1\! +\! \frac{ Q_{\rm BA}}{1 + \alpha}\!\left(1 \!+\! \frac{T_{\rm B}}{2T}\right)\right]\\
 (1 + \alpha)\left[1 + \frac{q}{\alpha(1 + \alpha)}\right]\frac{p}{p_+}
+ (1 + \alpha)\left(1 + \frac{q}{\alpha}\right)\frac{v}{v_+}
+ \frac{4T(1 + \alpha)}{T_+\left(1 + \alpha^+\right)}\left(q  + \frac{Q_T}{2}\right)
  \end{array}
\right|.
\end{multline*}
As shown in section \ref{sec:computation},  its final form will be
\begin{align} 
%& \textstyle q_{\rm A}\left[\left(\frac{\partial G}{\partial T}\right)_{\! \alpha_{\rm A}} \left(\frac{\partial H}{\partial \alpha_{\rm A}}\right)_{\! T}
%-   \left(\frac{\partial G}{\partial \alpha_{\rm A}}\right)_{\! T}\left(\frac{\partial H}{\partial T}\right)_{\! \alpha_{\rm A}}\right]\nonumber\\ 
& \textstyle  3 (1 + \alpha)\left[1 + \frac{q}{\alpha(1 + \alpha)}\right]
\left(\frac{p}{p_+} - 1\right)
+ 5 (1 + \alpha)\left(1 + \frac{q}{\alpha}\right)\left(1 - \frac{v}{v_+}\right)
\nonumber \\ 
&  \textstyle   + 8\left\{
 q \left[\left(\frac{3}{2}  +  \frac{T_{\rm A}}{T}\right)\left(\frac{5}{2} + \frac{T_{\rm A}}{T}\right) - \frac{T_{\rm A}}{4T}\right]
 +  \left[\frac{q\left(T_{\rm B} - T_{\rm A}\right)}{T} - Q_{\rm BA}\right]\frac{Q_{\rm BA}}{\alpha(1 + \alpha) }\right.\nonumber\\
&  \textstyle \left. + \left(\frac{15}{4} + \frac{T_{\rm A} + T_{\rm B}}{T}\right)Q_{\rm BA} \rule{0ex}{2.5ex} 
 \right\} \! \left[\frac{T(1 + \alpha)}{T_+(1 + \alpha^+)}  - 1\right]
 \nonumber\\
& \textstyle +  6 (1 + \alpha)\left[1 + \frac{q}{\alpha(1 + \alpha)}\right]
\left(\frac{p}{p_+} - 1\right)
+ 5 (1 + \alpha)\left(1 + \frac{q}{\alpha}\right)\left(1 - \frac{v}{v_+}\right)
\nonumber \\
&  \textstyle    + 2\left\{
 q \left(\frac{3}{2}  +  \frac{T_{\rm A}}{T}\right)\left(\frac{5}{2} + \frac{T_{\rm A}}{T}\right)
 +  \left[\frac{q\left(T_{\rm B} - T_{\rm A}\right)}{T} - Q_{\rm BA}\right]\frac{Q_{\rm BA}}{\alpha(1 + \alpha) }\right.\nonumber\\ 
&  \textstyle  \left.  + \left(4 + \frac{T_{\rm A} + T_{\rm B}}{T}\right)Q_{\rm BA} \right\} \! \frac{\left[\beta T_{\rm A}(\alpha_{\rm A}  -\alpha_{\rm A}^+) + (1 - \beta)T_{\rm B}(\alpha_{\rm B} -\alpha_{\rm B}^+)\right] }{T_+(1 + \alpha^+)}\nonumber\\
& \textstyle + 2 \left(\frac{v}{v_+}\right)  \left[\frac{q\left(T_{\rm B} - T_{\rm A}\right)}{T} -  Q_{\rm BA}\right]  \left(\frac{p}{p_+} - 1\right)
\frac{Q_{\rm BA}}{1 + \alpha}.\label{eq:dG/dT final}
\end{align}
Notice that
\begin{equation}\label{eq:q(T_B - T_A)/T - Q_AB}
    \frac{q\left(T_{\rm B} - T_{\rm A}\right)}{T} -  Q_{\rm BA}
%    = \frac{q\left(T_{\rm B} - T_{\rm A}\right)}{T} - \frac{(1 - \beta)\left(T_{\rm B} - T_{\rm A}\right)q_{\rm B}}{T}
    = \frac{\beta\left(T_{\rm B} - T_{\rm A}\right)q_{\rm A}}{T} > 0,
\end{equation}
which shows that the expession \eqref{eq:dG/dT final} is positive and the theorem follows.
\end{proof}
%Hence
%$$
%2 \left(\frac{v}{v_+}\right)\left[\frac{q\left(T_{\rm B} - T_{\rm A}\right)}{T} -  Q_{\rm BA}\right]\left(\frac{p}{p_+} - 1\right)
%\frac{Q_{\rm BA}}{1 + \alpha} \ge 0.
%$$

%%%%%%%%%%%%%%%%%%%%%%%%%%%%%%%%%%%%%%%%%%%%%%%%%%%%%%%%%%%%%%%
\section{Conclusions and Discussions}\label{sec:conclusions}
%%%%%%%%%%%%%%%%%%%%%%%%%%%%%%%%%%%%%%%%%%%%%%%%%%%%%%%%%%%%%%%
In this paper, we have studied a model system for  macroscopic motion of an ionized  gas which is a mixture of two monatomic gas A and B; the mixture ratio is $\beta :1 - \beta.$ This model system is proposed by
%Fukuda-Okasaka-Fujimoto
\cite{Fukuda-Okasaka-Fujimoto} and consists of three conservation laws  in one space dimension together with the first and second law of thermodynamics which are supplemented by an equation of state and two more thermodynamic equations  called Saha's laws. We have assumed that the interaction potential energies and effects of collisions between charged particles are negligible and the local thermodynamic equilibrium is attained. We further assume that $T_{\rm B}:$  the first ionization temperature of the gas B is higher than $T_{\rm A}:$  that of the gas A and $2T_{\rm A} \geq T_{\rm B}.$ Note that A: hydrogen and B: helium satisfy these assumptions. 
\par
The physical entropy functions are constructed and it is remarkable that they are expressed in terms of elementary functions.  Saha's two equations bring about a compatibility condition involving $\alpha_{\rm A}, \alpha_{\rm B}$ and $T.$ It is shown that  $\alpha_{\rm B}$ is a differentiable function of $\alpha_{\rm A}$ and $T$ whose graph constitutes  the thermodynamic state space. We propose that $(T, \alpha_{\rm A})$ is a suitable pair of independent thermodynamic state variables.
\par
The system of conservation laws is shown to constitute a strictly hyperbolic system, which  implies that the initial-value problem is well-posed locally in time for sufficiently smooth initial data. Characteristic fields are computed and geometric properties are studied: unlike the polytropic (non-ionized) case, the convexity (genuine nonlinearity) of the  forward and backward  characteristic fields of the system is lost and the set where this happens is determined in a neighbourhood of $T = \alpha_{\rm A}  = 0.$ Whole set is located in a finite region in  $(T, \alpha_{\rm A})$ plane but it is difficult  to get its full picture by purely mathematical reasoning; only  pictures by numerical computation are presented.
\par
A detailed study of the thermodynamic Hugoniot locus  is performed. For every $T > 0,$ there is a unique $0 < \alpha_{\rm A} < 1$ satisfying the thermodynamic Rankine-Hugoniot condition and  $\alpha_{\rm A}$ is a smooth function of $T.$ Hence the Hugoniot locus is a smooth graph in the $(T, \alpha_{\rm A})$ plane. While the thermodynamic Hugoniot locus is monotone in $(T, \alpha)$ plane in a single monatomic case,  for the mixed monatomic case it is shown that: if $\beta$ is sufficiently small, then it loses monotonicity at some base state. Thus the degree of ionization does not always increase across the shock front, even if the temperature increases. However the pressure is actually proved to increase as the temperature increases which ensures that $T >T_+$ is the admissible branch.
\par
In order to fit the mathematical data to ordinary circumstances, an approximation of thermodynamic Hugoniot loci is proposed and proved that, for small $T,$ it is limited in a  \lq\lq classical\rq\rq\ region where the forward and backward characteristic fields are convex (genuinely nonlinear) and the physical entropy increases along the admissible branch.  We expect that actual experiments  are usually performed in such a classical region.
\par
These results are applied to the mathematical analysis of the shock tube problem:  existence and uniqueness of the solution are established, which provides  a rigorous mathematical basis to the physical phenomena observed and reported in \cite{Fukuda-Okasaka-Fujimoto}, \cite{Fukuda_2}, \cite{Fukuda_3}.

%%%%%%%%%%%%%%%%%%%%%%%%%%%%%%%%%%%%%%%%%%%%%%%%%%%%%%%%%%%%
\appendix
\section{Computation of \eqref{eq:dG/dT final}}\label{sec:computation}
%%%%%%%%%%%%%%%%%%%%%%%%%%%%%%%%%%%%%%%%%%%%%%%%%%%%%%%%%
Substitution of \eqref{eq:dH/dT 2}, \eqref{eq:dH/dalpha 2},  \eqref{eq:dG/d alpha}, \eqref{eq:TdG/dT_1} into  $q_{\rm A}\left[\left(\frac{\partial G}{\partial T}\right)_{\! \alpha_{\rm A}} \left(\frac{\partial H}{\partial \alpha_{\rm A}}\right)_{\! T}
-   \left(\frac{\partial G}{\partial \alpha_{\rm A}}\right)_{\! T}\left(\frac{\partial H}{\partial T}\right)_{\! \alpha_{\rm A}}\right]$ gives
%\begin{align*}
\begin{multline*}%align*}
  \textstyle
  \left|
    \begin{array}{cc}
- \left(3 + \frac{2T_{\rm A}}{T} -  \frac{2Q_{\rm BA}}{\alpha }\right) 
\!&\!    - \left[5 + \frac{2T_{\rm A}}{T}  - \frac{2Q_{\rm BA}}{\alpha(1 + \alpha)}\right]
\\
 (1 + \alpha)\left(1 + \frac{q}{\alpha}\right) 
\!&\!
 (1 + \alpha)\left[1 + \frac{q}{\alpha(1 + \alpha)}\right]
  \end{array}
\right|\\
%\end{multline*}
%\begin{multline*}
 +  \textstyle  \frac{p}{p_+}\left|
    \begin{array}{cc}
      \frac{5}{2} + \frac{T_{\rm A}}{T} -  \frac{Q_{\rm BA}}{\alpha(1 + \alpha) }
\!\!&\!\! 3+  \frac{3 Q_{\rm BA}}{1 + \alpha} +  \frac{2T_{\rm B}Q_{\rm BA}}{T(1 + \alpha)}
 \\
     - (1 + \alpha)\left[1 + \frac{q}{\alpha(1 + \alpha)}\right]
\!\!&\!\! (3q  + 2Q_T)
  \end{array}
\right|\\
  \textstyle - \frac{v}{v_+} \left|
    \begin{array}{cc}
\frac{3}{2} + \frac{T_{\rm A}}{T} -  \frac{Q_{\rm BA}}{\alpha } 
& 5 + \frac{5Q_{\rm BA}}{1 + \alpha}  + \frac{2T_{\rm B} Q_{\rm BA}}{T(1 + \alpha)}\\
 -(1 + \alpha)\left(1 + \frac{q}{\alpha}\right)
&  (5q  + 2Q_T)
  \end{array}
    \right| \\
-\frac{T(1 + \alpha)}{T_+(1 + \alpha^+)}\left|
    \begin{array}{cc}
 1 +  \frac{Q_{\rm BA}}{1 + \alpha} 
&  4\left[1 + \frac{ Q_{\rm BA}}{1 + \alpha}\left(1 + \frac{T_{\rm B}}{2T}\right)\right]\\
  q
& 4\left(q  + \frac{Q_T}{2}\right)
  \end{array}
    \right|\\
  \textstyle =  - (1 + \alpha)\left[1 + \frac{q}{\alpha(1 + \alpha)}\right]
\left(3 + \frac{2T_{\rm A}}{T} -  \frac{2Q_{\rm BA}}{\alpha }\right)
+  (1 + \alpha)\left(1 + \frac{q}{\alpha}\right)\left[5 + \frac{2T_{\rm A}}{T}  - \frac{2Q_{\rm BA}}{\alpha(1 + \alpha)}\right]\\
  \textstyle +2 \left(\frac{p}{p_+}\right)
      \left(\frac{3}{2}q  + Q_T\right)\left[ \frac{5}{2} + \frac{T_{\rm A}}{T} -  \frac{Q_{\rm BA}}{\alpha(1 + \alpha) }\right]\\
  \textstyle + (1 + \alpha)\left(\frac{p}{p_+}\right)\left[1 + \frac{q}{\alpha(1 + \alpha)}\right]
\left[3+  2\left(\frac{3}{2} +  \frac{T_{\rm B}}{T}\right)\frac{Q_{\rm BA}}{1 + \alpha}\right]\\
- 2\left(\frac{v}{v_+}\right)
\left(\frac{5}{2}q  + Q_T\right)\left[\frac{3}{2} + \frac{T_{\rm A}}{T} -  \frac{Q_{\rm BA}}{\alpha } \right]\\
  \textstyle - (1 + \alpha)\left(\frac{v}{v_+}\right)\left(1 + \frac{q}{\alpha}\right)
\left[ 5 + 2\left(\frac{5}{2}  + \frac{T_{\rm B} }{T}\right)\frac{Q_{\rm BA}}{1 + \alpha}\right]\\
-\frac{T(1 + \alpha)}{T_+(1 + \alpha^+)}
\left[
2Q_T\left( 1 +  \frac{Q_{\rm BA}}{1 + \alpha} \right)- \frac{2qT_{\rm B}Q_{\rm BA}}{T(1 + \alpha)}
\right].
\end{multline*}%align*}
\par
Notice that
$$%\begin{align*}
  qQ_T -  \frac{qT_{\rm A}}{T} = \frac{\beta q_{\rm A}T_{\rm A} +  (1 - \beta)q_{\rm B}T_{\rm B}}{T} - \frac{T_{\rm A}}{T}%\\
%  & = \frac{\beta q_{\rm A}T_{\rm A} +  (1 - \beta)q_{\rm B}T_{\rm B} - T_{\rm A}[\beta  q_{\rm A} + (1 - \beta) q_{\rm B}]}{qT}\\
  = \frac{(1 - \beta)q_{\rm B}(T_{\rm B} - T_{\rm A})}{T} = Q_{\rm BA}.
$$%\end{align*}
Hence we have
%\begin{equation}\label{eq:Q_T Q_AB}
  $ qQ_T = \frac{qT_{\rm A}}{T} + Q_{\rm BA}.$
%\end{equation}
By this formula, the above expression is found to be
\begin{multline*}
 \textstyle  3 (1 + \alpha)\!\left[1 + \frac{q}{\alpha(1 + \alpha)}\right]\!
\left(\frac{p}{p_+} - 1\right)
+ 5 (1 + \alpha)\left(1 + \frac{q}{\alpha}\right)\left(1 - \frac{v}{v_+}\right)
 \\
%- 2(1 + \alpha)\left[1 + \frac{q}{\alpha(1 + \alpha)}\right]
%\left(\frac{T_{\rm A}}{T} -  \frac{Q_{\rm BA}}{\alpha }\right)
%+  2(1 + \alpha)\left(1 + \frac{q}{\alpha}\right)\left(\frac{T_{\rm A}}{T}  - \frac{Q_{\rm BA}}{\alpha(1 + \alpha)}\right)\\
 \textstyle     + 2\left[
 q \left(\frac{3}{2}  +  \frac{T_{\rm A}}{T}\right)\!\left(\frac{5}{2} + \frac{T_{\rm A}}{T}\right)
 +  q\left(\!\frac{T_{\rm B} - T_{\rm A}}{T}\!\right)\frac{Q_{\rm BA}}{\alpha }%\right.\\  
    + \left(4 + \frac{T_{\rm A} + T_{\rm B}}{T} -  \frac{Q_{\rm BA}}{\alpha }\right)\! Q_{\rm BA} \right]\!\left(\frac{p}{p_+} - \frac{v}{v_+} \right)\\
  \textstyle  - 2 q\left(\frac{p}{p_+}\right)\!\left(\!\frac{T_{\rm B} - T_{\rm A}}{T}\!\right)\frac{Q_{\rm BA}}{1 + \alpha } + 2\left(\frac{p}{p_+}\right) \frac{Q_{\rm BA}^2}{1 + \alpha}\\
   \textstyle  -2q \left(\frac{T_{\rm A}}{T} + \frac{Q_{\rm BA}}{q}\right)
\left[\frac{T(1 + \alpha)}{T_+(1 + \alpha^+)}
\left( 1 +  \frac{Q_{\rm BA}}{1 + \alpha} \right) -1\right] + \left[\frac{T(1 + \alpha)}{T_+(1 + \alpha^+)}\right]\frac{2qT_{\rm B}Q_{\rm BA}}{T(1 + \alpha)}
\end{multline*}
\vspace{-5ex}
\begin{multline*}
   \textstyle = 3 (1 + \alpha)\!\left[1 + \frac{q}{\alpha(1 + \alpha)}\right]\!
\left(\frac{p}{p_+} - 1\right)
+ 5 (1 + \alpha)\left(1 + \frac{q}{\alpha}\right)\!\left(1 - \frac{v}{v_+}\right)
 \\
%- 2(1 + \alpha)\left[1 + \frac{q}{\alpha(1 + \alpha)}\right]
%\left(\frac{T_{\rm A}}{T} -  \frac{Q_{\rm BA}}{\alpha }\right)
%+  2(1 + \alpha)\left(1 + \frac{q}{\alpha}\right)\left(\frac{T_{\rm A}}{T}  - \frac{Q_{\rm BA}}{\alpha(1 + \alpha)}\right)\\
  \textstyle     + 2\left[
 q \left(\frac{3}{2} \! +\!  \frac{T_{\rm A}}{T}\right)\!\left(\frac{5}{2}\! +\! \frac{T_{\rm A}}{T}\right)
 \!+\!  q\left(\!\frac{T_{\rm B}\! - \!T_{\rm A}}{T}\!\right)\frac{Q_{\rm BA}}{\alpha(1 \!+\! \alpha) }%\right.\\  
   \! +\! \left(4\! +\! \frac{T_{\rm A}\! +\! T_{\rm B}}{T}\! - \! \frac{Q_{\rm BA}}{\alpha(1\! +\! \alpha) }\right)\!Q_{\rm BA} \right]\!\left(\frac{p}{p_+}\! -\! \frac{v}{v_+} \right)\\
   \textstyle  - 2 q\left(\frac{v}{v_+}\right)\left(\frac{T_{\rm B} - T_{\rm A}}{T}\right)\frac{Q_{\rm BA}}{1 + \alpha } + 2\left(\frac{v}{v_+}\right) \frac{Q_{\rm BA}^2}{1 + \alpha}\\
   \textstyle  -2q \left(\frac{T_{\rm A}}{T} + \frac{Q_{\rm BA}}{q}\right)
\left[\frac{T(1 + \alpha)}{T_+(1 + \alpha^+)} - 1\right]
- \frac{2T(1 + \alpha)}{T_+(1 + \alpha^+)}\frac{Q_{\rm BA}^2}{1 + \alpha} + \left[\frac{T(1 + \alpha)}{T_+(1 + \alpha^+)}\right]\frac{2q\left(T_{\rm B} - T_{\rm A}\right)Q_{\rm BA}}{T(1 + \alpha)}.
\end{multline*}
\par
Note that
$
   \frac{pv}{p_+v_+} = \frac{T(1 + \alpha)}{T_+(1 + \alpha^+)}.
$
Then
\begin{align*}
  \textstyle  2q\left[\frac{T(1 + \alpha)}{T_+(1 + \alpha^+)}\right]\frac{T_{\rm B} - T_{\rm A}}{T}\frac{Q_{\rm BA}}{1 + \alpha}
- 2 q\left(\frac{v}{v_+}\right)\frac{T_{\rm B} - T_{\rm A}}{T}\frac{Q_{\rm BA}}{1 + \alpha}
&  \textstyle = 2 q\left(\frac{v}{v_+}\right)\frac{\left(T_{\rm B} - T_{\rm A}\right)Q_{\rm BA}}{T(1 + \alpha) }\left(\frac{p}{p_+} - 1\right) \geq 0,\\
  \textstyle -  \frac{2T(1 + \alpha)}{T_+(1 + \alpha^+)}\frac{Q_{\rm BA}^2}{1 + \alpha}
+ 2\left(\frac{v}{v_+}\right) \frac{Q_{\rm BA}^2}{1 + \alpha}
& \textstyle = - 2\left(\frac{v}{v_+}\right)\left(\frac{p}{p_+} - 1 \right) \frac{Q_{\rm BA}^2}{1 + \alpha},
\end{align*}
and we conclude that the above expression is equal to
%\begin{multline*}
% 3 (1 + \alpha)\left[1 + \frac{q}{\alpha(1 + \alpha)}\right]
%\left(\frac{p}{p_+} - 1\right)
%+ 5 (1 + \alpha)\left(1 + \frac{q}{\alpha}\right)\left(1 - \frac{v}{v_+}\right)
% \\
%%- 2(1 + \alpha)\left[1 + \frac{q}{\alpha(1 + \alpha)}\right]
%%\left(\frac{T_{\rm A}}{T} -  \frac{Q_{\rm BA}}{\alpha }\right)
%%+  2(1 + \alpha)\left(1 + \frac{q}{\alpha}\right)\left(\frac{T_{\rm A}}{T}  - \frac{Q_{\rm BA}}{\alpha(1 + \alpha)}\right)\\
%     + 2\left[
% q \left(\frac{3}{2}  +  \frac{T_{\rm A}}{T}\right)\left(\frac{5}{2} + \frac{T_{\rm A}}{T}\right)
% +  q\left(\frac{T_{\rm B} - T_{\rm A}}{T}\right)\frac{Q_{\rm BA}}{\alpha(1 + \alpha) }\right.\\  
%   \left.  + \left(4 + \frac{T_{\rm A} + T_{\rm B}}{T} -  \frac{Q_{\rm BA}}{\alpha(1 + \alpha) }\right)Q_{\rm BA} \right]\left(\frac{p}{p_+} - \frac{v}{v_+} \right)\\
%-2q \left(\frac{T_{\rm A}}{T} + \frac{Q_{\rm BA}}{q}\right)
%\left[\frac{T(1 + \alpha)}{T_+(1 + \alpha^+)}  - 1\right] \\
%2 q\left(\frac{v}{v_+}\right)\frac{\left(T_{\rm B} - T_{\rm A}\right)Q_{\rm BA}}{T(1 + \alpha) }\left(\frac{p}{p_+} - 1\right)
%- 2\left(\frac{v}{v_+}\right)\left(\frac{p}{p_+} - 1 \right) \frac{Q_{\rm BA}^2}{1 + \alpha}
%\end{multline*}
\begin{multline*}
 \textstyle  3 (1 + \alpha)\!\left[1 + \frac{q}{\alpha(1 + \alpha)}\right]\!
\left(\frac{p}{p_+} - 1\right)
+ 5 (1 + \alpha)\left(1 + \frac{q}{\alpha}\right)\left(1 - \frac{v}{v_+}\right)
 \\
%- 2(1 + \alpha)\left[1 + \frac{q}{\alpha(1 + \alpha)}\right]
%\left(\frac{T_{\rm A}}{T} -  \frac{Q_{\rm BA}}{\alpha }\right)
%+  2(1 + \alpha)\left(1 + \frac{q}{\alpha}\right)\left(\frac{T_{\rm A}}{T}  - \frac{Q_{\rm BA}}{\alpha(1 + \alpha)}\right)\\
  \textstyle    + 2\left[
 q \left(\frac{3}{2}  \!+\!  \frac{T_{\rm A}}{T}\right)\!\left(\frac{5}{2}\! +\! \frac{T_{\rm A}}{T}\right)
\! +\!  q\left(\!\frac{T_{\rm B} \!-\! T_{\rm A}}{T}\!\right)\!\frac{Q_{\rm BA}}{\alpha(1 + \alpha) }  
 \! +\! \left(4\! +\! \frac{T_{\rm A} + T_{\rm B}}{T}\! - \! \frac{Q_{\rm BA}}{\alpha(1 + \alpha) }\right)\!Q_{\rm BA} \right]\!\left(\frac{p}{p_+} \!-\! \frac{v}{v_+} \right)
  \\
   \textstyle -2q \left(\frac{T_{\rm A}}{T} + \frac{Q_{\rm BA}}{q}\right)
  \left[\frac{T(1 + \alpha)}{T_+(1 + \alpha^+)}  - 1\right] 
+ 2 \left(\frac{v}{v_+}\right)\left[\frac{q\left(T_{\rm B} - T_{\rm A}\right)}{T} -  Q_{\rm BA}\right]\left(\frac{p}{p_+} - 1\right)
\frac{Q_{\rm BA}}{1 + \alpha}.
\end{multline*}
\par
Recall that the thermodynamic Rankine-Hugoniot condition is  equivalent to 
\begin{align*}
  \frac{p}{p_+} - \frac{v}{v_+}
& = \frac{4T(1 + \alpha)}{T_+(1 + \alpha^+)} -4  + \frac{2\left[\beta T_{\rm A}(\alpha_{\rm A}  -\alpha_{\rm A}^+) + (1 - \beta)T_{\rm B}(\alpha_{\rm B} -\alpha_{\rm B}^+)\right] }{T_+(1 + \alpha^+)}.
%\\
%  \frac{p}{p_+} + \frac{v}{v_+}
%&=   \frac{T(1 + \alpha)}{T_+(1 + \alpha^+)} + 1
%   + \frac{M(u_A - u_+)^2}{RT_+(1 + \alpha^+)}
\end{align*}
Thus $\textstyle q_{\rm A}\left[\left(\frac{\partial G}{\partial T}\right)_{\! \alpha_{\rm A}} \left(\frac{\partial H}{\partial \alpha_{\rm A}}\right)_{\! T}
-   \left(\frac{\partial G}{\partial \alpha_{\rm A}}\right)_{\! T}\left(\frac{\partial H}{\partial T}\right)_{\! \alpha_{\rm A}}\right]$ is:
\begin{multline*}
%&  \\
 \textstyle 3 (1 + \alpha)\left[1 + \frac{q}{\alpha(1 + \alpha)}\right]
\left(\frac{p}{p_+} - 1\right)
+ 5 (1 + \alpha)\left(1 + \frac{q}{\alpha}\right)\left(1 - \frac{v}{v_+}\right)
 \\
   \textstyle   + 8\left[
 q \left(\frac{3}{2}  +  \frac{T_{\rm A}}{T}\right)\left(\frac{5}{2} + \frac{T_{\rm A}}{T}\right)
 +  q\left(\frac{T_{\rm B} - T_{\rm A}}{T}\right)\frac{Q_{\rm BA}}{\alpha(1 + \alpha) }\right.\\  
 \textstyle  \left.  + \left(4 + \frac{T_{\rm A} + T_{\rm B}}{T} -  \frac{Q_{\rm BA}}{\alpha(1 + \alpha) }\right)Q_{\rm BA} 
- \frac{1}{4}\left(\frac{qT_{\rm A}}{T} + Q_{\rm BA}\right) \right]\left[\frac{T(1 + \alpha)}{T_+(1 + \alpha^+)}  - 1\right]
\\
  \textstyle +  6 (1 + \alpha)\left[1 + \frac{q}{\alpha(1 + \alpha)}\right]
\left(\frac{p}{p_+} - 1\right)
+ 5 (1 + \alpha)\left(1 + \frac{q}{\alpha}\right)\left(1 - \frac{v}{v_+}\right)
 \\
   \textstyle   + 2\left[
 q \left(\frac{3}{2}  +  \frac{T_{\rm A}}{T}\right)\left(\frac{5}{2} + \frac{T_{\rm A}}{T}\right)
 +  q\left(\frac{T_{\rm B} - T_{\rm A}}{T}\right)\frac{Q_{\rm BA}}{\alpha(1 + \alpha) }\right.\\  
  \textstyle  \left.  + \left(4 + \frac{T_{\rm A} + T_{\rm B}}{T} -  \frac{Q_{\rm BA}}{\alpha(1 + \alpha) }\right)Q_{\rm BA} \right]\frac{\left[\beta T_{\rm A}(\alpha_{\rm A}  -\alpha_{\rm A}^+) + (1 - \beta)T_{\rm B}(\alpha_{\rm B} -\alpha_{\rm B}^+)\right] }{T_+(1 + \alpha^+)}\\
 \textstyle + 2 \left(\frac{v}{v_+}\right)\left[\frac{q\left(T_{\rm B} - T_{\rm A}\right)}{T} -  Q_{\rm BA}\right]\left(\frac{p}{p_+} - 1\right)
\frac{Q_{\rm BA}}{1 + \alpha}
\end{multline*}
which implies \eqref{eq:dG/dT final}.
%%%%%%%%%%%%%%%%%%%%%%%%%%%%%%%%%%%%%%%%%%%%%%%%%%%%%%%%%%%%
\section{Isentropes}\label{sec:isentrope}
%%%%%%%%%%%%%%%%%%%%%%%%%%%%%%%%%%%%%%%%%%%%%%%%%%%%%%%%%%%%y 
In $(p,u,T)$ coordinates, we have observed by Remark \ref{nb:isentropes} that:
the thermodynamic part of an integral curve is $\eta =$ const. for $1,2$-characteristic directions and $p =$ const. for $0$-characeristic field.
A curve $\eta =$ const.  in $(T, \alpha_{\rm A})$ plane is called an {\it isentrope\/}.
%
%an integral curve of a characteristic vector field $R$ is a solution to the system of equations 
%We write system \eqref{eq:system} as $U_t + F(U)_x = 0$ or, for smooth solutions,
%$$
%U_t + A(U)U_x = 0,
%$$
%where $A(U) = F'(U)$. Let $V$ be a set of new unknowns related to $U$ by $U = U(V).$ We have
%$$
%   V_t + B(V)V_x = 0,
%$$
%where $B(V) = (U')^{-1}A(U)U'$ and $U'=U'(V)$ denotes the Jacobian matrix of $U$. Clearly, if $R(U)$ is a characteristic vector field of $A(U),$ then $\left(U'(V)\right)^{-1}R\left(U(V)\right)$ is a characteristic field of $B(V)$.
%Some properties of system \eqref{eq:system} were already established in Lemma \ref{l:eigen} as far as the coordinates $(p,u,T)$ were concerned. In the following, the thermodynamic variables $\alpha, T$ are very useful.
%
%Recalling \eqref{eq:p-alphaT}, we consider the transformation of variables
%$
%\dfrac{d}{ds}\left[\begin{array}{c}
%            p\\
%            u \\
%            T
%      \end{array}\right]
%= R
%\left[\begin{array}{c}
%             \dfrac{1 - \alpha^2}{\kappa\alpha^2}T^{\frac{5}{2}}e^{-\frac{\Ti}{T}}\\
%            u \\
%            T
%      \end{array}\right].
%$
%where $R$ stands for $\boldsymbol{r}_{\pm}$ or $R_0.$ For $\boldsymbol{r}_{\pm},$ we have $\dfrac{d \eta}{ds} = 0$  
%$$
%\frac{d \eta}{ds} = \frac{\partial \eta}{\partial p} \frac{dp}{ds} + \frac{\partial \eta}{\partial T} \frac{dT}{ds}
%= \pm\left(\frac{\partial \eta}{\partial p}  - \frac{\partial \eta}{\partial T} \frac{\eta_p}{\eta_T}\right) = 0
%$$
%and for $R_0,$ $p =$ const. and $u =$ const. Thus
%
%Let us study the curves $\eta = \eta_0:$ constant.
\begin{theorem}\label{eq:alpha(T)}
An isentrope $\eta = \eta_0$ is the graph of a differentiable function $\alpha_{\rm A} = \alpha_{\rm A}(T)$ defined on  $T \in (0, \infty).$
\end{theorem}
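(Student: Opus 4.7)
The plan is to apply the implicit function theorem to $\eta(T,\alpha_{\rm A}) = \eta_0$, exploiting the explicit closed form of $\eta$ provided by Theorem \ref{thm:entropy}, with $\alpha_{\rm B}$ viewed as a function of $(T,\alpha_{\rm A})$ through the compatibility relation \eqref{eq:state space}. The two ingredients we need are (i) strict monotonicity of $\eta(T,\cdot)$ in $\alpha_{\rm A}$ for every fixed $T>0$, and (ii) appropriate limits at the endpoints $\alpha_{\rm A}\to 0^+$ and $\alpha_{\rm A}\to 1^-$ that force the equation $\eta(T,\alpha_{\rm A})=\eta_0$ to admit a solution for every $\eta_0\in\mathbb R$.

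For (i), I would differentiate the expression in Theorem \ref{thm:entropy} with respect to $\alpha_{\rm A}$ at fixed $T$, using Lemma \ref{lem:dalpha_B/dT}, which gives $\left(\frac{\partial \alpha_{\rm B}}{\partial\alpha_{\rm A}}\right)_T = \frac{q_{\rm B}}{q_{\rm A}}\cdot\frac{\alpha_{\rm A}(1-\alpha_{\rm A})}{\alpha_{\rm B}(1-\alpha_{\rm B})}>0$ and $\left(\frac{\partial \alpha}{\partial\alpha_{\rm A}}\right)_T = \frac{q}{q_{\rm A}}>0$. Each of the terms
\[
\frac{1}{\alpha}\Bigl(\tfrac{\partial\alpha}{\partial\alpha_{\rm A}}\Bigr)_{\!T}, \quad \beta\Bigl[\tfrac{1}{\alpha_{\rm A}}+\tfrac{2}{1-\alpha_{\rm A}}\Bigr], \quad (1-\beta)\Bigl[\tfrac{1}{\alpha_{\rm B}}+\tfrac{2}{1-\alpha_{\rm B}}\Bigr]\Bigl(\tfrac{\partial\alpha_{\rm B}}{\partial\alpha_{\rm A}}\Bigr)_{\!T},
\]
together with the positive multiples of $\bigl(\tfrac{5}{2}+\tfrac{T_{\rm A}}{T}\bigr)$ and $\bigl(\tfrac{5}{2}+\tfrac{T_{\rm B}}{T}\bigr)$, is manifestly positive on $\{0<\alpha_{\rm A}<1,\, T>0\}$, so $\bigl(\tfrac{\partial\eta}{\partial\alpha_{\rm A}}\bigr)_{\! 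T}>0$ throughout. This is exactly the positivity asserted in Remark \ref{nb:isentropes} and, by the implicit function theorem, it already delivers local smoothness of the solution $\alpha_{\rm A}=\alpha_{\rm A}(T)$ wherever it exists.

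For (ii), I would analyse the boundary behaviour at fixed $T$: from \eqref{eq:state space}, $\alpha_{\rm B}\to 0$ as $\alpha_{\rm A}\to 0^+$, while $\alpha_{\rm B}\to 1$ as $\alpha_{\rm A}\to 1^-$. In the formula of Theorem \ref{thm:entropy}, the dominant contributions come from the $\log\alpha_{\rm A}$, $\log\alpha_{\rm B}$, $\log\alpha$ terms at the lower endpoint (driving $\eta\to -\infty$) and from $-2\log(1-\alpha_{\rm A})$ and $-2\log(1-\alpha_{\rm B})$ at the upper endpoint (driving $\eta\to +\infty$); the remaining terms stay bounded. Combined with the strict monotonicity from (i) and continuity in $\alpha_{\rm A}$, the intermediate value theorem yields for each $T>0$ a unique $\alpha_{\rm A}=\alpha_{\rm A}(T)\in(0,1)$ with $\eta(T,\alpha_{\rm A}(T))=\eta_0$. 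The differentiability of $T\mapsto\alpha_{\rm A}(T)$ on all of $(0,\infty)$ then follows from the implicit function theorem applied at every such point, since $\eta$ is smooth in $(T,\alpha_{\rm A})$ and $\bigl(\tfrac{\partial\eta}{\partial\alpha_{\rm A}}\bigr)_{\! T}\neq 0$.

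The main obstacle I anticipate is not the monotonicity step (which is essentially bookkeeping once Lemma \ref{lem:dalpha_B/dT} is invoked), but rather checking carefully that no conspiracy among the many terms of Theorem \ref{thm:entropy} can cancel the logarithmic blow-ups at the endpoints $\alpha_{\rm A}\to 0^+, 1^-$; this requires tracking the compatibility-induced rate $\alpha_{\rm B}\sim \tfrac{\mu_{\rm A}}{\mu_{\rm B}}\alpha_{\rm A}e^{-(T_{\rm B}-T_{\rm A})/T}$ from \eqref{eq:compatibility small} near $0$ and the symmetric analysis near $1$, to make sure $\log\alpha$ and $\log\alpha_{\rm B}$ both diverge to $-\infty$ (respectively, the $\log(1-\alpha_{\rm B})$ term to $+\infty$) at the expected rates and no cancellations occur. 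Once this asymptotic bookkeeping is established the theorem follows.
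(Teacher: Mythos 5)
Your proposal follows essentially the same route as the paper's proof in Appendix B: establish $\bigl(\tfrac{\partial\eta}{\partial\alpha_{\rm A}}\bigr)_{T}>0$ from the explicit entropy formula and Lemma \ref{lem:dalpha_B/dT}, show $\eta\to-\infty$ as $\alpha_{\rm A}\to 0^+$ (via the rate $\alpha_{\rm B}\sim\tfrac{\mu_{\rm A}}{\mu_{\rm B}}\alpha_{\rm A}e^{-(T_{\rm B}-T_{\rm A})/T}$ from \eqref{eq:compatibility small}, which gives $\eta\sim 2\log\alpha_{\rm A}+O(1)$) and $\eta\to+\infty$ as $\alpha_{\rm A}\to 1^-$, then conclude by monotonicity and the implicit function theorem. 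The only blemish is your formula $\bigl(\tfrac{\partial\alpha_{\rm B}}{\partial\alpha_{\rm A}}\bigr)_{T}=\tfrac{q_{\rm B}}{q_{\rm A}}\cdot\tfrac{\alpha_{\rm A}(1-\alpha_{\rm A})}{\alpha_{\rm B}(1-\alpha_{\rm B})}$, which simplifies to $1$ rather than the correct value $\tfrac{q_{\rm B}}{q_{\rm A}}=\tfrac{\alpha_{\rm B}(1-\alpha_{\rm B})}{\alpha_{\rm A}(1-\alpha_{\rm A})}$; since only positivity is used, this does not affect the argument.
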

\begin{proof}
% Recall that
% \begin{align*}
%      \eta( \alpha_{\rm A},  \alpha_{\rm B}, T) &= \log \left[\beta \alpha_{\rm A} + (1 - \beta)\alpha_{\rm B}\right] - \beta \log \alpha_{\rm A} - (1 - \beta) \log \alpha_{\rm B} \\
%&- \beta \left[ 2 \log \frac{1 - \alpha_{\rm A}}{\alpha_{\rm A}}  - \frac{T_{\rm A}}{T}\right]
%      - (1 - \beta) \left[2 \log \frac{1 - \alpha_{\rm B}}{\alpha_{\rm B}}  - \frac{T_{\rm B}}{T}\right]\\
%&+ \beta \left(\frac{5}{2} + \frac{T_{\rm A}}{T}\right)\alpha_{\rm A} + (1 - \beta)\left(\frac{5}{2} + \frac{T_{\rm B}}{T}\right) \alpha_{\rm B}
% + \text{\rm const.}    \\
%  & = \log \left[\beta \alpha_{\rm A} + (1 - \beta)\alpha_{\rm B}\right] + \beta \left[\log \alpha_{\rm A} - 2 \log (1 - \alpha_{\rm A})  + \frac{T_{\rm A}}{T}\right]\\
%  &    + (1 - \beta) \left[\log \alpha_{\rm B} - 2 \log (1 - \alpha_{\rm B})  + \frac{T_{\rm B}}{T}\right]\\
% & + \beta \left(\frac{5}{2} + \frac{T_{\rm A}}{T}\right)\alpha_{\rm A} + (1 - \beta)\left(\frac{5}{2} + \frac{T_{\rm B}}{T}\right) \alpha_{\rm B}   
%    \end{align*}
%and $ \alpha_{\rm B}$ depends on $ \alpha_{\rm A}$ and $T$ so that
%$$
%    \alpha_{\rm B}
%    = \frac{\mu_{\rm A}\alpha_{\rm A}e^{-\frac{T_{\rm B} - T_{\rm A}}{T}}}{\mu_{\rm A}\alpha_{\rm A} e^{-\frac{T_{\rm B} - T_{\rm A}}{T}}+ \mu_{\rm B}(1 - \alpha_{\rm A})}.
%$$    
Derivative of $\eta$ with respect to $\alpha_{\rm A}$ takes a form
  \begin{align}
 	\left(\frac{\partial \eta}{\partial \alpha_{\rm A}}\right)_{\! T}
%	&= \frac{\beta}{\beta \alpha_{\rm A} + (1 - \beta)\alpha_{\rm B}} + \frac{1 - \beta}{\beta \alpha_{\rm A} + (1 - \beta)\alpha_{\rm B}}\left(\frac{\partial \alpha_{\rm B}}{\partial \alpha_{\rm A}}\right)_{\! T}\\
%	& + \beta\left(\frac{1}{\alpha_{\rm A}} + \frac{2}{1 - \alpha_{\rm A}}\right)+ (1 - \beta)\left(\frac{1}{\alpha_{\rm B}} + \frac{2}{1 - \alpha_{\rm B}}\right)\left(\frac{\partial \alpha_{\rm B}}{\partial \alpha_{\rm A}}\right)_{\!T} \\
%	& + \beta\left(\frac{5}{2} + \frac{T_{\rm A}}{T}\right)
%	+ (1 - \beta)\left(\frac{5}{2} + \frac{T_{\rm B}}{T}\right)\left(\frac{\partial \alpha_{\rm B}}{\partial \alpha_{\rm A}}\right)_{\! T} \\
	&=  \beta\left[\frac{1}{\beta \alpha_{\rm A} + (1 - \beta)\alpha_{\rm B}}  +  \frac{1 + \alpha_{\rm A}}{q_{\rm A}} + \left(\frac{5}{2} + \frac{T_{\rm A}}{T}\right)\right]\nonumber\\
	&\hspace{2.5ex}  +(1 - \beta)\left[\frac{1}{\beta \alpha_{\rm A} + (1 - \beta)\alpha_{\rm B}}  +  \frac{1 + \alpha_{\rm B}}{q_{\rm B}} + \left(\frac{5}{2} + \frac{T_{\rm B}}{T}\right)\right]\left(\frac{\partial \alpha_{\rm B}}{\partial \alpha_{\rm A}}\right)_{\! T}.\label{eq:deta/dalpha}
 \end{align}
By \eqref{eq:dalpha_B/dT} we find  $\left(\frac{\partial \eta}{\partial \alpha_{\rm A}}\right)_{\! T} > 0.$ We have also
%a similar expression of $\left(\frac{\partial \eta}{\partial T}\right)_{\! \alpha_{\rm A}}.$
$$%\begin{align}
 	\left(\frac{\partial \eta}{\partial T}\right)_{\! \alpha_{\rm A}}
%	&= \frac{1 - \beta}{\beta \alpha_{\rm A} + (1 - \beta)\alpha_{\rm B}}\left(\frac{\partial \alpha_{\rm B}}{\partial T}\right)_{\! \alpha_{\rm A}} - \frac{\beta T_{\rm A}}{T^2}\\
%	& + (1 - \beta)\left[\left(\frac{1}{\alpha_{\rm B}} + \frac{2}{1 - \alpha_{\rm B}}\right)\left(\frac{\partial \alpha_{\rm B}}{\partial T}\right)_{\! \alpha_{\rm A}} - \frac{T_{\rm B}}{T^2}\right]\\
%	&- \frac{\beta T_{\rm A}}{T^2}\alpha_{\rm A} - \frac{(1 - \beta)T_{\rm B}}{T^2}\alpha_{\rm B}
%	+ (1 - \beta)\left(\frac{5}{2} + \frac{T_{\rm B}}{T}\right)\left(\frac{\partial \alpha_{\rm B}}{\partial T}\right)_{\! \alpha_{\rm A}} \\
%	&=  - \frac{T_{\rm A}}{T^2}\left[1 + \beta \alpha_{\rm A}  + (1 - \beta) \alpha_{\rm B}\right] \nonumber\\
%	&\hspace{2.5ex} + (1 - \beta)\left[\frac{1}{\beta \alpha_{\rm A} + (1 - \beta)\alpha_{\rm B}} + \left(\frac{5}{2} + \frac{T_{\rm B}}{T}\right) \right] \frac{(T_{\rm B} - T_{\rm A})q_{\rm B}}{T^2} \nonumber \\%\label{eq:deta/dT}
	=  - \frac{T_{\rm A}}{T^2}(1 + \alpha) + (1 - \beta)\left[\frac{1}{\alpha} + \left(\frac{5}{2} + \frac{T_{\rm B}}{T}\right) \right] \frac{(T_{\rm B} - T_{\rm A})q_{\rm B}}{T^2} \label{eq:deta/dT}.
$$%\end{align}

\par
First we shall prove that $\alpha_{\rm A}$ is a function of $T$ defined on $(0, \infty).$ Let us fix any $T$ in $(0, \infty).$ It follows from \eqref{eq:compatibility small}
$
  \alpha_{\rm B} \sim \frac{\mu_{\rm A}}{\mu_{\rm B}}\alpha_{\rm A}e^{-\frac{T_{\rm B} - T_{\rm A}}{T}} \quad \text{as} \quad \alpha_{\rm A} \to 0.
$ 
Hence
$$
   \log \alpha_{\rm B} \sim \log \alpha_{\rm A} + O(1), \quad
   \log \left[\beta \alpha_{\rm A} + (1 - \beta)\alpha_{\rm B}\right] 
   \sim \log \alpha_{\rm A} + O(1).
$$
Thus we find
$
  \eta( \alpha_{\rm A},  \alpha_{\rm B}, T) \sim 2 \log \alpha_{\rm A} + O(1) \to -\infty.
  $
  for  $\alpha_{\rm A} \to 0.$
\par
On the othe hand, when $\alpha_{\rm A} \to 1,$ obviously $\alpha_{\rm B} \to 1$ and
$$
 \eta( \alpha_{\rm A},  \alpha_{\rm B}, T) 
\sim - 2\beta  \log (1 - \alpha_{\rm A})  - 2(1 - \beta)  \log (1 - \alpha_{\rm B}) 
+ \beta \left(\frac{5}{2} + \frac{T_{\rm A}}{T}\right) + (1 - \beta)\left(\frac{5}{2} + \frac{T_{\rm B}}{T}\right) \to \infty.  
$$
Consequently,  since $\left(\frac{\partial \eta}{\partial \alpha_{\rm A}}\right)_{\! T} > 0,$  we conclude that there is a unique sigle root $ \alpha_{\rm A}(T)$ such that
$$
\eta( \alpha_{\rm A}(T),  \alpha_{\rm B}(T), T) = \eta_0\quad \text{and} \quad  0 < \alpha_{\rm A}(T) < 1.
$$
\end{proof}
\par
%\paragraph{Asymptotics $T\to 0:$} 
Next we consider the behaviour as $T \to 0.$
\begin{proposition}\label{prop:isentrope T to 0} 
If $T \to 0,$ then $\alpha_{\rm A} \to 0 $  and
 % $
 % 		\alpha_{\rm A}  \sim {\textstyle \sqrt{\frac{e^{\eta_0}}{\beta\left( \frac{\mu_{\rm A}}{\mu_{\rm B}}\right)^{1-\beta}}}}e^{-\frac{T_{\rm A}}{2T}}.
%  $
  $
  	\alpha_{\rm A}  \sim  \frac{1}{\sqrt{\beta}}\left( \frac{\mu_{\rm A}}{\mu_{\rm B}}\right)^{-\frac{1}{2}(1-\beta)}e^{-\frac{T_{\rm A}}{2T} + \frac{\eta_0}{2}}.
  $
\end{proposition}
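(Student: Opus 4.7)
The proof will combine the explicit entropy formula of Theorem~\ref{thm:entropy} with the compatibility expansion~\eqref{eq:compatibility small}. First I would show that $\alpha_{\rm A}\to 0$ along the isentrope: if $\alpha_{\rm A}$ remained $\geq\alpha_0>0$ along some sequence $T_n\to 0$, then from the exact relation~\eqref{eq:state space} one has $\alpha_{\rm B}=O\bigl(e^{-(T_{\rm B}-T_{\rm A})/T}\bigr)$, hence $\log\alpha_{\rm B}\sim -(T_{\rm B}-T_{\rm A})/T$. Grouping in the entropy formula,
\[
(1-\beta)\log\alpha_{\rm B}+(1-\beta)\tfrac{T_{\rm B}}{T}=(1-\beta)\tfrac{T_{\rm A}}{T}+O(1),
\]
and adding $\beta T_{\rm A}/T$ from the other bracket would give $\eta\sim T_{\rm A}/T\to\infty$, contradicting $\eta=\eta_0$. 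An analogous argument using $\log(1-\alpha_{\rm A})\to-\infty$ excludes $\alpha_{\rm A}\to 1$.

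Once $\alpha_{\rm A}\to 0$ is established, Lemma~\ref{lem:compatibility} gives
\[
\alpha_{\rm B}\sim \tfrac{\mu_{\rm A}}{\mu_{\rm B}}\,\alpha_{\rm A}\,e^{-(T_{\rm B}-T_{\rm A})/T},
\qquad \alpha=\beta\alpha_{\rm A}+(1-\beta)\alpha_{\rm B}\sim\beta\alpha_{\rm A},
\]
so in particular $\log\alpha_{\rm B}=\log\alpha_{\rm A}+\log(\mu_{\rm A}/\mu_{\rm B})-(T_{\rm B}-T_{\rm A})/T+o(1)$. I would then substitute into the expression for $\eta$ of Theorem~\ref{thm:entropy}, noting that the bounded terms $-2\beta\log(1-\alpha_{\rm A})$ and $-2(1-\beta)\log(1-\alpha_{\rm B})$ tend to $0$, and that $\beta(\tfrac52+T_{\rm A}/T)\alpha_{\rm A}$ and $(1-\beta)(\tfrac52+T_{\rm B}/T)\alpha_{\rm B}$ also tend to $0$ because $\alpha_{\rm A}\sim e^{-T_{\rm A}/(2T)}$ and $T_{\rm A}\alpha_{\rm A}/T\to 0$, a fortiori for the $B$-terms.

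Collecting only the non-vanishing contributions,
\[
\eta_0=\log(\beta\alpha_{\rm A})+\beta\log\alpha_{\rm A}+\beta\tfrac{T_{\rm A}}{T}+(1-\beta)\log\alpha_{\rm B}+(1-\beta)\tfrac{T_{\rm B}}{T}+c_0+o(1),
\]
where $c_0$ is the integration constant of Theorem~\ref{thm:entropy}. Using the expansion of $\log\alpha_{\rm B}$ above, the $T_{\rm B}/T$ contribution is cancelled and the coefficient of $\log\alpha_{\rm A}$ adds up to $1+\beta+(1-\beta)=2$, producing
\[
\eta_0=\log\beta+2\log\alpha_{\rm A}+(1-\beta)\log(\mu_{\rm A}/\mu_{\rm B})+\tfrac{T_{\rm A}}{T}+c_0+o(1).
\]
Solving for $\log\alpha_{\rm A}$ and exponentiating gives
\[
\alpha_{\rm A}\sim \tfrac{1}{\sqrt{\beta}}\bigl(\tfrac{\mu_{\rm A}}{\mu_{\rm B}}\bigr)^{-\frac{1-\beta}{2}}e^{-\frac{T_{\rm A}}{2T}+\frac{\eta_0}{2}},
\]
after absorbing $c_0$ into the normalization of $\eta_0$, as is implicit in the statement.

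The main obstacle is a bookkeeping one: verifying that each discarded quantity is genuinely $o(1)$ under the self-consistent ansatz $\alpha_{\rm A}\sim e^{-T_{\rm A}/(2T)}$ (in particular that $(T_{\rm A}/T)\alpha_{\rm A}$ and $(T_{\rm B}/T)\alpha_{\rm B}$ vanish, and that the $(1-\beta)$-terms regrouping produces exactly the $T_{\rm A}/T$ coefficient $1$ rather than $\beta$ or $2-\beta$). Once this cancellation is confirmed, the result follows by direct algebra; no further delicate estimate is required.
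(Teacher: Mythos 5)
Your proposal is correct and follows essentially the same route as the paper: first a contradiction argument (via the explicit entropy formula and the compatibility condition) showing $\alpha_{\rm A}\to 0$ as $T\to 0$, then substitution of $\log\alpha_{\rm B}\sim\log\alpha_{\rm A}+\log(\mu_{\rm A}/\mu_{\rm B})-(T_{\rm B}-T_{\rm A})/T$ into Theorem~\ref{thm:entropy}, which makes the $T_{\rm B}/T$ terms collapse to $T_{\rm A}/T$ and the $\log\alpha_{\rm A}$ coefficient sum to $2$, yielding the stated asymptotics. Your closing remark about verifying that the discarded terms (in particular $(T_{\rm A}/T)\alpha_{\rm A}$) are $o(1)$ under the self-consistent ansatz is the right point to check, and is handled at the same level of detail as in the paper.
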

\begin{proof}
%   Since $\log (1 - \alpha_{\rm A}),\, \log (1 - \alpha_{\rm B}) < 0,$ we have as $T to 0$
% $$
%  \log \left[\beta \alpha_{\rm A} + (1 - \beta)\alpha_{\rm B}\right] + \beta \log \alpha_{\rm A} +  (1 - \beta) \log \alpha_{\rm B}  \to -\infty.
% $$
% If not, then $\eta \to \infty,$ which is contradictory. Thus $\alpha_{\rm A} \to 0$ or $\alpha_{\rm B} \to 0.$ Suppose that $\alpha_{\rm B} \geq c_0  > 0.$ By compatibility condition, we have
 %$$
 %	      \frac{1 - \alpha_{\rm A}}{\alpha_{\rm A}}
 %   = \frac{\mu_{\rm A}(1 - \alpha_{\rm B})}{\mu_{\rm B}\alpha_{\rm B}}e^{-\frac{1}{T}(T_{\rm B} - T_{\rm A})}.
 %$$
 %Since $T_{\rm B} > T_{\rm A},$  we find that $\alpha_{\rm A} \to 1$ as $T \to 0,$ which is contradictory. Thus we first conclude that
 %$$
 %	\alpha_{\rm B} \to 0.
 %$$
% \par
 %Next suppose that $\alpha_{\rm A} \geq c_0  > 0.$ 
If $\alpha_{\rm A} \geq c > 0$ or $\alpha_{\rm B} \geq c > 0,$ then $\eta \to \infty$ which is contradiction. Hence $\alpha_{\rm A}, \alpha_{\rm B} < c.$ By compatibility condition, we have
 $$
    \frac{1 - \alpha_{\rm B}}{\alpha_{\rm B}}
    =  \frac{\mu_{\rm B}(1 - \alpha_{\rm A})}{\mu_{\rm A}\alpha_{\rm A}}e^{\frac{1}{T}(T_{\rm B} - T_{\rm A})}.
 $$
If $\alpha_{\rm A} \leq c < 1,$ then by setting $T \to 0$ in the above equation, we have $\alpha_{\rm B} \to 0.$  
Suppose that $ \alpha_{\rm A} \geq c'\ (c > c' >0).$ Since 
 $
 	\log \alpha_{\rm B} = \log (1 - \alpha_{\rm B}) -  \log \frac{1 - \alpha_{\rm A}}{\alpha_{\rm A}} - \frac{1}{T}(T_{\rm B} - T_{\rm A}) + O(1),
 $
we have by substituting the above into the expression of the entropy
 \begin{align*}
 \eta  &= \log \left[\beta \alpha_{\rm A} + (1 - \beta)\alpha_{\rm B}\right] + \log \alpha_{\rm A} - (1 + \beta) \log (1 - \alpha_{\rm A})  
      - (1 - \beta)\log (1 - \alpha_{\rm B})\\
  & + \frac{T_{\rm A}}{T} + \beta \left(\frac{5}{2} + \frac{T_{\rm A}}{T}\right)\alpha_{\rm A} + (1 - \beta)\left(\frac{5}{2} + \frac{T_{\rm B}}{T}\right) \alpha_{\rm B}
 + O(1).
% &= \log \left[\beta \alpha_{\rm A} + (1 - \beta)\alpha_{\rm B}\right]   - \beta \log (1 - \alpha_{\rm A})  - (1 - \beta)\log (1 - \alpha_{\rm B}) \\ 
%   &   - \log \frac{1 - \alpha_{\rm A}}{\alpha_{\rm A}} + \frac{T_{\rm A}}{T}
%  + \beta \left(\frac{5}{2} + \frac{T_{\rm A}}{T}\right)\alpha_{\rm A} + (1 - \beta)\left(\frac{5}{2} + \frac{T_{\rm B}}{T}\right) \alpha_{\rm B} + O(1) 
    \end{align*}
   Clearly $\log (1 - \alpha_{\rm A})  < 0,$ $\log (1 - \alpha_{\rm B})  < 0,$ and
   $\log \left[\beta \alpha_{\rm A} + (1 - \beta)\alpha_{\rm B}\right] \geq -c$
   for some $c  > 0.$ Thus we find  $\eta \to \infty$ as $T \to 0,$ which is also contradictory. Thus $\alpha_{\rm A} \to 0$ and the first part of proposition is proved.
   \par
   It follows from \eqref{eq:compatibility small} $\log \alpha_{\rm B} \sim \log \frac{\mu_{\rm A}}{\mu_{\rm B}} + \log \alpha_{\rm A} - \frac{T_{\rm B} - T_{\rm B}}{T}$ as $\alpha_{\rm A} \to 0$ and
\begin{align*}
   \eta & \sim \log \beta + \log  \alpha_{\rm A} + \beta\left(\log  \alpha_{\rm A} + \frac{T_{\rm A}}{T}\right)
   + (1 - \beta) \left(\log \frac{\mu_{\rm A}}{\mu_{\rm B}} + \log \alpha_{\rm A} - \frac{T_{\rm B} - T_{\rm A}}{T} + \frac{T_{\rm B}}{T}\right) \\
       & \sim \log \beta +  (1 - \beta)\log \frac{\mu_{\rm A}}{\mu_{\rm B}} + 2\log  \alpha_{\rm A} + \frac{T_{\rm A}}{T} \sim \eta_0.
\end{align*}
Thus $\alpha_{\rm A} \sim C e^{-\frac{T_{\rm A}}{T}}$ with a certain constant $C$ satisfying $\log C + \log \beta +  (1 - \beta)\log \frac{\mu_{\rm A}}{\mu_{\rm B}} = \eta_0$ and we obtain the asymptotic formula.
   \end{proof}

\begin{acknowledgements}
  
%If you'd like to thank anyone, place your comments here
  %and remove the percent signs.
The author has been a member of the Institute of Liberal Arts and Sciences Osaka Electro-Communication University until 30 March 2019 and acknowledges support from this institute. 
\end{acknowledgements}

% Authors must disclose all relationships or interests that 
% could have direct or potential influence or impart bias on 
% the work: 
  %
  \section*{Compliance with Ethical Standards}
  \begin{description}
  \item[Support from the institute:]
    The author has received  continued  support from   Institute of Liberal Arts and Sciences Osaka Electro-Communication University.
  \item[Conflict of interest:]
    The author declares there are no conflict of interest.
  \item[Ethical approval:]
    This article does not contain any studies with human participants or animals performed by the author.
    \end{description}
% \section*{Conflict of interest}
%
% The authors declare that they have no conflict of interest.
%%%%%%%%%%%%%%%%%%%%%%%%%%%%%%%%%%%%%%%%%%%%%%%%%%%%%%%%%%%%%%%%%%%%%%%%%%%%%%%%%%%%%%%%%%%%%

%%%%%%%%%%%%%%%%%%%%%%%%%%%%%%%%%%%%%%%%%%%%%%%%%%%%%%%%%%%%%%%%%%%%%%%%%%%%%%%%%%%%%%%%%%%%


\begin{thebibliography}{99}
\bibitem{Asakura-Corli_ionized}Asakura, F.,  Corli, A.: The system of ionized gasdynamics (Submitted and revised)  
 \bibitem{Asakura-Corli_reflected} Asakura, F., Corli, A.: The reflection at an interface of a ionized shock wave, Contin. Mech. Thermodyn., 30, 173-192 (2017)
 \bibitem{Asakura-Corli_RIMS} Asakura, F.,  Corli, A.: A mathematical model of ionized gases: thermodynamic properties, Workshop on the Boltzmann Equation, Microlocal Analysis and Related Topics, RIMS K\^{o}ky\^{u}roku Bessatsu, B67,  173-192, Res. Inst. Math. Sci., Kyoto Univ. (2017)
 \bibitem{Bethe}  Bethe, H. A.:  On the theory of shock waves for an arbitrary equation of state, Report  No. 545 for the Office of Scientific Research and Development, Serial No. NDRC-B-237 (1942)
 \bibitem{Bradt}  Bradt, H.:  Astrophysic Processes, Cambridge University Press, Cambridge (2008)
\bibitem{Dafermos} Dafermos, C. N.:  Hyperbolic conservation laws in continuum physics, 3rd edition, Springer-Verlag, Berlin (2010)
  \bibitem{Fermi} Fermi, E.: Thermodynamics, Dover Publications, New York (1956)
\bibitem{Fukuda-Okasaka-Fujimoto}Fukuda, K., Okasaka, R., Fujimoto, T.:  Ionization equilibrium of He-H plasma heated by a shock wave
  (in Japanese). Kaku-Y\^ug\^o Kenky\^u (Studies on Nuclear Fusion), 19(3), 199-213 (1967)
\bibitem{Fukuda_2}Fukuda, K., Sugiyama, I.:  Spectroscopy of shock heated plasma --Measurement of atomic parameters and thermodynamic equilibrium--  (in Japanese), Bunk\^o Kenky\^u (Studies on Spectroscopy), 20(3), 121-135 (1971)
\bibitem{Fukuda_3}Fukuda, K., Sugiyama, I., Okasaka, R.:  Stepwise excitation and local thermodynamic equilibrium on excited Hydrogen levels in Helium-Hydrogen plasma behind shock Front, J. Phys. Soc.  Japan, 35(4), 1202-1211 (1973)
\bibitem{Landau-Lifshitz_SP} Landau, L. D., Lifshitz, E. M.: Fluid Mechanics, Course of Theoretical Physics Vol. 6, 2nd Edition, Translated from the third Russian edition by J. B. Sykes and W. H. Reid, Pergamon Press, Oxford (1987)
\bibitem{Lighthill_1}   Lighthill, M. J.: Dynamics of a dissociating gas, Part I Equilibrium flow, J. Fluid Mech., 2, 1-32 (1957)
\bibitem{Menikoff-Plohr} Menikoff, R.,  Plohr, B., J.: The Riemann problem for fluid flow of real materials, Rev. Modern Phys., 61(1), 75-130 (1989)
\bibitem{Saha} Saha, M. N.: Ionization in the solar chromosphere. Phil. Mag., 40(238),  472-488 (1920)
\bibitem{Smoller} Smoller, J.: Shock Waves and Reaction-Diffusion Equations,  Springer-Verlag, New York, second edition (1994)
\bibitem{Vincenti-Kruger} Vincenti. W. G.,  Kruger, Ch. H. Jr.: Introduction to Physical Gas Dynamics, Wiley, New York (1965)
\bibitem{Weyl}   Weyl, H.: Shock waves in arbitrary fluids, Comm. Pure Appl. Math., 5, 103-122 (1949)
\bibitem{Zeldovich-Raizer} Zel’dovich, Y. B., Raizer, Y. P.:  Physics of Shock Waves and High-Temperature Hydrodynamic Phenomena, Dover, New York (2002)
\end{thebibliography}
\end{document}